\DeclareMathOperator{\Ff}{Fact}
\DeclareMathOperator{\Pre}{Pref}
\DeclareMathOperator{\card}{card}
\newcommand{\vt}{\vartheta}
\newcommand{\Nn}{\mathbb N}
\newcommand{\Int}{\mathbb Z}
\newcommand{\PAL}{\mathit{PAL}}
\newcommand{\PR}{\mathit{PRIM}}
\newtheorem{thm}{Theorem}[section]
\newtheorem{prop}[thm]{Proposition}
\newtheorem{cor}[thm]{Corollary}
\newtheorem{lemma}[thm]{Lemma}
\theoremstyle{remark}
\newtheorem{example}{Example}[section]
\theoremstyle{definition}
\newtheorem{remark}[thm]{Remark}
\begin{document}
\begin{frontmatter}
\title{A generalized palindromization map in free monoids}
\author[dma]{Aldo de Luca\corref{cor1}}
\ead{aldo.deluca@unina.it}
\author[dsf]{Alessandro De Luca}
\ead{alessandro.deluca@unina.it}
\cortext[cor1]{Corresponding author}
\address[dma]{Dipartimento di Matematica e Applicazioni ``R.~Caccioppoli''\\
Universit\`a degli Studi di Napoli Federico II\\
via Cintia, Monte S.~Angelo ---
I-80126 Napoli, Italy}
\address[dsf]{Dipartimento di Scienze Fisiche\\
Universit\`a degli Studi di Napoli Federico II\\
via Cintia, Monte S.~Angelo ---
I-80126 Napoli, Italy}

\begin{abstract}
The palindromization map $\psi$ in a free monoid $A^*$ was introduced in 1997 by the first author in the case of a binary alphabet $A$, and later extended  by other authors to arbitrary alphabets. Acting on infinite words, $\psi$ generates the class of standard episturmian words, including standard Arnoux-Rauzy words. 
In this paper we generalize the palindromization map, starting with a given code $X$ over $A$. The new map $\psi_X$  maps $X^*$ to the set $PAL$ of palindromes of $A^*$. In this way some properties of $\psi$ are lost and some are saved in a weak form.
 When $X$ has a finite deciphering delay one can extend $\psi_X$ to 
$X^{\omega}$, generating a class of infinite words much wider than standard episturmian words. For a finite and maximal code $X$ over $A$, we give a suitable generalization of standard Arnoux-Rauzy words, called $X$-AR words. We prove that any $X$-AR word is a morphic image of a standard Arnoux-Rauzy word and we determine some suitable linear lower and upper bounds to its factor complexity.  
 
  For any code $X$ we say that $\psi_X$ is conservative when
$\psi_X(X^{*})\subseteq X^{*}$. We study conservative maps $\psi_X$ and conditions on $X$ assuring that $\psi_X$ is conservative. We also investigate the special case of morphic-conservative maps $\psi_{X}$, i.e., maps such that $\varphi\circ \psi = \psi_X\circ \varphi$ for an injective morphism $\varphi$. 
 Finally, we generalize $\psi_X$ by replacing palindromic closure  with $\vartheta$-palindromic closure, where $\vartheta$ is any involutory antimorphism of $A^*$. This yields an extension of the class of
$\vartheta$-standard words introduced by the authors in 2006.
 \end{abstract}

\begin{keyword}
Palindromic closure \sep Episturmian words \sep Arnoux-Rauzy words \sep Generalized palindromization map \sep Pseudopalindromes
\MSC[2010] 68R15
\end{keyword}
\end{frontmatter}

\section{Introduction}

  A simple method  of constructing all standard Sturmian words  was introduced by the first author in \cite{deluca}. It is based on an operator definable in any free monoid $A^*$ and called right palindromic closure,
which maps  each word $w\in A^*$ into the shortest palindrome of $A^*$ having $w$ as a prefix. Any given word $v\in A^*$  can suitably `direct' subsequent  iterations of  the preceding operator  according to the sequence of letters  in $v$, as follows: at each step, one concatenates the next letter of $v$ to the right of the already constructed palindrome and then takes the right palindromic closure. Thus, starting with any directive word $v$, one generates a palindrome $\psi(v)$. The map $\psi$,  called palindromization map, is injective; the word $v$ is called the directive word of $\psi(v)$. 

Since for any
$u,v\in A^*$, $\psi(uv)$ has $\psi(u)$ as a prefix, one can extend the map $\psi$ to right infinite words
$x\in A^{\omega}$ producing an infinite word $\psi(x)$. It has been proved in \cite{deluca} that   if each letter  of a binary alphabet $A$ occurs infinitely often in $x$, then one can generate all standard Sturmian words.

 The palindromization map $\psi$ has been extended to   infinite words over an arbitrary alphabet $A$ by X. Droubay, J. Justin, and G. Pirillo in  \cite{DJP}, where  the family  of  standard episturmian words over $A$
has been introduced.  In the case that each letter of $A$  occurs infinitely often in the directive word,  one obtains the class of  standard Arnoux-Rauzy words \cite{AR, Rauzy}. A standard Arnoux-Rauzy word over a binary alphabet is a  standard Sturmian word.

Some generalizations of the palindromization map have been given. In particular, in \cite{adlADL} a $\vartheta$-palindromization map, where $\vartheta$ is any involutory antimorphism of a free monoid, has been introduced. By acting with this operator
on any infinite word one obtains  a class of words larger than the class
of standard episturmian,  called  $\vartheta$-standard words; when $\vartheta$ is the reversal operator one obtains the class of standard episturmian words. Moreover, the palindromization map  has been recently extended to the case of the free group $F_2$  by C. Kassel and C. Reutenauer in \cite{KREU}.
A recent survey on palindromization map and its generalizations is in \cite{adl011}. 

In this paper we introduce  a natural generalization of the palindromization map which is considerably more powerful than the map $\psi$ since it allows to generate a class of infinite words much wider than standard episturmian words. The generalization is obtained by replacing the alphabet $A$ with a code $X$ over $A$ and then   `directing' the successive applications of the right-palindromic closure operator by a sequence of words of the code $X$. Since any non-empty element of $X^*$ can be uniquely factorized by the words of $X$, one can uniquely map any word of $X^*$ to a palindrome. In this way it is possible associate to every code $X$ over $A$ a generalized palindromization map denoted by $\psi_X$. If $X=A$ one reobtains the usual palindromization map.

  General properties of  the map $\psi_X$ are considered in Section \ref{sec:three}.  Some properties satisfied by  $\psi$ are lost and others are saved in a weak form.  In general $\psi_X$ is not injective; if $X$ is a prefix code, then  $\psi_X$ is injective. Moreover, for any code $X$,  $w\in X^*$, and $x\in X$ one has that $\psi_X(w)$ is a prefix of $\psi_X(wx)$.

In Section \ref{sec:four} the generalized palindromization map is extended to infinite words of $X^{\omega}$. In order to define a map $\psi_X: X^{\omega} \rightarrow A^{\omega}$ one needs that the code $X$ has a finite deciphering delay, i.e., any word of $X^{\omega}$ can be uniquely factorized in terms of the elements of $X$. 
For any $t\in X^{\omega}$ the word $s=\psi_X(t)$ is trivially closed under reversal, i.e., if $u$ is a factor of $s$, then so will be its reversal $u^{\sim}$.  If $X$ is a prefix code, the map $\psi_X: X^{\omega} \rightarrow A^{\omega}$ is injective. Moreover,
one can prove that if $X$ is a finite code having a finite deciphering delay, then for any $t\in X^{\omega}$ the word
$\psi_X(t)$ is uniformly recurrent. 
We show that one can generate all standard Sturmian words by  the palindromization map $\psi_X$
with  $X = A^2$. Furthermore, one can also construct the Thue-Morse word by using the generalized palindromization map relative to a suitable infinite code.

In Section \ref{sec:five} we consider the case of a map $\psi_X: X^{\omega} \rightarrow A^{\omega}$
in the hypothesis that $X$ is a maximal finite code. From a basic theorem of Sch\"utzenberger
 the code $X$ must have a deciphering delay equal to $0$, i.e.,  $X$ has to be a maximal prefix code. 
 Given $y=x_1\cdots x_i \cdots \in X^{\omega}$ with $x_i\in X$, $i\geq 1$, we say that the word $s=\psi_X(y)$  is a generalized Arnoux-Rauzy word relative to $X$, briefly $X$-AR word,
if for any word $x\in X$ there exist infinitely many integers $i$ such that  $x=x_i$.  If $X=A$ one obtains the usual definition of standard Arnoux-Rauzy word. 

Some  properties of the generalized Arnoux-Rauzy words are proved. In particular, any $X$-AR word $s$ is $\omega$-power free, i.e., any non-empty factor of $s$  has a power which is not a factor of $s$. We prove that the number $S_r(n)$ of right special factors of $s$ of length $n$ for a sufficiently large $n$ has the lower bound  given by the number of proper prefixes of $X$, i.e., $(\card(X)-1)$/$(d-1)$, where $d= \card(A)$. From this one obtains that for a sufficiently large $n$, the factor complexity $p_s(n)$ has the lower bound $(\card(X)-1)n +c$, with $c\in \Int $. Moreover,
we  prove that for all $n$, $p_s(n)$ has the linear upper bound $2\card(X)n +b$ with $b\in \Int $.
The proof of this latter result is based on a theorem which gives a suitable generalization of a formula of Justin \cite{J}. A further consequence of this theorem is that any $X$-AR word is a morphic image
of a standard Arnoux-Rauzy word on an alphabet of $\card(X)$ letters.
An interesting property showing that any $X$-AR word $s$ belongs to $ X^{\omega}$ is proved in Section \ref{sec:six}.

In Section \ref{sec:six} we consider a palindromization map $\psi_X$  satisfying the condition $\psi_X(X^*)\subseteq X^*$. We say that  $\psi_X$ is conservative. Some general properties of  conservative maps are studied and a sufficient condition on $X$ assuring that $\psi_X$ is conservative is given. 
A special case of conservative map is the following: let $\varphi: A^*\rightarrow B^*$ be an injective morphism such that $\varphi(A) = X$. The map $\psi_X$ is called morphic-conservative if for all $w\in A^*$, $\varphi(\psi(w)) = \psi_X(\varphi(w))$. We prove that if  $\psi_X$ is morphic-conservative, then
$X\subseteq PAL$, where $PAL$  is the set of palindromes, and $X$ has to be a bifix code. This implies that $\psi_X$ is injective. Moreover one has that $\psi_X$ is morphic-conservative if and only if $X\subseteq PAL$, $X$ is prefix, and $\psi_X$ is conservative.
Any morphic-conservative map $\psi_X$ can be extended to $X^{\omega}$ and the infinite words which are generated are images by an injective morphism of  epistandard words. 
An interesting generalization of conservative map  to the case of infinite words is the following: a map $\psi_X$, with $X$ a code having a finite deciphering delay,  is weakly conservative if for any $t\in X^{\omega}$, $\psi_X(t)\in X^{\omega}$. If $\psi_X$ is conservative, then it is trivially weakly conservative, whereas the converse is not in general true. We prove that if $X$ is a finite maximal code, then $\psi_X$ is weakly conservative.

In Section \ref{sec:seven} we give an extension of the generalized palindromization map $\psi_X$ by replacing the palindromic closure operator with the $\vartheta$-palindromic closure operator, where
$\vartheta$ is an arbitrary involutory antimorphism in $A^*$. In this way one can define a generalized
$\vartheta$-palindromization map $\psi_{\vartheta,X}: X^* \rightarrow PAL_{\vartheta}$, where $PAL_{\vartheta}$ is the set of fixed points of $\vartheta$ ($\vartheta$-palindromes). If $X$ is a code having a finite deciphering delay one can extend $\psi_{\vartheta,X}$ to $X^{\omega}$ obtaining
a class of infinite words larger than the $\vartheta$-standard words introduced in \cite{adlADL}.
We limit ourselves to proving a noteworthy theorem showing that  $\psi_{\vartheta}= \mu_{\vartheta}\circ \psi= \psi_{\vartheta, X}\circ \mu_{\vartheta}$ where $X= \mu_{\vartheta}(A)$ and $\mu_{\vartheta}$ is the injective morphism defined for any $a\in A$ as  $\mu_{\vartheta}(a)= a$ if $a=\vartheta(a)$ and
$\mu_{\vartheta}(a)= a\vartheta(a)$, otherwise.

\section{Notation and preliminaries}\label{sec:two}

Let $A$ be a non-empty finite set, or \emph{alphabet}. In the following, $A^*$
 will denote the \emph{free monoid} 
generated by $A$. The elements of $A$ are called \emph{letters} and those of
$A^*$ \emph{words}.  The identity element of $A^*$ is called \emph{empty word}
and it is denoted by $\varepsilon$. We shall set  $A^+= A^*\setminus \{\varepsilon\}$. A word $w\in A^+$ can be written uniquely
as a product of letters $w=a_1a_2\cdots a_n$, with $a_i\in A$,
$i=1,\ldots,n$. The integer $n$ is called the \emph{length} of $w$ and is
denoted by $|w|$. The length of $\varepsilon$ is conventionally 0.

Let $w\in A^*$. A word $v$ is a \emph{factor} of $w$ if there exist words $r$ and $s$ such that $w=rvs$. A factor $v$ of $w$ is \emph{proper} if $v\neq w$. If $r=\varepsilon$ (resp. $s=\varepsilon$),
 then $v$ is called a \emph{prefix} (resp.  \emph{suffix}) of $w$. If $v$ is a prefix (resp. suffix) of $w$, then $v^{-1}w$ (resp. $wv^{-1}$) denotes the word $u$ such that $vu =w$ (resp. $ uv=w$). If $v$ is a prefix of $w$ we shall write $v\preceq w$ and, if $v\neq w$, $v\prec w$.

A word $w$ is called \emph{primitive} if $w\neq v^n$, for all $v\in A^*$ and $n>1$. We let
$\PR$ denote the set of all primitive words of $A^*$.

 The \emph{reversal} of a word $w=a_1a_2\cdots a_n$, with $a_i\in A$, $1\leq
i\leq n$, is the word $ w^{\sim}=a_n\cdots a_1$. One sets
$\varepsilon^{\sim}=\varepsilon$. A \emph{palindrome} is a word which equals
its reversal. The set of all palindromes over $A$ will be denoted by $\PAL(A)$, or $\PAL$ when no confusion
arises. For any $X\subseteq A^*$ we 
 set $X^{\sim} = \{ x^{\sim} \mid x\in X \}$.
For any word $w\in A^*$  we let $LPS(w)$ denote the longest palindromic suffix of $w$. For  $X\subseteq A^*$, we set $LPS(X)= \{LPS(x) \mid x\in X\}$. A word $w$ is said to be \emph{rich} in palindromes, or simply rich, if it has the maximal possible number of distinct palindromic factors, namely $|w|+1$ (cf. \cite{DJP}).

A  right infinite word, or simply \emph{infinite word},  $w$ is just an infinite sequence of letters:
\[w=a_1a_2\cdots a_n\cdots \ \ ,\text{where }a_i\in A,\,\text{ for all } i\geq 1\enspace.\]
For any integer $n\geq 0$, $w_{[n]}$ will denote the prefix $a_1a_2\cdots a_n$ of $w$ of length $n$.
A factor of $w$ is either the empty word or any sequence  $a_i\cdots a_j$ with $i\leq j$. 
If $w=uvvv\cdots v\cdots =uv^{\omega}$ with $u\in A^*$ and $v\in A^+$, then $w$ is called \emph{ultimately periodic} and \emph{periodic}
if $u=\varepsilon$.

The set of all infinite words over $A$ is denoted by $A^{\omega}$. We also set $A^{\infty}= A^*\cup A^{\omega}$.
For any $w\in A^{\infty}$  we denote respectively by $\Ff w$ and
 $\Pre w$
the sets of all factors and
 prefixes
 of the word $w$. For $X\subseteq A^*$, $\Pre X$ denotes the set of all prefixes of the words of $X$.
 
 Let $w\in A^\infty$. A factor $u$ of $w$ is  \emph{right special} (resp. \emph{left special}) if there exist two letters $a,b\in A$, $a\neq b$, such that $ua$ and $ub$ (resp.  $au$ and $bu$) are factors of $w$. The factor $u$ is called \emph{bispecial} if it is right and left special. The \emph{order} of a right (resp. left) special factor $u$ of $w$ is the number of distinct letters $a\in A$ such that $ua \in \Ff w$ (resp. $au \in \Ff w)$.
 
Let $w\in  A^\infty$ and $u$ a factor of $w$.  An {\em occurrence} of  $u$   in $w$ is any $\lambda \in A^*$ such that  $\lambda u \preceq w$. If $\lambda_1$ and $\lambda_2$ are two distinct occurrences of $u$  in $w$ with $|\lambda_1| < |\lambda_2|$, the gap between the
 occurrences is $|\lambda_2|- |\lambda_1|$. For any $w\in A^*$ and letter $a\in A$,  $|w|_a$ denotes the number of occurrences of the letter $a$ in $w$. 
 
 The \emph{factor complexity} $p_w$ of a word $w\in A^{\infty}$ is the map $p_w:\Nn \rightarrow \Nn$
 counting for each $n\geq 0$ the distinct factors of $w$ of length $n$, i.e.,
 \[ p_w(n) = \card(A^n \cap \Ff w).\]
 The following recursive formula (see, for instance, \cite{adl99}) allows one to compute the factor complexity in terms of right special factors: for all $n\geq 0$
 \begin{equation}\label{eq:rsf}
 p_w(n+1) = p_w(n) + \sum_{j=0}^d(j-1)s_r(j,n),
 \end{equation}
 where $d=\card(A)$, and $s_r(j,n)$ is the number of right special factors of $w$ of length $n$ and order $j$.
 
A \emph{morphism} (resp.~\emph{antimorphism}) from $A^*$ to the free monoid
$B^*$ is any map $\varphi:A^*\to B^*$ such that $\varphi(uv)=\varphi(u)\varphi(v)$
(resp.~$\varphi(uv)=\varphi(v)\varphi(u)$) for all $u,v\in A^*$. A morphism $\varphi$ can be naturally extended to
$A^\omega$ by setting for any $w=a_1a_2\cdots a_n\cdots\in A^\omega$,
\[\varphi(w)=\varphi(a_1)\varphi(a_2)\cdots\varphi(a_n)\cdots\;.\]

A \emph{code} over $A$ is a subset $X$ of $A^+$ such that every word of $X^+$
admits a unique factorization by the elements of $X$ (cf.~\cite{codes}). A
subset of $A^+$ with the property that none of its elements is a proper prefix
(resp.~suffix) of any other is trivially a code, usually called 
\emph{prefix} (resp.~\emph{suffix}). 
We recall that if $X$ is a prefix (resp. suffix)
code, then $X^*$ is \emph{right unitary} (resp. \emph{left unitary}), i.e., for all $p\in X^*$ and $w\in
A^*$, $pw\in X^*$ (resp. $wp \in X^*$) implies $w\in X^*$. 

A \emph{bifix} code is a code which
is both prefix and suffix. A code  $X$ is  called \emph{infix} if no word of $X$ is a proper
factor of another word of $X$. A code $X$ will be  called \emph{weakly overlap-free} if no word $x\in X$ can be
factorized as $x=sp$ where $s$ and $p$ are respectively a proper non-empty suffix of a word $x'\in X$ and a proper non-empty prefix of a word $x''\in X$. Note that the code $X=\{abb, bbc\}$ is not overlap-free \cite{BdD}, but it is weakly overlap free.

A code $X$ has a \emph{finite deciphering delay} if there exists an integer $k$ such that for all $x, x' \in X$, if  $x X^kA^* \cap x' X^*\neq \emptyset$ then $x =x' $. The minimal $k$ for which the preceding condition is satisfied is called deciphering delay of $X$. A prefix code has a deciphering delay equal to 0.

Let  $X$ be a set of words over $A$. We let $X^{\omega}$ denote the set of all infinite
words
\[ x = x_1x_2 \cdots x_n \cdots , \mbox{with} \ x_i\in X, \ i\geq 1.\]
As is well known \cite{codes}, if $X$ is a code having a finite deciphering delay, then any  $x\in X^{\omega}$ can be uniquely factorized by the elements of $X$.

\subsection{The palindromization map}

 We  introduce in
$A^*$ the map $^{(+)} : A^*\rightarrow \PAL$ which
associates to any word $w\in A^*$ the palindrome $w^{(+)}$
defined as the shortest palindrome having the prefix $w$ (cf. \cite{deluca}).  We call
$w^{(+)}$ the \emph{right palindromic closure of} $w$.  If $Q=LPS(w)$ is the
longest palindromic suffix of $w= uQ$, then one has
\[
	w^{(+)}=uQu^{\sim}\,.
\]
Let us now define the map
\[
	\psi: A ^*\rightarrow \PAL ,
\]
called \emph{right iterated palindromic closure}, or simply \emph{palindromization map}, over $A^*$,  as follows: $\psi(\varepsilon)=\varepsilon $ and for all
$u\in A ^*$, $a\in A $,
\[
	\psi(ua)=(\psi(u)a)^{(+)}\,.
\]

The following proposition summarizes some simple but noteworthy  properties of  the palindromization map
 (cf., for instance, \cite{DJP, deluca}):
\begin{prop}\label{prop:basicp} The palindromization map $\psi$ over $A^*$  satisfies the following
properties: for $u,v\in A^*$
\begin{itemize}
\item[P1.]  If $u$ is  a prefix of  $v$, then $\psi(u)$ is a palindromic prefix (and suffix) of $\psi(v)$.
\item[P2.] If $p$ is a prefix of $\psi(v)$, then $p^{(+)}$ is a prefix of $\psi(v)$.
\item[P3.] Every palindromic prefix of $\psi(v)$ is of the form $\psi(u)$ for some prefix $u$ of $v$.
\item[P4.] The   palindromization map  is  injective.
\end{itemize}
\end{prop}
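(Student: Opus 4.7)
My plan is to prove the four items in the order P1, P3, P2, P4, so that later parts can exploit earlier ones together with the defining recurrence $\psi(wa)=(\psi(w)a)^{(+)}$; every argument is an induction whose base case ($u$ or $v$ equal to $\varepsilon$) is immediate.

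For P1, I would induct on $|v|-|u|$. If $v=v'a$ with $u\preceq v'$, the inductive hypothesis gives $\psi(u)\preceq\psi(v')$, and the recurrence yields $\psi(v')\preceq\psi(v')a\preceq(\psi(v')a)^{(+)}=\psi(v)$. Since $\psi(u)$ and $\psi(v)$ are palindromes, any palindromic prefix of $\psi(v)$ is automatically a palindromic suffix. For P3, I would induct on $|v|$ and write $v=v'a$. Let $q$ be a palindromic prefix of $\psi(v)=(\psi(v')a)^{(+)}$. If $|q|\leq|\psi(v')|$, then by P1 $q\preceq\psi(v')$, and induction supplies $u\preceq v'\preceq v$ with $q=\psi(u)$. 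Otherwise $|q|>|\psi(v')|$; since the letter in position $|\psi(v')|+1$ of $\psi(v)$ is $a$, this forces $\psi(v')a\preceq q$. Thus $q$ is a palindrome containing $\psi(v')a$ as prefix, while $\psi(v)=(\psi(v')a)^{(+)}$ is by definition the shortest such palindrome, so $|q|\geq|\psi(v)|$; combined with $q\preceq\psi(v)$ this yields $q=\psi(v)$.

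For P2, write $v=a_1\cdots a_n$ and fix $p\preceq\psi(v)$. By P3 the palindromic prefixes of $\psi(v)$ form a strictly increasing chain $\psi(v_{[0]})\prec\cdots\prec\psi(v_{[n]})=\psi(v)$ (strictness uses $|(\psi(v_{[i]})a_{i+1})^{(+)}|>|\psi(v_{[i]})|$). Let $j$ be minimal with $p\preceq\psi(v_{[j]})$; arguing as in P3 shows $\psi(v_{[j-1]})a_j\preceq p$. Then $p^{(+)}$ is a palindrome containing $\psi(v_{[j-1]})a_j$ as prefix, forcing $|p^{(+)}|\geq|\psi(v_{[j]})|$, while $\psi(v_{[j]})$ is a palindrome with prefix $p$, forcing $|\psi(v_{[j]})|\geq|p^{(+)}|$. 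The main subtlety is upgrading this equality of lengths to equality of words: setting $L:=|p^{(+)}|=2|p|-|LPS(p)|\leq 2|p|$, for any palindrome $P$ of length $L$ with prefix $p$ and any position $i>|p|$, the identity $P[i]=P[L+1-i]$ puts $L+1-i\leq|p|$, so $P[i]$ is forced to equal a character of $p$. Hence such a $P$ is uniquely determined by $p$, giving $p^{(+)}=\psi(v_{[j]})\preceq\psi(v)$.

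Finally, for P4, suppose $\psi(u)=\psi(v)=\pi$. By P1 and P3 the palindromic prefixes of $\pi$ form a single strictly increasing chain, of cardinality both $|u|+1$ and $|v|+1$; hence $|u|=|v|=:n$ and $\psi(u_{[i]})=\psi(v_{[i]})$ for each $0\leq i\leq n$. The $i$-th letter of $u$ is recovered as the character of $\psi(u_{[i]})$ in position $|\psi(u_{[i-1]})|+1$, because $(\psi(u_{[i-1]})a)^{(+)}$ starts with $\psi(u_{[i-1]})a$; the same identification holds for $v$, so $u$ and $v$ agree letter by letter and $u=v$.
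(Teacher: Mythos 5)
Your proof is correct. Note, however, that the paper does not prove Proposition 2.1 at all: it is stated as a summary of known facts with a pointer to the literature (Droubay--Justin--Pirillo and de Luca), so there is no in-paper argument to compare against. Your write-up is a complete, self-contained derivation from the defining recurrence $\psi(wa)=(\psi(w)a)^{(+)}$, and the logical ordering P1, P3, P2, P4 is sound: P1 is the easy induction; P3 correctly splits on whether the palindromic prefix $q$ stays inside $\psi(v')$ or swallows $\psi(v')a$, in which case minimality of the closure forces $q=\psi(v)$; P2 reduces to locating $p$ between consecutive links of the chain and then uses the (correctly proved) uniqueness of a palindrome of length at most $2|p|$ with prefix $p$ to identify $p^{(+)}$ with $\psi(v_{[j]})$; P4 recovers the directive word letter by letter from the chain of palindromic prefixes. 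The only points worth tightening are cosmetic: in P2 the case $j=0$ (i.e.\ $p=\varepsilon$) should be dispatched explicitly before referring to $v_{[j-1]}$, and it would be worth one sentence noting that the uniqueness argument also shows the \emph{shortest} palindrome with prefix $p$ is well defined, which the paper's definition of $p^{(+)}$ tacitly assumes. Neither affects correctness.
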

 For any  $w\in \psi(A^*)$ the unique word  $u$ such that  $\psi(u)=w$ is called the \emph{directive word} of $w$.
One can extend
  $\psi$  to $A^{\omega}$ as follows: let  $w\in A^{\omega}$ be an infinite word
\[ w = a_1a_2\cdots a_n\cdots, \ \  \ a_i\in A, \ i\geq 1.\]
Since by property P1 of the preceding proposition  for all $n$, $\psi(w_{[n]})$ is a prefix of  $\psi(w_{[n+1]})$,  we  can define  the infinite word $\psi(w)$ as:
\[ \psi(w) = \lim_{n\rightarrow \infty} \psi(w_{[n]}).\]
The extended map $\psi: A^{\omega}\rightarrow A^{\omega}$ is injective. The word $w$ is called the \emph{directive word} of $\psi(w)$. 

The family of infinite words $\psi(A^{\omega})$ is the class of the \emph{standard episturmian words}, or simply \emph{epistandard words}, over $A$
introduced in \cite{DJP}(see also \cite{JP}).  When each letter of $A$  occurs infinitely often in the directive word,  one has the class of the \emph{standard Arnoux-Rauzy words} \cite{AR, Rauzy}. A standard Arnoux-Rauzy word over a binary alphabet is usually called \emph{standard Sturmian word}. $Epistand_A$ will denote the class of all epistandard words over $A$.

An infinite word $s\in A^{\omega}$ is called \emph{episturmian} (resp. {\em  Sturmian}) if there exists a standard episturmian (resp. Sturmian) word $t\in A^{\omega}$ such that $\Ff s = \Ff t$.

 The words of the set $\psi(A^*)$ are the palindromic prefixes of all standard episturmian words over the alphabet $A$. They  are called \emph{epicentral words}, and simply  \emph{central} \cite{LO2},  in the case of a two-letter alphabet.

 \begin{example}  Let $A=\{a,b\}$. If $w = (ab)^{\omega}$, then the standard Sturmian word $f=\psi( (ab)^{\omega})$ having the directive word $w$ is the famous \emph{Fibonacci word}
 \[ f = abaababaabaab\cdots\]
 In the case of a three letter alphabet $A=\{a,b,c\}$ the standard Arnoux-Rauzy word having the directive
 word $w =(abc)^{\omega}$ is the so-called \emph{Tribonacci word}
 \[ \tau = abacabaabacaba\cdots.\]
  \end{example}

  \section{A generalized palindromization map}\label{sec:three}
  
  Let $X$ be a code over the alphabet $A$. Any word $w\in X^+$ can be uniquely factorized
  in terms of the elements of $X$. So we can introduce the map \[\psi_X : X^* \rightarrow PAL,\]
  inductively defined for any $w\in X^*$ and $x\in X$ as:
  \[\psi_X(\varepsilon)= \varepsilon,  \ \psi_X(x) = x^{(+)},\]
  \[ \psi_X(wx) = (\psi_X(w)x)^{(+)}.\]
  In this way to each word $w\in X^*$, one can uniquely associate the palindrome $\psi_X(w)$. We call $\psi_X$ the \emph{palindromization map relative to the code} $X$. 
   If $X=A$, then $\psi_A = \psi$.

  \begin{example} Let $A=\{a, b\}$, $X= \{ab, ba\}$, and $w=abbaab$; $X$ is a code so that
  $w$ can be uniquely factorized as  $w= x_1x_2x_1$ with $x_1=ab$ and $x_2=ba$. One has:
  $\psi_X(ab) = aba$, $\psi_X(abba)= (ababa)^{(+)}= ababa$, and $\psi_X(abbaab)=ababaababa$.
  \end{example} 
  
 The properties of the palindromization map $\psi$ stated in Proposition \ref{prop:basicp} are not in general satisfied by the generalized palindromization map $\psi_X$.
 For instance, take $X=\{ab, abb\}$ one has  $ab \prec abb$ but  $\psi_X(ab)= aba$ is not a prefix of
 $\psi_X(abb)= abba$. Property P1 can be replaced by the following:
 
 \begin{prop}\label{prop:pref} Let $v= x_1\cdots x_n$ with $x_i\in X$, $i=1,\ldots, n$. For any $v_j=x_1\cdots x_j$, $1\leq j < n$ one has $\psi_X(v_j) \prec \psi_X(v)$. If  $X$ is a prefix code, then the following holds:  for $u, v\in X^*$ if $u\preceq v$,
 then $\psi_X(u) \preceq \psi_X(v)$. 
 \end{prop}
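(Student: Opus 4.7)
The plan is to prove the two assertions separately, using induction for the first and the right-unitary property of $X^*$ for the second.

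For the first assertion, I would induct on $n-j$. The base case $n-j=1$ is essentially a restatement of the definition: $\psi_X(v_{j+1}) = (\psi_X(v_j)x_{j+1})^{(+)}$ is by construction the shortest palindrome having $\psi_X(v_j)x_{j+1}$ as a prefix, so $\psi_X(v_j)x_{j+1} \preceq \psi_X(v_{j+1})$, and since $x_{j+1}\in X\subseteq A^+$ is non-empty, $\psi_X(v_j)\prec \psi_X(v_{j+1})$. For the inductive step one chains the strict prefix relation $\psi_X(v_j)\prec\psi_X(v_{n-1})\prec\psi_X(v)$, noting that the proper-prefix relation is transitive.

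For the second assertion, the delicate point is that $u\preceq v$ as words in $A^*$ does not \emph{a priori} guarantee any compatibility with the $X$-factorizations of $u$ and $v$. This is where prefixness of $X$ is used. Since $X$ is a prefix code, $X^*$ is right unitary, so from $u,v\in X^*$ and $u\preceq v$ one obtains that $w:=u^{-1}v$ belongs to $X^*$. Writing $u=x_1\cdots x_j$ and $w=x_{j+1}\cdots x_n$ as factorizations over $X$, the concatenation $x_1\cdots x_n$ is a valid factorization of $v$, and by uniqueness of factorization in the code $X$ it is \emph{the} factorization of $v$. Hence $u$ coincides with the prefix $v_j$ of $v$ (in the sense of the first part of the statement).

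The conclusion then splits into two cases: if $u=v$, then $\psi_X(u)=\psi_X(v)$ and the non-strict prefix relation holds trivially; if $u\prec v$, then $j<n$ and the first part of the proposition, already proved, yields $\psi_X(u)=\psi_X(v_j)\prec\psi_X(v)$. The main conceptual obstacle is the second part, and it is dispatched by the observation that the prefix-code hypothesis makes the word-prefix order on $X^*$ agree with the factor-by-factor prefix order on the $X$-decompositions; without this hypothesis the example $X=\{ab,abb\}$ given immediately before the statement shows that the conclusion may fail.
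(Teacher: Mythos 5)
Your proof is correct and follows essentially the same route as the paper's: the first part via the definition of palindromic closure plus transitivity of $\prec$, and the second part via right unitarity of $X^*$ and uniqueness of factorization to reduce to the first part. The only cosmetic difference is that you phrase the first part as an explicit induction on $n-j$ where the paper simply chains the one-step strict prefix relations; the content is identical.
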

 \begin{proof} For any $j=1,\ldots, n-1$ one has 
 \[ \psi_X(x_1\cdots x_jx_{j+1}) = (\psi_X(x_1\cdots x_j)x_{j+1})^{(+)},\]
 so that $\psi_X(x_1\cdots x_j) \prec \psi_X(x_1\cdots x_{j+1})$. From the transitivity of relation $\prec$ it follows
  $\psi_X(v_j) \prec \psi_X(v)$. Let now $X$ be a prefix code and suppose that $u, v\in X^*$ and
  $u\preceq  v$. We can write $v= x_1\cdots x_n$ and $u= x'_1\cdots x'_m$ with $x_i, x'_j\in X$, $i=1,\ldots, n$ and $j=1, \ldots, m$. Since   $u\preceq  v$, one has $v=u\zeta$, with $\zeta \in A^*$. From the right unitarity of $X^*$ it follows $\zeta\in X^*$ and, therefore, $x'_i=x_i$ for $i=1, \ldots, m$. From the preceding result it follows that $\psi_X(u) \preceq \psi_X(v)$. 
   \end{proof}
   
   Properties P2  and P3 are also  in general not satisfied by $\psi_X$. As regards P2,  consider, for instance, the code
   $X=\{a, ab, bb\}$ and the word $w= abbab$. One has $\psi_X(w)= abbaabba$. Now $ \psi_X(w)$ has the prefix $ab$  but not  $(ab)^{(+)} =aba$. As regards P3 take $X= \{abab,b\}$ one has that $\psi_X(abab)= ababa$. Its palindromic prefix $aba$ is not equal to $\psi_X(v)$ for any $v\in X^*$.

  Differently from $\psi$, the map $\psi_X$ is not in general injective. For instance, if $X$ is the code $X= \{ab, aba\}$, then $\psi_X(ab)=\psi_X(aba)= aba$. Property P4  can be replaced by the following:
  
  \begin{prop}\label{prop:prefcode}Let $X$ be a prefix code over $A$. Then $\psi_X$ is injective.
  \end{prop}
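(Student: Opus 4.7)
My plan is to recover the $X$-factorization of the directive word from $\psi_X(u)$ letter-by-letter, exploiting that $X$ is prefix. Assume $\psi_X(u)=\psi_X(v)=w$, and let $u=x_1\cdots x_n$, $v=y_1\cdots y_m$ be the (unique) $X$-factorizations. I will prove by induction on $k$ that $x_j=y_j$ for every $j\le \min(n,m)$, and then show $n=m$.

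For the base step $k=1$: since $\psi_X(x_1)=x_1^{(+)}$ has $x_1$ as a prefix, and $\psi_X(x_1)\preceq \psi_X(u)=w$ by Proposition~\ref{prop:pref}, we obtain $x_1\preceq w$. Symmetrically $y_1\preceq w$. Both $x_1$ and $y_1$ belong to the prefix code $X$, so neither can be a proper prefix of the other; being prefixes of a common word, they must coincide.

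For the inductive step, suppose $x_j=y_j$ for $1\le j\le k$ with $k<\min(n,m)$, and set $p_k=\psi_X(x_1\cdots x_k)=\psi_X(y_1\cdots y_k)$. By the recursive definition, $p_k x_{k+1}$ is a prefix of $\psi_X(x_1\cdots x_{k+1})$, which by Proposition~\ref{prop:pref} is in turn a prefix of $w$; hence $p_k x_{k+1}\preceq w$, i.e., $x_{k+1}$ is a prefix of $p_k^{-1}w$. The same reasoning gives that $y_{k+1}$ is a prefix of $p_k^{-1}w$, and the prefix-code property then forces $x_{k+1}=y_{k+1}$.

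Finally, if $n<m$, the first part of Proposition~\ref{prop:pref} applied to $v$ yields $\psi_X(y_1\cdots y_n)\prec \psi_X(v)=w$; but the induction gives $\psi_X(y_1\cdots y_n)=\psi_X(x_1\cdots x_n)=\psi_X(u)=w$, contradicting strictness. By symmetry $n=m$, hence $u=v$. I do not expect a genuine obstacle: the essential content is that at each step the prefix-code property selects a unique next $X$-element, while Proposition~\ref{prop:pref} certifies that the relevant palindromic blocks $p_k$ have the correct position inside $w$.
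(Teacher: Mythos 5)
Your proposal is correct and follows essentially the same route as the paper's proof: both recover the $X$-factorization step by step by observing that after the common palindromic prefix $\psi_X(x_1\cdots x_k)$ the words $x_{k+1}$ and $y_{k+1}$ must both begin the remaining portion of $w$, so the prefix-code property forces them to coincide, and both conclude $m=n$ from the strict-prefix statement in Proposition~\ref{prop:pref}. The only cosmetic difference is that you phrase the cancellation as ``both are prefixes of $p_k^{-1}w$'' where the paper writes out the explicit factorizations $x_{k+1}\xi\zeta = x'_{k+1}\xi'\zeta'$; the content is identical.
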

   \begin{proof} Suppose that there exist words $x_1,\ldots, x_m, x'_1, \ldots x'_n \in X$ such that
  \[  \psi_X(x_1\cdots x_m)=  \psi_X(x'_1\cdots x'_n).\]
  We shall prove that $m=n$ and that for all $1\leq i \leq n$, one has $x_i=x'_i$.
  
     Without loss of generality, we can suppose $m\leq n$. Let us first prove by induction that for all $1\leq i \leq m$, one has $x_i=x'_i$.  Let us assume that $x_{1}=x'_{1}$, \ldots, $x_{k}=x'_{k}$ for $0<k<m$ and
   show that  $x_{k+1}= x'_{k+1}$. To this end 
   let us set $w=\psi_X(x_1\cdots x_m)$ and $w'=\psi_X(x'_1\cdots x'_m)$. In view of the preceding proposition, we can write:
  \[ w = \psi_X(x_1\cdots x_kx_{k+1})\zeta = (\psi_X(x_1\cdots x_k)x_{k+1})^{(+)}\zeta\]
  and
   \[ w' = \psi_X(x_1\cdots x_kx'_{k+1})\zeta' = (\psi_X(x_1\cdots x_k)x'_{k+1})^{(+)}\zeta',\]
   with $\zeta, \zeta'\in A^*$. Now one has:
  \[ (\psi_X(x_1\cdots x_k)x_{k+1})^{(+)}= \psi_X(x_1\cdots x_k)x_{k+1}\xi\]
  and
  \[ (\psi_X(x_1\cdots x_k)x'_{k+1})^{(+)}= \psi_X(x_1\cdots x_k)x'_{k+1}\xi',\]
  with $\xi, \xi'\in A^*$. Therefore, we obtain:
  \[ w= \psi_X(x_1\cdots x_k)x_{k+1}\xi\zeta= \psi_X(x_1\cdots x_k)x'_{k+1}\xi' \zeta'=w'.\]
  By cancelling on the left in both the sides of previous equation the common prefix $\psi_X(x_1\cdots x_k)$ one derives
  \begin{equation}\label{eq:pref00}
  x_{k+1}\xi\zeta= x'_{k+1}\xi' \zeta'.
  \end{equation}
  Since $X$ is a prefix code one obtains $x_{k+1}=x'_{k+1}$. Since an equation similar to (\ref{eq:pref00})
  holds also in the case $k=0$ one has also $x_1=x'_1$. Therefore, $x_i=x'_i$ for $i=1,\ldots,m$.
   We can write:
  \[ \psi_X(x_1\cdots x_m)= \psi_X(x_1\cdots x_mx'_{m+1}\cdots x'_n).\]
  Since by Proposition \ref{prop:pref},  $ \psi_X(x_1\cdots x_m) \preceq \psi_X(x_1\cdots x_mx'_{m+1}\cdots x'_n)$ it follows that $m=n$.
   \end{proof}

   A partial converse of the preceding proposition is:
   
   \begin{prop} Let $X$ be a code such that  $X\subseteq \PAL \cap \PR$. If $\psi_X$ is injective, then $X$ is  prefix.
   \end{prop}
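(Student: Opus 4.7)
The plan is to prove the contrapositive: assuming $X \subseteq \PAL \cap \PR$ is not a prefix code, I will show $\psi_X$ fails to be injective. Fix $x, y \in X$ with $x$ a proper prefix of $y$, and write $y = xz$ with $z \in A^+$. The collision I aim to exhibit is
\[
\psi_X(y \cdot y \cdot x) \;=\; \psi_X(y \cdot y \cdot y) \;=\; y^3,
\]
which gives non-injectivity since $y\cdot y\cdot x$ and $y\cdot y\cdot y$ are distinct elements of $X^*$. The second equality is immediate from $y \in \PAL$: every power $y^k$ is then a palindrome, so each palindromic-closure step involved in computing $\psi_X(y\cdot y\cdot y)$ leaves the current word unchanged. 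The first equality reduces to the identity $(y^2 x)^{(+)} = y^3$.

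One direction, $(y^2 x)^{(+)} \preceq y^3$, is straightforward: since $y = xz$, the palindrome $y^3 = y^2 \cdot xz$ has $y^2 x$ as a prefix, so the shortest palindromic extension has length at most $3|y|$. The real work is to rule out any palindrome $P$ with $y^2 x \preceq P$ and $|P| < 3|y|$. Being a palindrome, such a $P$ must also have $(y^2x)^{\sim} = xy^2$ as a suffix; hence $y^2$ occurs in $P$ both at positions $[0, 2|y|-1]$ (as prefix) and at positions $[|P|-2|y|, |P|-1]$ (as suffix of $xy^2$). Since $|P| < 4|y|$ these two occurrences overlap, and matching them letter-by-letter forces $y^2$ to have the period $p := |P| - 2|y|$, with $|x| \le p < |y|$.

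The crux, which I expect to be the principal technical obstacle, is an application of Fine and Wilf's theorem to $y^2$. Since $y^2$ has both the trivial period $|y|$ and the new period $p < |y|$, and since $2|y| \ge |y| + p - \gcd(|y|, p)$, the theorem yields that $y^2$ has period $d := \gcd(|y|, p)$. As $d$ divides $|y|$ and $d \le p < |y|$, the word $y$ becomes a $(|y|/d)$-th power of its length-$d$ prefix with $|y|/d \ge 2$, contradicting $y \in \PR$. Hence no such shorter $P$ exists, $(y^2 x)^{(+)} = y^3$, and the desired collision is established.
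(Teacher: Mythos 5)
Your proof is correct, and it exhibits exactly the same collision as the paper: writing $y=xz$ with $x,y\in X$ and $z\in A^{+}$, both arguments show $\psi_X(y\,y\,x)=\psi_X(y\,y\,y)=y^{3}$, so everything reduces to the identity $(y^{2}x)^{(+)}=y^{3}$. Where the two proofs genuinely differ is in how the lower bound on $|(y^{2}x)^{(+)}|$ is obtained. The paper pins down the longest palindromic suffix of $yyx$ directly: it writes the first $y$ as $\alpha x\beta$, supposes $LPS(yyx)=x\beta yx$ for some $\beta\neq\varepsilon$, derives the commutation relation $\beta(\alpha x\beta)=(\alpha x\beta)\beta$, and invokes the Lyndon--Sch\"utzenberger theorem together with $y\in\PR$ to force $\beta=\varepsilon$. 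You instead take an arbitrary palindrome $P$ with $y^{2}x\preceq P$ and $|P|<3|y|$, note that $y^{2}$ is then a border of $P$ (prefix because $y^{2}\preceq y^{2}x$, suffix because $xy^{2}=(y^{2}x)^{\sim}$ is a suffix of $P$), deduce that $y^{2}$ carries the period $p=|P|-2|y|$ with $1\leq|x|\leq p<|y|$ in addition to the period $|y|$, and apply Fine and Wilf to contradict the primitivity of $y$. The two periodicity tools are close relatives, so neither route is substantially more elementary; your version has the small advantage of not needing the preliminary observation that a longer palindromic suffix must begin at an internal occurrence of $x$ (the paper's decomposition $y=\alpha x\beta$), since the border argument applies to any candidate palindrome uniformly, at the cost of invoking Fine--Wilf where a bare commutation identity suffices.
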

   \begin{proof} Let us suppose that $X$ is not a prefix code. Then there exist words $x,y \in X$
   such that $x\neq y$ and  $y= x\lambda$ with $\lambda\in A^+$. Since $x,y\in \PAL$ one has
   $y = x \lambda  =  \lambda^{\sim} x$.
   We shall prove that the longest palindromic suffix $LPS(yyx)$ of the word $yyx = \lambda^{\sim}xyx$ is
   $xyx$. This would imply, as $x,y\in \PAL$,  that
   \[ \psi_X(yyx)= (yyx)^{(+)} =  \lambda^{\sim}xyx\lambda = yyy = (yyy)^{(+)}= \psi_X(yyy),\]
   so that $\psi_X$ would be not injective, a contradiction.
   
   Let us then suppose that $y = \lambda^{\sim }x = \alpha x \beta$, $\alpha, \beta \in A^*$, and that
   $ LPS(yyx) = x\beta yx.$
   This implies  $\beta y\in \PAL$, so that,
   $ \beta y = \beta \alpha x \beta = y\beta^{\sim}= \alpha x \beta \beta^{\sim}$.
   Therefore, one has $\beta= \beta^{\sim}$ and
   \[  \beta (\alpha x \beta) = (\alpha x \beta) \beta.\]
   From a classic result of combinatorics on words \cite{LO},  there exist $w \in \PR$ and integers $h,k\in \Nn$ such that $\beta = w^h$ and $y = \alpha x \beta = w^k$. Since $y\in \PR$, it follows that $k=1$, $y=w$, and
   $\beta = y^h$. As $|\beta|< |y|$, the only possibility is $h=0$, so that $\beta= \varepsilon$, which
   implies $LPS(yyx) = xyx$.
   \end{proof}
   
   \section{An extension to infinite words}\label{sec:four}
  
  Let us now consider a code $X$ having a finite deciphering delay. One can extend $\psi_X$ to
  $X^{\omega}$ as follows: let   $ x = x_1x_2 \cdots x_n \cdots ,$ with $x_i\in X,  i\geq 1$. From Proposition \ref{prop:pref}, for any
  $n\geq 1$, $\psi_X(x_1\cdots x_n)$ is a proper prefix of $\psi_X(x_1\cdots x_nx_{n+1})$ so that
  there exists
  			\[ \lim_{n\rightarrow \infty} \psi_X(x_1\cdots x_n) = \psi_X(x).\]
Let us observe that the word 	$\psi_X(x)	$ has infinitely many palindromic prefixes. This implies that 	
$\psi_X(x)	$ is \emph{closed under reversal}, i.e., if $w\in \Ff  \psi_X(x)$, then also $w^{\sim}\in \Ff  \psi_X(x)$.
If $X=A$ one obtains the usual extension of $\psi$ to the infinite words.

Let us explicitly remark that if $X$ is a code with an infinite deciphering delay one cannot associate by the generalized palindromization map to each word $x\in X^{\omega}$ a unique infinite word. For instance, the code  $X=\{a, ab, bb\}$ has an infinite deciphering delay; the word
$ab^{\omega}$ admits two distinct factorizations by the elements of $X$. The first beginning with
$ab$ is  $(ab)(bb)^{\omega}$, the second beginning with $a$ is $a(bb)^{\omega}$. Using the first decomposition one can generate by the generalized palindromization map the infinite word
$(ababb)^{\omega}$ and using the second the infinite word $(abb)^{\omega}$. 

Let us observe that the previously defined map $\psi_X: X^{\omega} \rightarrow A^{\omega}$ is not in general injective. For instance, take the code $X=\{ab, aba\}$ which has finite deciphering delay equal to $1$. As it is readily verified one has $\psi_X((ab)^{\omega}) = \psi_X((aba)^{\omega})=(aba)^{\omega}$.

The following proposition holds; we omit its proof, which is very similar to that of Proposition \ref{prop:prefcode}.

\begin{prop}\label{prop:prefcode1} Let $X$ be a prefix code over $A$. Then the map $\psi_X: X^{\omega}\rightarrow A^{\omega}$ is injective.
\end{prop}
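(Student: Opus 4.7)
The plan is to mimic the proof of Proposition \ref{prop:prefcode} but in the infinite setting, where it actually becomes slightly simpler because we need not compare lengths of factorizations.

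Let $t, t' \in X^{\omega}$ with $\psi_X(t) = \psi_X(t')$, and write the unique factorizations $t = x_1 x_2 \cdots$, $t' = x'_1 x'_2 \cdots$ with $x_i, x'_i \in X$ for all $i \geq 1$ (which exist since every prefix code has deciphering delay $0$). I will show by induction on $k \geq 0$ that $x_i = x'_i$ for all $1 \leq i \leq k$, the case $k = 0$ being vacuous.

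Assume the claim holds for $k$; I want to derive $x_{k+1} = x'_{k+1}$. By Proposition \ref{prop:pref}, both $\psi_X(x_1 \cdots x_k x_{k+1})$ and $\psi_X(x_1 \cdots x_k x'_{k+1})$ are proper prefixes of $\psi_X(t) = \psi_X(t')$. Using the inductive definition of $\psi_X$, each of these palindromes has the form
\[
(\psi_X(x_1 \cdots x_k)\, y)^{(+)} = \psi_X(x_1 \cdots x_k)\, y\, \xi_y,
\]
where $y \in \{x_{k+1}, x'_{k+1}\}$ and $\xi_y \in A^*$. Consequently, after cancelling the common prefix $\psi_X(x_1 \cdots x_k)$ (set to $\varepsilon$ when $k=0$), both $x_{k+1}\,\xi_{x_{k+1}}$ and $x'_{k+1}\,\xi_{x'_{k+1}}$ are prefixes of a single word, namely the suffix of $\psi_X(t)$ after $\psi_X(x_1 \cdots x_k)$. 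Hence one of $x_{k+1}, x'_{k+1}$ is a prefix of the other; since $X$ is a prefix code, this forces $x_{k+1} = x'_{k+1}$, completing the induction.

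I expect no substantial obstacle: the only subtlety relative to the finite-case proof is that here we don't need to compare the numbers $m, n$ of factors on the two sides, so the ``$m \leq n$'' argument at the end of the proof of Proposition \ref{prop:prefcode} is unnecessary. The key technical ingredient — that $\psi_X(x_1 \cdots x_k)\, x_{k+1}$ is always a prefix of $(\psi_X(x_1 \cdots x_k)\, x_{k+1})^{(+)}$ — is a direct consequence of the definition of right palindromic closure, and the prefix property of $X$ does the rest.
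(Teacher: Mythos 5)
Your proof is correct and follows exactly the route the paper intends: the authors omit the proof of this proposition precisely because it is the adaptation of the argument for Proposition \ref{prop:prefcode} that you carry out, and your observation that the final ``$m\leq n$'' step becomes unnecessary in the infinite setting is accurate.
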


The class of infinite words that one can generate by means of  generalized palindromization maps $\psi_X$ is, in general, strictly larger than the class of standard episturmian words.

\begin{example} Let $A= \{a,b\}$ and $X= \{a, bb\}$. Let $x$ be any infinite word  $x= abbay$ with
$y\in X^{\omega}$. One has that $\psi_X(abba)= abbaabba$, so that the word $\psi_X(x)$ will not
be balanced (cf. \cite{LO2}). This implies that $\psi_X(x)$ is not  a  Sturmian word.
Let $A=\{a, b, c\}$ and $X= \{a, abca\}$. Take any word $x= abcay$ with $y\in X^{\omega}$. One
has $\psi_X(abca)= abcacba$. Since the prefix $abca$ is not rich in palindromes, it follows that
$\psi_X(x)$ is not an episturmian word.
\end{example}	

\begin{thm}\label{thm:unifrecur}
For any finite code $X$ having finite deciphering delay and any 
$t\in X^{\omega}$, the word $s=\psi_{X}(t)$ is uniformly recurrent. 
\end{thm}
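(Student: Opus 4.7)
The plan is to bound, uniformly in position, the gaps between consecutive occurrences in $s$ of each palindromic prefix of $s$; since every factor of $s$ is a factor of some palindromic prefix, this will yield uniform recurrence. Write $P_m = \psi_X(x_1\cdots x_m)$, so the $P_m$ form a strictly increasing sequence of palindromic prefixes converging to $s$, and set $M = \max\{|x| : x\in X\}$, which is finite because $X$ is finite.

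The key quantitative input is the estimate $|P_{m+1}| - 2|P_m| \leq 2M - 1$. It follows from the identity $|w^{(+)}| = 2|w| - |LPS(w)|$ applied to $w = P_m x_{m+1}$: since $|LPS(P_m x_{m+1})|\geq 1$ and $|x_{m+1}|\leq M$, we get
\[
|P_{m+1}| - 2|P_m| \;=\; 2|x_{m+1}| - |LPS(P_m x_{m+1})| \;\leq\; 2M-1.
\]
Now fix $n$. Since $P_m$ is a palindrome having $P_n$ as a prefix, $P_n$ is also a suffix of $P_m$ for every $m\geq n$; hence in $P_{m+1}$ the word $P_n$ occurs at least at positions $0$ and $|P_m|-|P_n|$ inside the prefix copy of $P_m$, and symmetrically at positions $|P_{m+1}|-|P_m|$ and $|P_{m+1}|-|P_n|$ inside the suffix copy of $P_m$. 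I would then prove by induction on $m$ that the maximum gap (distance between consecutive start positions of $P_n$-occurrences in $P_m$, minus $|P_n|$) is at most $2M-1$. In the inductive step, any two consecutive occurrences in $P_{m+1}$ fall into three cases: both inside the prefix copy of $P_m$ (gap controlled by the inductive hypothesis), both inside the suffix copy (by palindromic symmetry, same bound), or ``crossing'' the boundary—in which case the latest occurrence in the prefix copy is at $|P_m|-|P_n|$ and the earliest in the suffix copy at $|P_{m+1}|-|P_m|$, giving a gap of exactly $|P_{m+1}|-2|P_m|\leq 2M-1$. Any additional occurrences straddling the boundary only shrink these gaps, and the overlapping regime $|P_{m+1}|\leq 2|P_m|$ is absorbed into the same bound since the would-be crossing gap becomes non-positive.

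Given this uniform bound, every palindromic prefix $P_n$ occurs in $s$ with start-to-start distances at most $|P_n| + 2M - 1$. For an arbitrary factor $u\in\Ff s$, one chooses $n$ large enough that $u\in\Ff P_n$; then $u$ inherits the same bound, so $u$ occurs in every factor of $s$ of length at least $|P_n| + 2M$, which is exactly the uniform recurrence of $s$. I expect the most delicate point to be the case analysis in the induction—in particular, verifying that the crossing gap is still bounded by $2M-1$ when the prefix and suffix copies of $P_m$ overlap inside $P_{m+1}$, and confirming that possible straddling occurrences of $P_n$ across the $P_m$-boundary do not invalidate the identification of the two ``canonical'' consecutive occurrences used to realise the bound.
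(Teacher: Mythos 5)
Your proposal is correct and follows essentially the same route as the paper: the paper likewise argues by induction along the palindromic prefixes $\psi_X(x_1\cdots x_k)$, notes that each such prefix occurs as both a prefix and a suffix of the next one, and bounds the crossing gap by $|\psi_X(\beta y)|-2|\psi_X(\beta)|+|u|<|u|+2\ell_X$ using the same inequality $|w^{(+)}|<2|w|$ that underlies your estimate $|P_{m+1}|-2|P_m|\leq 2M-1$ (the paper tracks occurrences of the arbitrary factor $w$ directly rather than of the prefixes $P_n$, but this is only a bookkeeping difference). Note only that your final window length $|P_n|+2M$ should be roughly $2|P_n|+2M$, since a window must be long enough to contain a full occurrence of $P_n$ starting after its left endpoint; this does not affect uniform recurrence.
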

\begin{proof}
Let $t= x_1x_2\cdots x_n \cdots \in X^{\omega}$, with $x_i\in X$, $i\geq 1$, and $w$ be any factor of $s$. Let $\alpha$ be the shortest prefix $\alpha = x_1\cdots x_h$  of
$t$ such that $w\in\Ff u$, with $u=\psi_{X}(\alpha)$. The word $s$ is trivially recurrent since it has infinitely many palindromic prefixes. Hence, $w$ occurs infinitely many times in $s$. We will show that the gaps between successive occurrences of $w$ in $s$ are bounded above by $|u|+2\ell_X$, where $\ell_X=\max_{x\in X}|x|$.
This is certainly true within the prefix $u$: even if $w$ occurs in $u$ more than once, the gap between any two such occurrences cannot be longer than $|u|$.

Let us then assume we proved such bound on gaps for successive occurrences
of $w$ in $\psi_{X}(\beta)$, where $\beta = x_1\cdots x_k$, $h\leq k$, and let us prove it for occurrences in $\psi_{X}(\beta y)$, where $y= x_{k+1}$. 
We can write $\psi_{X}(\beta)=u\rho=\rho^{\sim}u$ and $\psi_{X}(\beta y)=\psi_{X}(\beta)\lambda=\lambda^{\sim}\psi_{X}(\beta)$ for some $\lambda,\rho\in A^{*}$, so that 
\begin{equation}
\label{eq:betay}
\psi_{X}(\beta y)=\rho^{\sim}u\lambda=\lambda^{\sim}u\rho\;.
\end{equation}
By inductive hypothesis, the only gap we still need to consider is the one between the last occurrence of $w$ in $\rho^{\sim}u$ and the first one in $u\rho$ as displayed in~\eqref{eq:betay}. 
If $|\rho|>|\lambda|$, then both such occurrences of $w$ fall within  $\rho^{\sim}u= \psi_X(\beta)$, so that by induction we are done. So suppose $|\lambda|>|\rho|$.
As one easily verifies, the previous gap is at most equal to the gap between the two displayed occurrences of $u$ in~\eqref{eq:betay},  namely  $|\lambda|-|\rho|$. From ~\eqref{eq:betay} one has:
\[|\lambda|-|\rho|=|\psi_X(\beta y)|-|\psi_X(\beta)|-(|\psi_X(\beta)|-|u|)=|\psi_{X}(\beta y)|-2|\psi_{X}(\beta)|+|u|.\]
Now, as
\[|\psi_{X}(\beta y)|=|(\psi_{X}(\beta)y)^{(+)}|<2(|\psi_{X}(\beta)|+|y|)\leq 2|\psi_{X}(\beta)|+2\ell_X\;,\]
we have $|\lambda|-|\rho|<|u|+2\ell_X$. By induction, we can conclude that gaps between successive occurrences of $w$ are bounded by $|u|+2\ell_X$ in the whole $s$, as desired.
\end{proof}

Let $y= y_1y_{2}\cdots y_n\cdots \in X^{\omega}$, with $y_i\in X$ for all $i\geq 1$. We say that a word $x\in X$ is \emph{persistent} in $y$ if there exist infinitely many integers $i_1<i_2< \cdots < i_k < \cdots$ such that $x= y_{i_k}$ for all $k\geq 1$.

We say that the word $y= y_1y_{2}\cdots y_n\cdots \in X^{\omega}$ is \emph{alternating} if there exist distinct letters $a,b\in A$, a word 
$\lambda\in A^{*}$, and a sequence of indices $i_0<i_1< \cdots < i_n < \cdots$, such that $\lambda a\preceq y_{i_{2k}}$ and $\lambda b\preceq  y_{i_{2k+1}}$ for all $k\geq 0$.

We remark that if there exist two distinct words $x_{1},x_{2}\in X$, which are persistent in $y$ and such that $\{x_{1},x_{2}\}$ is a prefix code, then $y$ is alternating. If $X$ is finite, then the two conditions are actually equivalent.

\begin{prop}\label{prop:alter}
Let  $y=y_{1}\cdots y_{n}\cdots\in X^{\omega}$ with $y_i\in X$, $i\geq 1$. If $y$ is alternating, then $\psi_{X}(y)$ is not ultimately periodic.
\end{prop}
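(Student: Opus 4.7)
The plan is to argue by contradiction. Suppose $s := \psi_{X}(y)$ is ultimately periodic. From the definition of $\psi_{X}$ on $X^{\omega}$ and Proposition~\ref{prop:pref}, the palindromes $u_{n} := \psi_{X}(y_{1}\cdots y_{n})$ form a strictly increasing chain of prefixes of $s$, so $s$ has palindromic prefixes of length tending to infinity. A short argument (pick a palindromic prefix whose length exceeds the preperiod plus twice the period; the palindrome symmetry together with the tail periodicity then propagates the period all the way to position $1$) lets us assume $s = v^{\omega}$ with $v$ primitive of length $p$.

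The main lemma is a rigidity fact. For $L \geq p$, the prefix $s_{[L]}$ of $v^{\omega}$ is a palindrome if and only if $v_{j+1} = v_{\sigma_{R}(j)+1}$ for all $j \in \{0, \ldots, p-1\}$, where $\sigma_{R}(j) := (R-j) \bmod p$ and $R := (L-1) \bmod p$. If two distinct residues $R_{1}, R_{2}$ modulo $p$ both yielded such symmetries, then $\sigma_{R_{1}} \circ \sigma_{R_{2}}$ would be the non-trivial cyclic shift $j \mapsto j + (R_{1}-R_{2}) \bmod p$ fixing $v$, giving $v$ a period $\gcd(R_{1}-R_{2},p) < p$ and contradicting primitivity. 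Consequently, for $n$ large enough, $|u_{n}| \bmod p$ takes one and the same value.

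Finally, the alternation yields the contradiction. Since $u_{n+1} = (u_{n} y_{n+1})^{(+)}$ has $u_{n} y_{n+1}$ as a prefix, $y_{n+1}$ occurs in $s$ starting at position $|u_{n}|+1$, so $s_{|u_{n}|+|\lambda|+1}$ equals the $(|\lambda|+1)$-th letter of $y_{n+1}$. For $n+1 = i_{2k}$ this letter is $a$; for $n+1 = i_{2k+1}$ it is $b$. But by periodicity $s_{|u_{n}|+|\lambda|+1}$ depends only on $|u_{n}| \bmod p$, which by the previous paragraph is the same in both cases. Hence $a = b$, contrary to the definition of alternating. The main obstacle is the rigidity lemma in the second paragraph, which converts the combinatorial claim about $|u_{n}|$ into a clean algebraic statement about $v$ and exploits primitivity in a direct way.
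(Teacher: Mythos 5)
Your proof is correct, but it follows a genuinely different route from the paper's. The paper argues directly: fixing $n$ and choosing $h$ with $i_{2h}>n$, it uses the fact that $u_n$ is both a prefix and a suffix of $u_{i_{2h}-1}$ (palindromicity plus Proposition~\ref{prop:pref}) to show that both $u_n\lambda a$ and $u_n\lambda b$ are factors of $s$, so that $u_n\lambda$ is a right special factor of $s$ for \emph{every} $n$; since an ultimately periodic word cannot have arbitrarily long right special factors, the conclusion follows. You instead argue by contradiction: you reduce to pure periodicity $s=v^\omega$ with $v$ primitive of length $p$ (using the unboundedly long palindromic prefixes), prove a rigidity lemma forcing all sufficiently long palindromic prefixes of $v^\omega$ to have lengths in a single residue class modulo $p$, and then read off the letter at position $|u_n|+|\lambda|+1$ (which is legitimate, since $u_ny_{n+1}\preceq u_{n+1}\preceq s$) to get $a=b$. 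Your rigidity lemma and its use of primitivity are verified correctly, and the final step is sound. What the two approaches buy: the paper's argument is shorter and, more importantly, produces the explicit right special factors $u_n\lambda$, an ingredient that is recycled later (the proof that the $u_n$ are bispecial of order $d$ for $X$-AR words explicitly reuses this argument); your argument is self-contained and elementary, and the rigidity lemma is a nice statement in its own right, but it yields only the negative conclusion. One small quantitative slip in your reduction to pure periodicity: a palindromic prefix of length exceeding the preperiod plus twice the period is not always enough (e.g.\ $aaabbbbbbba^{\omega}$ has the palindromic prefix $aaabbbbbbbaaa$ of length $13>10+2\cdot 1$ yet is not purely periodic); the correct threshold is roughly twice the preperiod plus the period. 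This is harmless here because you have palindromic prefixes of unbounded length at your disposal, but the parenthetical bound should be adjusted.
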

\begin{proof}
By hypothesis, there exists an increasing sequence of indices
$(i_{n})_{n\geq 0}$, such that for all $k\geq 0$ we have
$\lambda a \preceq y_{i_{2k}}$ and $\lambda b\preceq y_{i_{2k+1}}$, for some
$\lambda\in A^{*}$ and letters $a\neq b$.

For all $n\geq 0$, let $u_{n}$ denote the word $\psi_{X}(y_{1}\cdots y_{n})$. We shall prove that $u_{n}\lambda$ is a right special factor of
$s=\psi_{X}(y)$ for any $n$, thus showing that $s$ cannot be ultimately periodic (cf. \cite{LO2}).

We can choose an integer $h>0$ satisfying $i_{2h}>n$. Let us set $m = i_{2h}$ and 
$x_1= y_{i_{2h}}$. Now one has that:
\[ u_{m-1}x_1 \preceq u_m \in \Pre s .\]
Since $u_n$ is a prefix and a suffix of  $u_{m-1}$ it follows,  writing $x_1 = \lambda a \eta$ for some $\eta\in A^*$, that
\[ u_nx_1= u_n \lambda a \eta \in \Ff s.\]
Since $i_{2h+1}>i_{2h}$, setting $x_2= y_{2h+1}= \lambda b \eta'$ for some $\eta'\in A^*$, one derives by a similar argument that:
\[ u_nx_2= u_n \lambda b \eta' \in \Ff s.\]
From the preceding equations one has that $u_n\lambda$ is a right special factor of $s$.
\end{proof}

We shall now prove a theorem showing how one can generate all standard Sturmian words by
the palindromization map relative to the code $X= \{a,b\}^2$. We premise the		 following lemma  which is  essentially a restatement of a well known characterization of central words (see for instance ~\cite[Proposition~9]{deluca}).
\begin{lemma}
\label{thm:PxyQ}
Let $A=\{a,b\}$ and $E$ be the automorphism of $A^*$ interchanging  the letter $a$ with $b$. If $z\in A$ and $w\in A^{*}\setminus z^{*}$, then
\[\psi(wz)=\psi(w)z E(z) \psi(w') \  \mbox{ for some} \  w'\in\Pre w. \]
\end{lemma}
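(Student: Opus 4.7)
The plan is to apply the classical decomposition of central words to $P:=\psi(wz)$ and identify the pieces via Property P3.

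First, observe that $P$ is a central word on the binary alphabet $A$. Since $w\notin z^{*}$, the word $w$ (and hence $\psi(w)\preceq P$) contains the letter $E(z)$, so $P$ uses both letters of $A$ and is not a power of a single letter. By Proposition~9 of \cite{deluca}, such a central word admits a decomposition
\[P = p\,x\,y\,q = q\,y\,x\,p\]
with central palindromes $p,q$ and $\{x,y\}=A$, where $p$ is the longest proper palindromic prefix of $P$.

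To identify the pieces, I apply Property P3 to $\psi(wz)$: its palindromic prefixes are precisely the words $\psi(v)$ for $v\in\Pre wz$, and by P1 together with injectivity (P4) these have distinct lengths strictly increasing with $|v|$. Hence the longest proper palindromic prefix of $P$ corresponds to the longest proper prefix of $wz$, namely $w$, giving $p=\psi(w)$. Since $\psi(w)z$ is a prefix of $P=(\psi(w)z)^{(+)}$, the $(|\psi(w)|+1)$-th letter of $P$ is $z$, so in $P=\psi(w)\,xy\,q$ we must have $x=z$, forcing $y=E(z)$. Finally, $q$ is a palindromic prefix of $P$ strictly shorter than $P$, so by P3 we have $q=\psi(w')$ for some $w'\in\Pre wz$ with $w'\ne wz$, i.e.\ $w'\in\Pre w$. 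The equality $\psi(wz)=\psi(w)\,z\,E(z)\,\psi(w')$ follows.

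The main subtlety is the identification $p=\psi(w)$: one must ensure that the ``larger'' factor $p$ in the cited decomposition is indeed the longest proper palindromic prefix of $P$. If this is not explicit in Proposition~9 of \cite{deluca}, a direct substitute proceeds by analysing the longest palindromic suffix $Q$ of $\psi(w)z$: writing $\psi(w)z=\alpha Q$ yields $P=\alpha Q\alpha^{\sim}=\psi(w)z\cdot\alpha^{\sim}$, and by palindromicity of $\psi(w)$ the word $\alpha^{\sim}$ is the length-$|\alpha|$ suffix of $\psi(w)$; using P2 to rule out this suffix beginning with $z$, one obtains $\alpha^{\sim}=E(z)\cdot\psi(w')$ for some $w'\in\Pre w$ by applying P3 to the palindromic prefix of $\psi(w)$ of length $|\psi(w)|-|Q|$.
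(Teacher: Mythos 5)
The paper offers no proof of this lemma: it is stated as ``essentially a restatement'' of the classical characterization of central words and dispatched with a citation to \cite{deluca}. Your argument makes that restatement explicit and is correct: $P=\psi(wz)$ is central and uses both letters, so it admits the decomposition $P=p\,xy\,q=q\,yx\,p$ with $p,q$ central palindromes; both $p$ and $q$ are then proper palindromic prefixes of $P$, hence of the form $\psi(u)$ with $u\in\Pre w$ by P3, and matching the letters via the prefix $\psi(w)z$ of $P$ gives the claim. The one step you rightly flag --- that the longer of the two factors is the \emph{longest} proper palindromic prefix, forcing $p=\psi(w)$ --- is genuinely needed (knowing only that $p$ is \emph{some} proper palindromic prefix is not enough to pin down $x$), but it follows from the standard fact that the minimal period of a central word $P=p\,xy\,q$ is $\min(|p|,|q|)+2$, which is part of the same circle of results in \cite{deluca}; so this is a fillable reference issue rather than a gap. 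Your fallback via $Q=LPS(\psi(w)z)$ is also viable and closer to how one proves the formula from scratch, though as written it is only a sketch: both the claim that the letter preceding $Q$ is $E(z)$ and the claim that the remaining suffix is a palindrome of the form $\psi(w')$ need more than a one-line appeal to P2 (the clean route is to note $Q=zVz$ with $V$ the longest palindromic prefix of $\psi(w)$ followed by $z$, and then use P2, P3 and the maximality of $V$).
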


\begin{thm}
	Let $A=\{a,b\}$ and $X=A^{2}$. An infinite word $s\in A^{\omega}$ is standard
	Sturmian if and only if $s=\psi_{X}(t)$ for some alternating $t\in X^{\omega}$ 
	such that
	\[t\in \left((aa)^{*}\cup(bb)^{*}\right)\{ab,ba\}^{\omega}\;.\]
\end{thm}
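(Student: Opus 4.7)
The plan is to set up a bijection between standard-Sturmian directive words $u\in A^{\omega}$ (those in which both $a$ and $b$ occur infinitely often) and the $t\in X^{\omega}$ described in the statement, in such a way that corresponding pairs satisfy $\psi_{X}(t)=\psi(u)$; both directions of the ``iff'' then follow at once from the fact that the standard Sturmian words are exactly the $\psi(u)$ with $u$ using both letters infinitely often.

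The heart of the argument is a local dictionary describing how a single $\psi_{X}$-step compares with $\psi$-steps. Assume inductively that $\psi_{X}(t_{n})=\psi(v_{n})$ for some $v_{n}\in A^{*}$, and append $x=c_{1}c_{2}\in X$. I split into two cases. Case~A: if $v_{n}\in c_{1}^{*}$, then $\psi(v_{n})c_{1}=c_{1}^{|v_{n}|+1}$ is already a palindrome equal to $\psi(v_{n}c_{1})$, so $(\psi(v_{n})c_{1}c_{2})^{(+)}=(\psi(v_{n}c_{1})c_{2})^{(+)}=\psi(v_{n}c_{1}c_{2})$, i.e., $x$ appends the two letters $c_{1}c_{2}$ to the directive. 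Case~B: if $v_{n}\notin c_{1}^{*}$ and $c_{1}\neq c_{2}$ (so $c_{2}=E(c_{1})$ over the binary alphabet), then Lemma~\ref{thm:PxyQ} gives $\psi(v_{n}c_{1})=\psi(v_{n})c_{1}c_{2}\psi(v')$ for some $v'\preceq v_{n}$; thus $\psi(v_{n}c_{1})$ is a palindrome containing $\psi(v_{n})c_{1}c_{2}$ as a prefix, and being already the shortest palindrome with prefix $\psi(v_{n})c_{1}$, it is a fortiori the shortest with the longer prefix $\psi(v_{n})c_{1}c_{2}$, giving $(\psi(v_{n})c_{1}c_{2})^{(+)}=\psi(v_{n}c_{1})$, so $x$ appends only the single letter $c_{1}$.

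For the forward direction, consider $t=(aa)^{k_{0}}u_{0}$ with $u_{0}\in\{ab,ba\}^{\omega}$ alternating (the $(bb)^{*}$-prefixed case is symmetric). Iterating Case~A on the initial segment gives $v=a^{2k_{0}}$; the first element of $u_{0}$ falls in Case~A or~B and produces $v_{1}\in\{a^{2k_{0}+1}b,\,a^{2k_{0}}b\}$, already containing both letters (when $k_{0}=0$ the element $x\in\{ab,ba\}$ still gives $v_{1}=x$ via Case~A). From then on only Case~B applies, so each subsequent element of $t$ appends its first letter to the directive. Alternating-ness guarantees that both $ab$ and $ba$ occur infinitely often in $u_{0}$, so both letters are appended infinitely often, and the limit directive $u$ uses both letters infinitely often; hence $\psi_{X}(t)=\psi(u)$ is standard Sturmian.

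For the converse, given standard Sturmian $s=\psi(u)$ with $u$ beginning (say) with $a^{k_{1}}$, set $k_{0}=\lfloor k_{1}/2\rfloor$ and start $t$ with $(aa)^{k_{0}}$; choose the first element of the $\{ab,ba\}^{\omega}$-tail to be $ab$ or $ba$ according as $k_{1}$ is odd or even (Case~A or~B respectively), so that $v$ advances to exactly $a^{k_{1}}b$; thereafter realize each further letter $c$ of $u$ by appending the unique $x\in\{ab,ba\}$ whose first letter is $c$ (Case~B throughout). The resulting $t$ lies in $(aa)^{*}\{ab,ba\}^{\omega}$, is alternating because both letters of $u$ occur infinitely often, and satisfies $\psi_{X}(t)=\psi(u)=s$ by the dictionary; the mirror construction with $(bb)^{*}$ handles $u$ starting with $b$. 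The main delicate point is Case~B: the identification $(\psi(v_{n})c_{1}c_{2})^{(+)}=\psi(v_{n}c_{1})$ rests on the binary-alphabet identity $E(c_{1})=c_{2}$, which forces the letter immediately after $\psi(v_{n})c_{1}$ in its palindromic extension to be precisely the $c_{2}$ we need, and it is this feature that pins the statement to the case $X=A^{2}$ with $|A|=2$.
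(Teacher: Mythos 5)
Your proof is correct. The converse direction is essentially the paper's own argument: both rest on the identity $(\psi(v)zE(z))^{(+)}=\psi(vz)$, obtained from Lemma~\ref{thm:PxyQ} together with the minimality of palindromic closure (your ``a fortiori the shortest with the longer prefix'' is exactly the paper's sandwich $(\psi(u)z\hat z)^{(+)}\preceq\psi(uz)\preceq(\psi(u)z\hat z)^{(+)}$), and both dispose of the initial block $a^{k}b$ by a parity split. Where you genuinely diverge is the forward direction. The paper proves that each $\psi_{X}(t_{[2n]})$ is a \emph{central} word by noting that $\psi_{X}(t_{[2n]})ab$ and $\psi_{X}(t_{[2n]})ba$ are finite standard words, hence prefixes of standard Sturmian words, and then invoking property P2 and the characterization of central words as palindromic prefixes of standard Sturmian words; this establishes the existence of directive words $u_{n}$ with $\psi(u_{n})=\psi_{X}(t_{[2n]})$ without computing them. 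You instead run the same two-case local dictionary forward, which produces the $u_{n}$ explicitly and makes the proof self-contained modulo Lemma~\ref{thm:PxyQ}, at the cost of slightly more delicate bookkeeping for the initial segment ($v_{n}\in c_{1}^{*}$ versus not, $k_{0}=0$ versus $k_{0}>0$), which you handle correctly. Both routes conclude the same way: alternation forces both letters to be appended to the directive word infinitely often (equivalently, aperiodicity via Proposition~\ref{prop:alter}), so the limit is a standard Sturmian word.
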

\begin{proof}
Let $s=\psi_{X}(t)$; we can assume without loss of generality that $t\in (aa)^{k}\{ab,ba\}^{\omega}$ with $k\in \Nn$. Let $t_{[2n]}$ be the prefix of $t$ of length $2n$ (which belongs to $X^{*}$). We shall prove that  $\psi_X(t_{[2n]})$ is a central word for all $n\geq 0$. This is trivial for all prefixes $t_{[2p]}$ of $t$ with
$p\leq k$. Let us now assume, by induction, that $\psi_X(t_{[2n]})$ is central for a given $n\geq k$
and prove that  $\psi_X(t_{[2n+2]})$ is central.

We can write $t_{[2n+2]}=t_{[2n]}ab$ or $t_{[2n+2]}=t_{[2n]}ba$. Since by the inductive hypothesis
$\psi_X(t_{[2n]})$ is central, there exists $u_n\in A^*$ such that $\psi_X(t_{[2n]})= \psi(u_n)$. The words
$\psi_X(t_{[2n]})ab$ and $\psi_X(t_{[2n]})ba$ are finite standard words and therefore, as is well known, prefixes of standard Sturmian words (cf. \cite[Corollary~2.2.28]{LO2}). By property P2  of Proposition \ref{prop:basicp}, their palindromic closures 
$(\psi_X(t_{[2n]})ab)^{(+)}= \psi_X(t_{[2n]}ab)$ and $(\psi_X(t_{[2n]})ba)^{(+)}=\psi_X(t_{[2n]}ba)$ are both central. Hence, in any case $\psi_X(t_{[2n+2]})$ is central so that there exists $u_{n+1}\in A^*$ such that
$\psi_X(t_{[2n+2]})= \psi(u_{n+1})$. Since $\psi(u_n)$ is a prefix of $\psi(u_{n+1})$ from Proposition \ref{prop:basicp}  one derives that
$u_{n} \prec u_{n+1}$.

We have thus proved the existence of a sequence of finite words $(u_{n})_{n\geq 0}$, with $u_{i}\prec u_{i+1}$ for all $i\geq 0$, such that for all $n\geq 0$ we have 
\[\psi_{X}(t_{[2n]})=\psi(u_{n})\;.\]
 Letting $\Delta=\lim_{n\to\infty} u_{n}$, we obtain $s=\psi(\Delta)$. Since $t$ is alternating, $s$ is not ultimately periodic by Proposition~\ref{prop:alter}, so that it is a standard Sturmian word.

Conversely, let $s$ be a standard Sturmian word, and let $\Delta$ be its directive word. Without loss of generality, we can assume that $\Delta$ begins in $a$; let $n\geq 1$ be such that $a^{n}b\in\Pre \Delta$. If $n$ is even, we have \[\psi(a^{n}b)=\left((aa)^{\frac{n}{2}}b\right)^{(+)}=
\left((aa)^{\frac{n}{2}}ba\right)^{(+)}=\psi_{X}\left((aa)^{\frac{n}{2}}ba\right)\]
whereas if $n$ is odd,
\[\psi(a^{n}b)=\left((aa)^{\frac{n-1}{2}}ab\right)^{(+)}=
\psi_{X}\left((aa)^{\frac{n-1}{2}}ab\right)\;.\]
Let now $z\in A$ and $uz$ be a prefix of $\Delta$ longer than $a^{n}b$. By induction, we can suppose that there exists some $w\in (aa)^{*}\{ab,ba\}^{*}$ such that $\psi(u)=\psi_{X}(w)$. From Lemma~\ref{thm:PxyQ} and Proposition \ref{prop:basicp}, we obtain, setting $\hat z=E(z)$, $(\psi(u)z\hat z)^{(+)} \preceq \psi(uz) \preceq (\psi(u)z\hat z)^{(+)}$. Hence,
\begin{equation}
\label{eq:deltat}
\psi(uz)=(\psi(u)z\hat z)^{(+)}=(\psi_{X}(w)z\hat z)^{(+)}=\psi_{X}(wz\hat z)\;.
\end{equation}
We have thus shown how to construct arbitrarily long prefixes of the desired infinite word $t$, starting from the Sturmian word $s$. Since $a$ and $b$ both occur infinitely often in $\Delta$, by~\eqref{eq:deltat} we derive that $t$ is alternating.
\end{proof}

\begin{example} In the case of Fibonacci word $f$ let us take $X=\{ab, ba\}$.
 As it is readily verified, one has:
  \[f= \psi_X(ab(abba)^{\omega}).\]
\end{example}

\vspace{4 mm}

Let $\mu$ be the Thue-Morse morphism, and $t=\mu^{\omega}(a)$ the Thue-Morse 
word \cite{LO}. We recall that $\mu$ is defined by $\mu(a)=ab$ and $\mu(b)=ba$. The next proposition will show that $t$ can be obtained using our 
generalized palindromization map, relative to a suitable infinite code.

Let us set $u_{n}=\mu^{2n}(a)$ and $v_{n}=E(u_{n})b$, for all $n\in\Nn$. 
Thus $v_{0}=bb$, $v_{1}=baabb$, $v_{2}=baababbaabbabaabb$, and so on.
\begin{prop}
	The set $X=\{a\} \cup \{v_{n} \mid n\in\Nn\}$ is a prefix code, and
	\[t=\psi_{X}(av_{0}v_{1}v_{2}\cdots)\;.\]
\end{prop}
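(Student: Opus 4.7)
The plan is to establish two claims: $X$ is a prefix code, and $\psi_{X}(av_{0}v_{1}v_{2}\cdots)=t$.

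For the first, every $v_{n}$ begins with $b$ (since $u_{n}$ begins with $a$), so $a$ is not a prefix of any $v_{n}$. For $n<m$, the identity $u_{m}=\mu^{2(m-n)}(u_{n})$ shows $u_{n}\prec u_{m}$; moreover, $u_{m-n}$ begins with $ab$, so the letter immediately following $u_{n}$ in $u_{m}$ is the first letter of $\mu^{2n}(b)=E(u_{n})$, namely $b$. Applying $E$, the corresponding letter in $E(u_{m})$ is $a$, whereas $v_{n}=E(u_{n})b$ has $b$ in that position, so $v_{n}\not\prec v_{m}$.

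For the second claim, I prove by induction on $n$ that $\psi_{X}(av_{0}\cdots v_{n-1})=u_{n}$; the desired identity then follows by passing to the limit, since $\lim_{n}u_{n}=\mu^{\omega}(a)=t$. The base case $n=0$ is $\psi_{X}(a)=a=u_{0}$. For the inductive step, the relation $\mu^{2}(a)=abba$ gives $u_{n+1}=u_{n}E(u_{n})E(u_{n})u_{n}$, and a straightforward induction (using that reversal commutes with $E$) shows each $u_{n}$ is a palindrome. Hence $u_{n}v_{n}=u_{n}E(u_{n})b$ is a prefix of the palindrome $u_{n+1}$, giving $(u_{n}v_{n})^{(+)}\preceq u_{n+1}$. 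Equality will hold provided no palindromic prefix of $t$ has length strictly between $|u_{n}v_{n}|=2\cdot 4^{n}+1$ and $|u_{n+1}|=4^{n+1}$.

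The main obstacle is thus to characterize the palindromic prefixes of $t$; I claim they are precisely $\{\varepsilon\}\cup\{u_{k}\mid k\geq 0\}$, whose lengths form $\{0\}\cup\{4^{k}\mid k\geq 0\}$, so the required length-gap holds. Three facts prove the characterization. (a) No palindromic prefix $p$ of $t$ has odd length $\geq 3$: the recurrences $t_{2i}=t_{i}$ and $t_{2i+1}=E(t_{i})$ show that any factor $bb$ of $t$ must begin at an odd $0$-indexed position, but palindromicity together with $t_{1}=t_{2}=b$ would force $bb$ to start at the even position $|p|-3$. (b) A palindromic prefix of even length decomposes as $p=\mu(q)$ with $q\in\Pre t$; the identity $\mu(w)^{\sim}=\mu(E(w^{\sim}))$ (checked on the generators $\mu(a)=ab$ and $\mu(b)=ba$) turns palindromicity of $p$ into $q=E(q^{\sim})$, i.e., $q$ is an $E$-palindrome. (c) An $E$-palindrome has even length (its middle letter would have to be $E$-fixed), so we may write $q=\mu(r)$ with $r\in\Pre t$; using $E\mu=\mu E$ and $E^{2}=\mathrm{id}$, the $E$-palindromicity of $q$ becomes $r=r^{\sim}$. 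Combining (a)-(c), every palindromic prefix of $t$ of length $L\geq 2$ determines a palindromic prefix of length $L/4$, forcing $L=4^{k}$ for some $k\geq 1$; an immediate induction identifies this prefix as $\mu^{2k}(a)=u_{k}$, yielding the claimed characterization and completing the proof.
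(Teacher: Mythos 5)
Your treatment of the prefix-code half is correct (and close to the paper's, which derives $u_ib\prec u_j$ from the closure identity $u_{i+1}=(u_iv_i)^{(+)}$ rather than from the block structure of $\mu^{2n}$), and your characterization of the palindromic prefixes of $t$ as $\{\varepsilon\}\cup\{u_k\mid k\geq 0\}$ via the two-step descent $p=\mu(q)$, $q=\mu(r)$ is correct and nicely done. The problem is in the inductive step, at the sentence ``$u_nv_n$ is a prefix of the palindrome $u_{n+1}$, giving $(u_nv_n)^{(+)}\preceq u_{n+1}$.'' The general principle invoked here --- that if $w$ is a prefix of a palindrome $P$ then $w^{(+)}$ is a prefix of $P$ --- is false: $ab$ is a prefix of the palindrome $abba$, yet $(ab)^{(+)}=aba$ is not a prefix of $abba$ (and the failure persists even when $|w|>|P|/2$, e.g.\ $w=abaa$, $P=abaaaba$, $w^{(+)}=abaaba$). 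What being a prefix of the palindrome $u_{n+1}$ does give you is only the length bound $|(u_nv_n)^{(+)}|\leq|u_{n+1}|$. Consequently your reduction collapses: if $(u_nv_n)^{(+)}$ were shorter than $u_{n+1}$, it need not be a prefix of $t$ at all, so the absence of palindromic prefixes of $t$ of length strictly between $2\cdot 4^n+1$ and $4^{n+1}$ rules nothing out.

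What actually has to be proved is a statement about palindromic \emph{suffixes}, not prefixes: namely that $LPS(u_nE(u_n)b)=bb$, which by the formula $w^{(+)}=sQs^{\sim}$ yields $(u_nv_n)^{(+)}=u_nE(u_n)E(u_n)u_n=u_{n+1}$ directly. Your prefix characterization can be leveraged toward this (palindromic suffixes of the palindrome $E(u_n)$ are the $E$-images of palindromic prefixes of $u_n$, hence of the form $E(u_k)$, and one checks each candidate $bE(u_k)b$ fails to be a suffix of $u_nE(u_n)b$ because the letter preceding that occurrence of $E(u_k)$ is $a$), but this additional analysis is not in your proof and is not a one-line remark. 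The paper sidesteps the whole issue by citing Theorem~8.1 of \cite{adlADL}, which asserts exactly $\mu^{2n+2}(a)=\left(\mu^{2n+1}(a)b\right)^{(+)}$, i.e.\ $u_{n+1}=(u_nv_n)^{(+)}$; everything else in the paper's proof then follows as in yours. As written, your argument for the identity $\psi_X(av_0\cdots v_{n-1})=u_n$ has a genuine gap at its central step.
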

\begin{proof}
	As a consequence of~\cite[Theorem~8.1]{adlADL}, we can write
	$u_{n+1}=\mu^{2n+2}(a)=\left(\mu^{2n+1}(a)b\right)^{(+)}$.
	Since for any $k\geq 0$ one has $\mu^{k+1}(a)=\mu^{k}(a)E\left(\mu^{k}(a)
	\right)$, we obtain for all $n\geq 0$
	\begin{equation}
		\label{eq:un1}
		u_{n+1}=(u_{n}E(u_{n})b)^{(+)}=(u_{n}v_{n})^{(+)}\;.
	\end{equation}
	Since $b \prec v_{i}$ for all $i\geq 0$, by~\eqref{eq:un1} it follows
	$u_{i}b\prec u_{i}v_{i} \preceq u_{i+1}$, so that $u_{i}b	\prec u_{j}$ 
	whenever $0\leq i<j$, whence
         $E(u_{i}b)=E(u_{i})a \prec E(u_{j})$.
	This implies that for $0\leq i<j$, $v_{i}=E(u_{i})b$ is not a prefix 
	of $v_{j}=E(u_{j})b$. Clearly $v_{i}$ is not a prefix of any $v_{k}$ with 
	$k<i$, nor of $a$, which in turn is not a prefix of any $v_{i}$ with $i\in
	\Nn$; hence $X$ is a prefix code.
	
	Since $u_{0}=a=\psi_{X}(a)$, from~\eqref{eq:un1} it follows that for all
	$n>0$, $u_{n}=\psi_{X}(av_{0}\cdots v_{n-1})$. As $t=\lim_{n\to\infty}
	u_{n}$, the assertion is proved.
\end{proof}

\section{Generalized Arnoux-Rauzy words}\label{sec:five}

Let us  suppose that the code $X$ over the alphabet $A$ is finite and \emph{maximal}, i.e., it is not properly included in any other code on the same alphabet. By a classic result of Sch\"utzenberger either $X$ is prefix or has an infinite deciphering delay \cite{codes}. Therefore, if one wants to define a map $\psi_X: X^{\omega}\rightarrow A^{\omega}$ one has to suppose that the code is a prefix maximal code.

We shall now introduce a class of infinite words which are a natural generalization in our framework of the standard Arnoux-Rauzy words.

Let $X$ be a finite maximal prefix code over the alphabet $A$ of cardinality $d>1$. We say that the word $s= \psi_X(y)$,
with $y\in X^{\omega}$ is a standard  \emph{Arnoux-Rauzy word relative to $X$}, or $X$-AR word for short, if every  word $x\in X$ is persistent in $y$.

Let us observe that if $X=A$ we have the usual definition of standard Arnoux-Rauzy word. Any $X$-AR word is trivially alternating and therefore, from Proposition \ref{prop:alter} it is not ultimately periodic.  The following proposition extends to $X$-AR words a property satisfied by the classic standard Arnoux-Rauzy words.

\begin{prop} Let $s= \psi_X(y)$ be an $X$-AR word  with $y=y_1\cdots y_n\cdots$, $y_i\in X$, $i\geq 1$. Then for any $n\geq 0$,
$u_n=\psi_X(y_1\cdots y_n)$ is a bispecial factor of $s$ of order $d=\card (A)$. This implies that  every prefix of $s$ is a left special factor of $s$ of order $d$.
\end{prop}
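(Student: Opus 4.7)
The plan is to show that each $u_n$ is right special of order $d$, and then, since $u_n$ is a palindrome and $s$ is closed under reversal (having the infinitely many palindromic prefixes $u_k$), conclude that $u_n$ is also left special of order $d$, hence bispecial of order $d$.

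The first ingredient I would use is a structural observation about $X$: since $X$ is a finite maximal prefix code over $A$, for every letter $a\in A$ there exists some $x_{a}\in X$ with $a\preceq x_{a}$. Otherwise, $X\cup\{a\}$ would still be a prefix code, contradicting maximality of $X$.

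To show $u_{n}a\in\Ff s$ for a given letter $a$, I would exploit the persistence of $x_{a}$ in $y$ to pick an index $m\geq n$ with $y_{m+1}=x_{a}$. By Proposition \ref{prop:pref} (applicable since $X$ is prefix), $u_{n}\preceq u_{m}$, and since $u_{m}$ is a palindrome and so is $u_n$, the prefix $u_{n}$ is also a suffix of $u_{m}$. Then $u_{m+1}=(u_{m}x_{a})^{(+)}$ begins with $u_{m}x_{a}$, which contains $u_{n}x_{a}$ as a suffix; hence $u_{n}x_{a}\in\Ff s$, and since $a\preceq x_{a}$ we obtain $u_{n}a\in\Ff s$. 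As $a$ was arbitrary, $u_{n}$ is right special of order $d$, and combined with the palindromicity of $u_n$ and the reversal closure of $s$, it is bispecial of order $d$.

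For the final assertion that every prefix of $s$ is left special of order $d$: given such a prefix $p$, I would pick $n$ large enough to have $p\preceq u_{n}$, observe that $p^{\sim}$ is a suffix of the palindrome $u_{n}$, and deduce that $p^{\sim}a$ is a suffix of $u_{n}a\in\Ff s$ for every $a\in A$, so $p^{\sim}$ is right special of order $d$; by reversal closure, $p$ is left special of order $d$. There is no real obstacle here; the main subtlety is the maximality argument, which is essential to guarantee that every letter of $A$ appears as the first letter of some code word and thus, via persistence, as a continuation of $u_{n}$ inside $s$.
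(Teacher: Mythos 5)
Your proof is correct and follows essentially the same route as the paper: right specialness of $u_n$ of order $d$ via persistence of the codewords together with the fact that the palindrome $u_n$ occurs as both prefix and suffix of later $u_m$, and then left specialness from palindromicity and closure under reversal. The only cosmetic differences are that the paper justifies $X^f=A$ (every letter heads some codeword) via the full $d$-ary tree representation of a finite maximal prefix code, whereas you use a direct maximality argument that works just as well, and that for the final assertion the paper deduces $au_n\in\Ff s$ and hence $ap\in\Ff s$ directly, while you pass through the reversal $p^{\sim}$; both are sound.
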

\begin{proof} Since $X$ is a finite maximal prefix code, it is complete \cite{codes}, i.e., it is represented by the leaves of a full $d$-ary tree (i.e., each node in the tree is either a leaf or has exactly degree $d$).  Hence, $X^f = A$, where $X^f$ denotes the set formed by the first letter of all words of $X$.  Any word $x\in X$ is persistent in $y$, so that, by using  an argument similar to that of the proof of Proposition \ref{prop:alter},  one has that for any $n\geq 0$,
\[ u_nX \subseteq \Ff s,\]
that implies $u_nX^f = u_nA \subseteq \Ff s$, i.e., $u_n$ is a right special factor of $s$ of order $d$.
Since $s$ is closed under reversal and $u_n$ is a palindrome, one has that $u_n$ is also a left special factor of $s$ of order $d$. Hence, $u_n$ is a bispecial factor of order $d$.
Let $u$ be a prefix of $s$. There exists an integer $n$ such that $u\preceq u_n$. From this one has that $u$ is a left special factor of $s$ of order $d$.
\end{proof}

An infinite word $s$ over the alphabet $A$ is \emph{$\omega$-power free} if for every non-empty word
$u\in \Ff s$ there exists an integer  $p>0$ such that $u^p \not\in \Ff s$. We recall the following result (see, for instance, ~\cite[Lemma~2.6.2]{dV})
which will be useful in the sequel:
\begin{lemma}\label{lem:uromega} A uniformly recurrent word is either periodic or $\omega$-power free.
\end{lemma}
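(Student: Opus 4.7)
The plan is to argue by contraposition: assuming $s$ is uniformly recurrent but not $\omega$-power free, I will show that $s$ must be periodic. So let $u$ be a non-empty factor of $s$ such that $u^{n}\in\Ff s$ for every $n\geq 1$.

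First I would exploit uniform recurrence in the following form: for every factor $w$ of $s$ there exists $R_w$ such that every factor of $s$ of length at least $R_w$ contains $w$ as a factor. Given an arbitrary factor $w$ of $s$, choose $n$ so large that $|u^{n}|=n|u|\geq R_w$. Since $u^{n}$ is itself a factor of $s$ by hypothesis, $w$ must appear as a factor of $u^{n}$, hence of the periodic word $u^{\omega}$. Consequently
\[\Ff s\subseteq \Ff u^{\omega}\,.\]

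Next, since $u^{\omega}$ has at most $|u|$ distinct factors of any given length, the factor complexity $p_s$ is bounded above by the constant $|u|$. By the classical Morse--Hedlund theorem (an infinite word whose factor complexity is bounded at some $n$, equivalently satisfies $p_{s}(n)\leq n$ for some $n$, is ultimately periodic) we conclude that $s$ is ultimately periodic, say $s = xy^{\omega}$ for some $x\in A^{*}$ and $y\in A^{+}$.

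Finally, I would remove the preperiod: a uniformly recurrent ultimately periodic word is periodic. Indeed, if $|x|$ were minimal and positive, the first letter of $s$ would occur only finitely often (only inside $x$), contradicting uniform recurrence. Hence $x=\varepsilon$ and $s=y^{\omega}$ is periodic. I do not anticipate any serious obstacle; the only point requiring slight care is the passage from $\Ff s\subseteq \Ff u^{\omega}$ to periodicity via Morse--Hedlund, and the final removal of the preperiod, both of which are standard.
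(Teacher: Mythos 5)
You should first note that the paper does not actually prove this lemma: it simply cites it from de~Luca and Varricchio, so there is no internal proof to compare against. Your overall strategy---contraposition, showing $\Ff s\subseteq \Ff u^{\omega}$ via uniform recurrence, invoking Morse--Hedlund to get ultimate periodicity, and then removing the preperiod---is the standard route to this classical fact, and the first two steps are correct as written.

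The one genuine flaw is in your last step. It is simply not true that if $s=xy^{\omega}$ with $|x|$ minimal and positive, then the first letter of $s$ occurs only inside $x$: that letter may well also occur inside $y$. For example, $s=a(ab)^{\omega}=aababab\cdots$ has minimal preperiod $x=a$, yet its first letter $a$ occurs infinitely often. (This word is indeed not uniformly recurrent, but the witness is the factor $aa$, which occurs only once---not the letter $a$; your argument as stated does not locate such a witness.) The claim you need---a uniformly recurrent ultimately periodic word is purely periodic---is true but requires a different argument. For instance: writing $s=xy^{\omega}$, uniform recurrence yields an occurrence of the prefix $xy$ of $s$ at some position $j\geq |x|$, hence entirely inside the periodic tail; the suffix of $y^{\omega}$ starting there is $z^{\omega}$ for some $z$ with $|z|=|y|$, and it has $xy$ as a prefix. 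Now $s$ and $z^{\omega}$ agree on their first $|x|+|y|$ letters, $s_{i+|y|}=s_{i}$ for all $i>|x|$, and $z^{\omega}$ has period $|y|$ everywhere, so an easy induction gives $s=z^{\omega}$, i.e., $s$ is periodic. With this repair the proof is complete.
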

\begin{cor} An $X$-AR word is $\omega$-power free.
\end{cor}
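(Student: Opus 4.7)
The plan is to combine the two results immediately preceding the corollary, namely Theorem~\ref{thm:unifrecur} (uniform recurrence of $\psi_X$-images) and Lemma~\ref{lem:uromega} (the dichotomy for uniformly recurrent words), with the non-periodicity furnished by Proposition~\ref{prop:alter}. Since an $X$-AR word is defined through a finite maximal prefix code $X$, which has deciphering delay $0$, and since every $x \in X$ is persistent in the directive $y \in X^{\omega}$, all the hypotheses needed to invoke these three results are already in place.

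Concretely, I would first observe that $X$ being a finite prefix code has finite deciphering delay, so Theorem~\ref{thm:unifrecur} applies and $s = \psi_X(y)$ is uniformly recurrent. Next, using the remark stated right before Proposition~\ref{prop:alter}, for finite codes the existence of two distinct persistent words in $y$ forming a prefix subcode is equivalent to $y$ being alternating; since $|A| = d > 1$ forces $|X| \geq 2$ and since \emph{every} $x \in X$ is persistent in an $X$-AR directive, $y$ is indeed alternating. Hence by Proposition~\ref{prop:alter} the word $s$ is not ultimately periodic, and in particular not periodic.

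Finally, applying Lemma~\ref{lem:uromega} to $s$, which is uniformly recurrent and not periodic, yields that $s$ is $\omega$-power free, which is the desired conclusion.

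There is no real obstacle here: the statement is a direct corollary of tools already developed in the excerpt. The only point deserving mild care is checking that the set-up of an $X$-AR word does supply the hypotheses of both Theorem~\ref{thm:unifrecur} (finiteness and finite deciphering delay) and Proposition~\ref{prop:alter} (alternation of the directive word), which is why I made those verifications explicit above.
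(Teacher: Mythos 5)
Your proposal is correct and follows exactly the paper's argument: uniform recurrence from Theorem~\ref{thm:unifrecur}, non-periodicity via the alternation of the directive word and Proposition~\ref{prop:alter}, and then the dichotomy of Lemma~\ref{lem:uromega}. The only difference is that you spell out the (easy) verification that an $X$-AR directive word is alternating, which the paper dismisses as trivial.
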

\begin{proof} An $X$-AR word is not periodic and by Theorem \ref{thm:unifrecur}  it is uniformly recurrent, so that the result  follows from the preceding  lemma.
\end{proof}
\begin{lemma}\label{lem:sesqui} Let $X\subseteq A^*$ be a finite set  and set $\ell =\ell_X= \max\{|x| \mid x\in X\}$. Let
$w= w_1\cdots w_m$, $w_i\in A$, $i=1, \ldots, m$, be a palindrome with $m \geq \ell$. If there exist
$u, v \in (\Pre X)\setminus X$ such that  $|u|=p$, $|v|=q$, $p<q$, and
\begin{equation}\label{eq:piq}
 w_{p+1}\cdots w_mu = w_{q+1}\cdots w_mv, 
 \end{equation}
then
\[ w_1\cdots w_{m-p} = \alpha^k \alpha', \]
where $\alpha' \in \Pre \alpha$, $\alpha^{\sim}$ is a prefix of $v$ of length $q- p$, and 
$k \geq \frac{m}{\ell -1}-1$.
\end{lemma}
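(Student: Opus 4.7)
The approach is to extract periodicity from the given equation, transfer it to the prefix using the palindromic condition, identify $\alpha$, and then bound $k$ by elementary arithmetic.

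Comparing the $i$-th characters of the two sides of $w_{p+1}\cdots w_m u = w_{q+1}\cdots w_m v$ for $1\leq i\leq m-q$ yields $w_{p+i}=w_{q+i}$, so the factor $w_{p+1}\cdots w_m$ has period $q-p$. The palindrome identity $w_j=w_{m+1-j}$ transfers this periodicity to the opposite end: one obtains $w_j=w_{j+(q-p)}$ for $1\leq j\leq m-q$, so $w_1\cdots w_{m-p}$ also has period $q-p$. Setting $\alpha=w_1\cdots w_{q-p}$, the usual periodic decomposition yields $w_1\cdots w_{m-p}=\alpha^k\alpha'$ with $\alpha'\in\Pre\alpha$ and $k=\lfloor(m-p)/(q-p)\rfloor$. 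Comparing characters at positions $m-q<i\leq m-p$ in the same equation gives $v_j=w_{m-(q-p)+j}$ for $1\leq j\leq q-p$, and a second application of the palindrome identity turns this into $v_j=w_{q-p+1-j}$; hence the prefix of $v$ of length $q-p$ equals $w_{q-p}\cdots w_1=\alpha^{\sim}$, as required.

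For the bound on $k$, note that $v\in(\Pre X)\setminus X$ is a proper prefix of some word of $X$, so $q\leq\ell-1$ and $q-p\leq\ell-1$. Since $k>(m-p)/(q-p)-1$, it suffices to verify the inequality $(m-p)/(q-p)\geq m/(\ell-1)$. Cross-multiplying, this is equivalent to $(m-p)(\ell-1-q)+p(m-q)\geq 0$, and both summands are nonnegative: the first because $m\geq\ell>q$ gives $m>p$ and $\ell-1-q\geq 0$, the second because $p\geq 0$ and $m>q$. This yields $k\geq m/(\ell-1)-1$.

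The main obstacle will be the careful bookkeeping in the character-by-character comparison, so that the periodicity of $w_{p+1}\cdots w_m$ transfers correctly via palindromic symmetry and the tail of the equation is identified precisely with $\alpha^{\sim}$. Once this structural extraction is done, the numerical bound on $k$ is a short arithmetic manipulation.
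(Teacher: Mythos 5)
Your proof is correct and follows essentially the same route as the paper's: both extract the period $q-p$ of $w_{p+1}\cdots w_m$ from (\ref{eq:piq}), transfer it to $w_1\cdots w_{m-p}$ via the palindromic symmetry $w_i=w_{m-i+1}$, identify $\alpha^{\sim}$ with the length-$(q-p)$ prefix of $v$, and close with the same arithmetic estimate. The only cosmetic difference is that the paper packages the periodicity through the Lyndon--Sch\"utzenberger conjugacy lemma applied to $w_{p+1}\cdots w_q(w_{q+1}\cdots w_m)=(w_{q+1}\cdots w_m)b_1\cdots b_{q-p}$, whereas you chase indices directly.
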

\begin{proof} Let $u= a_1\cdots a_p$ and $v= b_1\cdots b_q$ with $a_i, b_j\in A$, $i=1, \ldots, p$, $j=1, \ldots, q$. From (\ref{eq:piq}) one derives:  $a_i= b_{q-p+i}$, $i=1, \ldots, p$, and
\[w_{p+1}\cdots w_q(w_{q+1}\cdots w_m)= (w_{q+1}\cdots w_m)b_1\cdots b_{q-p}.\]
From a classic result of Lyndon and Sch\"utzenberger (cf. \cite{LO}), there exist $\lambda, \mu \in A^*$ and
an integer $h\geq 0$ such that:
\begin{equation}\label{eq:piq1}
 w_{p+1}\cdots w_q = \lambda\mu, \ \ b_1\cdots b_{q-p}= \mu\lambda, \ \ w_{q+1}\cdots w_m= (\lambda\mu)^h\lambda.
\end{equation}
Hence,
\[ w_{p+1}\cdots w_m = (\lambda\mu)^{h+1}\lambda.\]
Since $w\in PAL$, one has for any $i=1, \ldots, m$, $w_i= w_{m-i+1}$. Hence, by taking the reversals of both the sides of the preceding equation,  one has:
\[w_1\cdots w_{m-p}= w_m\cdots w_{p+1} = (\lambda^{\sim}\mu^{\sim})^{h+1}\lambda^{\sim}= \alpha^k \alpha', \]
having set $k=h+1$, $\alpha= \lambda^{\sim}\mu^{\sim}$, and $\alpha'= \lambda^{\sim}$. Now from (\ref{eq:piq1}),
$\alpha^{\sim}= \mu\lambda = b_1\cdots b_{q-p}$ is a prefix of $v$.

From (\ref{eq:piq1}) one has that $m-q= h(q-p)+ |\lambda|$. Since $|\lambda|\leq q-p$ it follows that
$m-q\leq h(q-p)+(q-p)= (h+1)(q-p)=k(q-p)$. Hence, $k \geq \frac{m-q}{q-p}$. As $q-p \leq \ell-1$ and
$\frac{q}{\ell -1} \leq 1$, the result follows.
\end{proof}

\begin{lemma}\label{lem:propre} Let $X$ be a finite maximal prefix code over a $d$-letter alphabet.  Then
\[ \card( (\Pre X)\setminus X) = \frac {\card(X)-1}{d-1}.\]
\end{lemma}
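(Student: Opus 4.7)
The plan is to interpret the statement as a counting identity on the literal tree representation of $X$. Since $X$ is a maximal prefix code over $A$ with $\card(A)=d$, one classically represents $X$ as the set of leaves of a finite rooted $d$-ary tree $T$ whose nodes are the elements of $\Pre X$, with the root being $\varepsilon$ and the children of a node $u$ being the words $ua$ for $a\in A$ (whenever $u\notin X$). Maximality of $X$ forces $T$ to be \emph{full}: every node that is not a leaf has exactly $d$ children. Under this bijection, the leaves correspond to $X$ and the internal nodes correspond exactly to $(\Pre X)\setminus X$, so the lemma reduces to the standard combinatorial fact that in a full $d$-ary tree with $L$ leaves and $I$ internal nodes one has $I = (L-1)/(d-1)$.

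To prove this tree identity I would count edges of $T$ in two ways. On the one hand, $T$ is a tree with $L+I$ nodes, hence $L+I-1$ edges. On the other hand, each of the $I$ internal nodes contributes exactly $d$ edges to its children, and every edge is counted this way exactly once; so the edge count is $dI$. Equating gives $dI=L+I-1$, i.e.\ $(d-1)I=L-1$, whence $I=(L-1)/(d-1)$. Substituting $L=\card(X)$ and $I=\card((\Pre X)\setminus X)$ yields the claim.

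The only step that requires a little care is justifying the fullness of $T$, i.e.\ that every proper prefix $u\in(\Pre X)\setminus X$ extends by \emph{every} letter of $A$ to another element of $\Pre X$. This is precisely the well-known characterization of maximal prefix codes as complete prefix codes (already cited in the paper just before the lemma): if some $ua$ were not in $\Pre X$, one could adjoin $ua$ to $X$ while preserving the prefix property, contradicting maximality. Everything else is routine double-counting, so I do not anticipate any serious obstacle.
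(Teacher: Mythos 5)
Your proof is correct and follows essentially the same route as the paper: represent $X$ as the leaves of a full $d$-ary tree whose internal nodes are the proper prefixes, and invoke the standard leaf/internal-node count. The only difference is that you spell out the double-counting of edges where the paper simply cites the formula as well known, which is a harmless (and welcome) addition.
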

\begin{proof} The code $X$ is represented by the set of leaves of a full $d$-ary tree. The elements
of the set $(\Pre X)\setminus X$, i.e.,   the proper prefixes of the words of $X$ are represented by the internal nodes of the tree. As is well known, the number of internal nodes of a full $d$-ary tree is equal to the number of leaves minus 1 divided by $d-1$.
\end{proof}

In the following we let $\lambda_X$ be the quantity 
\[ \lambda_X =  \frac {\card(X)-1}{d-1}.\]

\begin{prop}\label{prop:exp} Let $s$ be an $X$-AR word. There exists an integer  $e_s$ such that
for any non-empty proper prefix $u$ of a word of $X$, one has $u^{e_s} \not \in \Ff s$. Moreover, also $(u^{\sim})^{e_s} \not\in \Ff s$.
\begin{proof} Any word $x\in X$, as well as any prefix of $x$, is a factor of $s$. Let $u$ be any proper non-empty prefix of a word of $X$. From Lemma \ref{lem:uromega} there exists an integer  $p$ such that 
$u^{p}\not \in \Ff s$. Let $e_u$ be the smallest $p$ such that this latter condition is satisfied. Let us
set \[e_s= \max\{ e_v \mid v\in (\Pre X)\setminus (X \cup \{\varepsilon\})\}.\]
We  observe that $e_s$ is finite since $X$ is a finite code. Therefore, for any  $u\in (\Pre X)\setminus  (X \cup \{\varepsilon\})$ one has  
\[ u^{e_s} \not\in \Ff s.\]
Since $s$ is closed under reversal it follows that also $(u^{\sim})^{e_s} \not\in \Ff s$.
\end{proof}
\end{prop}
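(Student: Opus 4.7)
The plan is to assemble three ingredients already established in the paper into a single uniformity statement, nothing more. First, every non-empty proper prefix $u$ of a word of $X$ must actually occur in $s$: since $s$ is an $X$-AR word, every $x\in X$ is persistent in the directive word $y=y_{1}y_{2}\cdots$, hence $x$ appears to the right of $\psi_{X}(y_{1}\cdots y_{n})$ in $s$ for infinitely many $n$, and therefore every prefix of $x$ is a factor of $s$. Second, $s$ is $\omega$-power free by the Corollary preceding the statement (which itself combines Theorem~\ref{thm:unifrecur} with the non-periodicity guaranteed by Proposition~\ref{prop:alter}, via Lemma~\ref{lem:uromega}). So for each non-empty $u\in (\Pre X)\setminus X$ we know there exists at least one positive integer $p$ with $u^{p}\notin \Ff s$.

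With these in hand, for every such $u$ I would define $e_{u}$ to be the smallest positive integer satisfying $u^{e_{u}}\notin \Ff s$, and then set
\[ e_{s}=\max\bigl\{ e_{u} \,\bigm|\, u\in (\Pre X)\setminus (X\cup\{\varepsilon\})\bigr\}. \]
The maximum exists and is finite because $X$ is finite, and hence $(\Pre X)\setminus X$ is finite (compare Lemma~\ref{lem:propre}). By minimality of $e_{u}$ and the trivial observation that $u^{e_{u}}$ is a factor of $u^{p}$ for every $p\geq e_{u}$, we immediately obtain $u^{e_{s}}\notin \Ff s$ for all non-empty $u\in (\Pre X)\setminus X$, which is the first assertion.

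The second assertion is essentially free: as noted in Section~\ref{sec:four}, $s=\psi_{X}(y)$ has infinitely many palindromic prefixes and is therefore closed under reversal, so $(u^{\sim})^{e_{s}} = (u^{e_{s}})^{\sim}$ cannot be a factor of $s$ either. There is no genuine obstacle here: the argument is merely a uniform packaging of $\omega$-power freeness over a finite index set, with the real work having been done upstream in Theorem~\ref{thm:unifrecur}, Proposition~\ref{prop:alter}, and the preceding Corollary; the only ``care'' point is to ensure that $\varepsilon$ is excluded from the maximum so that each $e_{u}$ is well-defined.
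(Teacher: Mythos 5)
Your proof is correct and follows essentially the same route as the paper's: establish that each proper non-empty prefix $u$ of a word of $X$ occurs in $s$, invoke $\omega$-power freeness (via Lemma~\ref{lem:uromega}) to get a minimal exponent $e_u$ with $u^{e_u}\not\in\Ff s$, take the maximum over the finite set $(\Pre X)\setminus(X\cup\{\varepsilon\})$, and use closure under reversal for the last claim. The only cosmetic difference is that you route the power-freeness through the preceding Corollary rather than citing the lemma directly, which changes nothing of substance.
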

\begin{thm}Let $s= \psi_X(y)$, with $y= y_1\cdots y_n \cdots \in X^{\omega}$, $y_i\in X$, $i\geq 1$, be an $X$-AR word. There exists an integer $\nu$ such that for all $h\geq \nu$
the number $S_r(h)$ of right special factors of $s$ of length $h$ has the lower bound $\lambda_X$, i.e.,
\[ S_r(h) \geq \lambda_X.\]
Moreover, any such right special factor of $s$ is of degree $d$.
\begin{proof} In the following we shall set for all $n$, $u_n= \psi_X(y_1\cdots y_n)$.  Let $\ell$ be as in Lemma~\ref{lem:sesqui}, $m_0$ be the minimal integer such that  $ \frac{m_0}{\ell -1} -1 \geq e_s$, and let
$n$ be an integer such that $|u_n|=m \geq m_0$. Let us write $u_n$ as $u_n =w_1\cdots w_m$ with $w_i\in A$, $ i=1, \ldots, m$. Since any word $x\in X$ is persistent in $y$ it follows
that $u_nX \subseteq \Ff s$. Therefore, for any proper prefix $u$ of a word $x\in X$ one has that:
$ u_nu = w_1\cdots w_m u$
is a right special factor of $s$ of order $d$ and length $m+|u|$. This implies that
\begin{equation}\label{eq:emmeu}
w_{|u|+1}\cdots w_mu
\end{equation}
is a right special factor of length $m$. However, for  $u,v \in (\Pre X)\setminus X$, $u\neq v$, one cannot have
\[w_{|u|+1}\cdots w_mu = w_{|v|+1}\cdots w_mv.\]
This is trivial if $|u|=|v|$. If $|u|<|v|$, as $u_n \in PAL$, by Lemma \ref{lem:sesqui} one would derive:
\[w_1\cdots w_{m-|u|} = \alpha^k\alpha'\]
with $k \geq e_s$ and $\alpha$ equal to the reversal of a proper prefix of a word of $X$, which is absurd in view of Proposition \ref{prop:exp}.  Thus one has that all the words of (\ref{eq:emmeu}) with $u\in (\Pre X)\setminus X$, are right special factors of $s$ of length $m$ and order $d$. Since by Lemma \ref{lem:propre} the number of proper prefixes of the words of $X$ is $\lambda_X$ it follows that  the number $S_r(m)$ of right special factors of length $m$ has the lower bound $S_r(m) \geq \lambda_X$.
Thus we have proved the result for all $m=|u_n|\geq m_0$.

Let us now take  $h$ such that   $m <h< m'= |u_{n+1}|$.  We can write 
$u_{n+1} = \zeta w_1\cdots w_m$ for some word $\zeta$. 
Since for any $u\in (\Pre X)\setminus X$, $u_{n+1}u$ is a right special factor of $s$ of length $m'+ |u|$ and order $d$, so is its suffix of length $h$. 
We wish to prove that
all such suffixes of length $h$, for different values of $u$ in  $(\Pre X)\setminus X$, are distinct. Indeed, if two such suffixes were equal,
for instance the ones corresponding to $u,v\in (\Pre X)\setminus X$, then their suffixes of length 
 $m$ would be equal, i.e.,
\[w_{|u|+1}\cdots w_m u = w_{|v|+1}\cdots w_m v,\]
which is absurd as shown above. Hence, $S_r(h) \geq \lambda_X.$
\end{proof}
\end{thm}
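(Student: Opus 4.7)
The plan is to exploit the palindromic prefixes $u_n := \psi_X(y_1 \cdots y_n)$ together with the combinatorial constraints given by Lemma~\ref{lem:sesqui}, Proposition~\ref{prop:exp}, and the count in Lemma~\ref{lem:propre}. First I would fix a threshold $m_0$ large enough that $m_0/(\ell_X - 1) - 1 \geq e_s$, and pick any $n$ with $m := |u_n| \geq m_0$. Writing $u_n = w_1 \cdots w_m$, I will associate to each $u \in (\Pre X) \setminus X$ a candidate right special factor $\xi_u := w_{|u|+1} \cdots w_m u$ of length $m$; by Lemma~\ref{lem:propre} there are exactly $\lambda_X$ such $u$.

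Next I would verify that each $\xi_u$ is right special of order $d$. Persistence of every $x \in X$ in $y$ gives $u_n X \subseteq \Ff s$ (as in the argument that each $u_n$ is bispecial), so $u_n u \in \Ff s$ for every $u \in \Pre X$. Since $X$ is a finite maximal prefix code, the associated tree is full $d$-ary, so any internal node $u$ admits every letter $a \in A$ as a child inside $\Pre X$. Hence $ua$ is a prefix of some word of $X$, which yields $u_n u a \in \Ff s$ and therefore $\xi_u a \in \Ff s$ for every $a \in A$; this shows $\xi_u$ is right special of order $d$.

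The crux is then the pairwise distinctness of the $\xi_u$'s. The case $|u| = |v|$ is immediate, since $\xi_u = \xi_v$ forces $u = v$. In the case $|u| < |v|$, an equality $\xi_u = \xi_v$ is exactly hypothesis~\eqref{eq:piq} of Lemma~\ref{lem:sesqui} applied to the palindrome $u_n$; the lemma would then give $w_1 \cdots w_{m - |u|} = \alpha^k \alpha'$ with $k \geq e_s$ and $\alpha^{\sim}$ a non-empty proper prefix of a word of $X$. But then $\alpha^{e_s}$ would be a factor of $s$, contradicting Proposition~\ref{prop:exp}. Combined with the previous step this yields $S_r(m) \geq \lambda_X$, with each such right special factor of order $d$, for every $m$ of the form $|u_n|$ with $n$ large enough.

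Finally, to cover all sufficiently large lengths, I would interpolate. For $m < h < m' := |u_{n+1}|$, since $u_n$ is a palindromic suffix of $u_{n+1}$, the length-$h$ suffix of each $u_{n+1} u$ with $u \in (\Pre X) \setminus X$ is still right special of order $d$ by the same full-tree argument, and its length-$m$ suffix is precisely $\xi_u$. Hence distinctness at length $h$ reduces to the distinctness at length $m$ already obtained, giving $S_r(h) \geq \lambda_X$ for all $h \geq \nu$, where $\nu$ is the smallest prefix length $|u_n| \geq m_0$. The main obstacle I anticipate is making sure that the exponent $k$ produced by Lemma~\ref{lem:sesqui} really does reach $e_s$ once $m \geq m_0$; this is what dictates the precise shape of the threshold $m_0$, and hence of the integer $\nu$ in the statement.
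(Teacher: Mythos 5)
Your proposal is correct and follows essentially the same route as the paper's proof: the same candidates $w_{|u|+1}\cdots w_m u$ indexed by $(\Pre X)\setminus X$, the same use of Lemma~\ref{lem:sesqui} together with Proposition~\ref{prop:exp} (via the threshold $m_0/(\ell-1)-1\geq e_s$) to get pairwise distinctness, and the same interpolation to lengths $h$ between $|u_n|$ and $|u_{n+1}|$ by reducing to the length-$m$ suffixes. Your explicit full-$d$-ary-tree justification that each candidate has order $d$ is just a spelled-out version of what the paper asserts.
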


\begin{cor} Let $s$ be an $X$-AR word. There exists an integer $\nu$ such that 
the factor complexity $p_s$ of $s$ has for all $n\geq \nu$ the linear lower bound
\[(\card(X)-1)n +c, \  \mbox{with} \  c\in \Int. \]
\end{cor}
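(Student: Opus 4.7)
The plan is to deduce the linear lower bound on $p_s(n)$ directly from the preceding theorem by combining it with the recursive formula~\eqref{eq:rsf}. The preceding theorem gives us two pieces of information for all lengths $h\geq\nu$: first, the number $S_r(h)$ of right special factors of $s$ of length $h$ is at least $\lambda_X = (\card(X)-1)/(d-1)$; second, every such right special factor has order exactly $d$ (this is crucial, as it lets us control the sum in~\eqref{eq:rsf}).

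Next I would apply formula~\eqref{eq:rsf}. For $n\geq \nu$, since every right special factor of length $n$ has order $d$, the only nonzero term in the sum $\sum_{j=0}^{d}(j-1)s_r(j,n)$ is the one with $j=d$, and it equals $(d-1)s_r(d,n)=(d-1)S_r(n)\geq (d-1)\lambda_X = \card(X)-1$. Consequently,
\[ p_s(n+1) - p_s(n) \geq \card(X)-1 \quad \text{for all } n\geq \nu. \]

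Finally, telescoping this inequality from $\nu$ to $n-1$ yields, for $n\geq\nu$,
\[ p_s(n) \geq p_s(\nu) + (\card(X)-1)(n-\nu) = (\card(X)-1)n + c, \]
where $c = p_s(\nu) - (\card(X)-1)\nu \in \Int$, which is the desired bound.

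There is no serious obstacle: the entire argument is a one-line deduction from the previous theorem plus the standard formula~\eqref{eq:rsf}. The only subtlety worth highlighting in the write-up is that we need the full strength of the preceding theorem, namely that right special factors of length $n\geq\nu$ have order precisely $d$ (not merely that there are at least $\lambda_X$ of them), since right special factors of lower order would contribute less to the increment $p_s(n+1)-p_s(n)$.
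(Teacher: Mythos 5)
Your proposal is correct and follows essentially the same route as the paper: combine the lower bound $S_r(n)\geq\lambda_X$ on order-$d$ right special factors with formula~\eqref{eq:rsf} and telescope. The only minor imprecision is your claim that \emph{every} right special factor of length $n$ has order $d$ (the theorem only asserts this for the $\lambda_X$ factors it exhibits), but this is harmless since the remaining terms of the sum in~\eqref{eq:rsf} are nonnegative and the inequality $\sum_{j}(j-1)s_r(j,n)\geq(d-1)\lambda_X=\card(X)-1$ still holds.
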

\begin{proof} From the preceding theorem for all $n\geq \nu$, $s$ has at least  $\lambda_X$ right special factors of length $n$ and order $d$. Therefore, in view of (\ref{eq:rsf}), we can write for all $n\geq \nu$
\[p_s(n) \geq p_s(\nu)+ (n-\nu)\lambda_X (d-1)= p_s(\nu)+ (n-\nu)(\card(X)-1)\]\[= (\card(X)-1)n+c,\]
having set $c= p_s(\nu)-\nu(\card(X)-1)$.
\end{proof}
We shall prove that the factor complexity $p_s$ of an $X$-AR word $s$ is linearly upper bounded
(cf. Theorem \ref{thm:upperbound}). We need some preparatory results and a theorem (cf. Theorem \ref{thm:seed}) which is a suitable extension of a formula of Justin \cite{J} to generalized palindromization maps.

We recall that a positive integer  $p$ is a {\em period} of the word $w=a_1\cdots a_n$, $a_i\in A$,
$1\leq i\leq n$ if the following condition is
satisfied: if  $i$ and $j$ are any integers such that $1\leq i,j\leq n$
and $i\equiv j \pmod{p}$, then $ a_i = a_j$. We shall denote by $\pi(w)$ the minimal period of $w$. 

Let $X$ be a finite prefix code and $\ell_X$ be the maximal length of the words of $X$. We say that
$\psi_X(x_1\cdots x_m)$ with $x_i\in X$, $i\geq 1$, is {\em full} if it satisfies  the three following conditions:
\begin{enumerate}
\item[F1.] For any $x\in X$ there exists at least one integer  $j$ such that  $1\leq j \leq m$ and $x_j=x$.
\item[F2.] $\pi(\psi_X(x_1\cdots x_m)) \geq \ell_X$.
\item[F3.] For all $x\in X$ the longest palindromic prefix of $\psi_X(x_1\cdots x_m)$ followed by $x$
is $\psi_X(x_1\cdots x_{r_x-1})$, where $r_x$ is the greatest integer such that  $1\leq r_x\leq m$ and $x_{r_x}=x$.
\end{enumerate}

\begin{prop}\label{prop:full000} Let $X$ be a finite prefix code, $z\in X^{+}$, and $y\in X$.
If $\psi_X(z)$ is full, then $\psi_X(zy)$ is full.
\end{prop}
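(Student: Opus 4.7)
The plan is to verify the three defining conditions F1, F2, and F3 for $u'=\psi_X(zy)$ in turn, using the hypothesis that $u=\psi_X(z)$ is full. F1 is immediate, since every $x'\in X$ appearing in the factorization $x_1\cdots x_m$ of $z$ also appears in $x_1\cdots x_m y$. For F2, if $\pi(u')<\ell_X$, then since $u$ is a prefix of $u'$ with $|u|\ge\pi(u)\ge\ell_X>\pi(u')$, the period $\pi(u')$ would also be a period of $u$, giving $\pi(u)\le\pi(u')<\ell_X$ and contradicting F2 for $z$.

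The substantive work is F3. The key tool is the elementary observation that if $w$ is a palindrome and $P$ is a palindromic prefix of $w$ with $|P|\le|w|$, then $w$ has period $|w|-|P|$; indeed $P$ is also a palindromic suffix of $w$, and the two occurrences of $P$ at distance $|w|-|P|$ force this period. Fix $x\in X$ and let $P$ be any palindromic prefix of $u'$ with $Px\preceq u'$. I would bound $|P|$ by splitting into three cases: \textbf{(i)} $|P|>|u|$; \textbf{(ii)} $|P|\le|u|$ with $|Px|\le|u|$; \textbf{(iii)} $|P|\le|u|$ with $|Px|>|u|$. Case (ii) is immediate from F3 for $z$: $Px\preceq u$ exhibits $P$ as a palindromic prefix of $u$ followed in $u$ by $x$, giving $|P|\le|\psi_X(x_1\cdots x_{r_x-1})|$. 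In case (iii), let $p:=|u|-|P|$; then $|Px|>|u|$ gives $p<|x|\le\ell_X$. If $p\ge 1$, the lemma says $u$ has period $p$, so $\pi(u)\le p<\ell_X$, contradicting F2. If $p=0$ then $P=u$, and $ux\preceq u'$; since $uy\preceq u'$ and $X$ is a prefix code, this forces $x=y$, so case (iii) with $p=0$ arises only for $x=y$, where it realizes the desired maximum $|P|=|u|$.

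Case (i) is the most delicate and is where I expect the main obstacle. If $|P|\ge|uy|$, then $P$ is a palindrome with $uy$ as a prefix, so by the minimality of palindromic closure $|P|\ge|u'|$, forcing $P=u'$; but then $Px$ is too long to be a prefix of $u'$. Otherwise $|u|<|P|<|uy|$, so $P$ equals $u$ concatenated with the prefix of $y$ of length $k:=|P|-|u|\in[1,|y|-1]$. Writing out the equation $P=P^{\sim}$ and comparing factors further splits into whether $k\le|u|$ (the middle of the equation forces $u$ to have period $k<|y|\le\ell_X$, contradicting F2) or $k>|u|$ (the initial part forces $u$ to equal a factor of $y$ of length at most $|y|-1$, hence $|u|<|y|\le\ell_X$, conflicting with $|u|\ge\ell_X$). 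Combining all cases: for $x\ne y$, cases (i) and (iii) are excluded and case (ii) gives $|P|\le|\psi_X(x_1\cdots x_{r_x-1})|$, realized by $\psi_X(x_1\cdots x_{r_x-1})$; for $x=y$, case (iii) realizes the maximum $|P|=|u|$. This establishes F3 and hence that $\psi_X(zy)$ is full.
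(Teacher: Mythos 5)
Your proof is correct and follows essentially the same route as the paper's: both arguments dispose of F1 and F2 immediately and then show that any palindromic prefix of $\psi_X(zy)$ followed by some $x\in X$ either reduces to F3 for $\psi_X(z)$, equals $\psi_X(z)$ itself (forcing $x=y$ by the prefix-code property), or would impose a period smaller than $\ell_X$ on $\psi_X(z)$, contradicting F2. The only cosmetic difference is that you derive the small period from the border/period lemma for palindromes where the paper invokes Lyndon--Sch\"utzenberger, and you organize the cases by comparing $|P|$ and $|Px|$ with $|\psi_X(z)|$ rather than first identifying $\psi_X(z)$ as the longest proper palindromic prefix of $\psi_X(zy)$.
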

\begin{proof} 
It is clear that $\psi_X(zy)$ satisfies
property F1. Moreover, one has also that $\pi(\psi_X(zy))\geq \ell_X$. Indeed, otherwise
since $\psi_X(z)$ is a prefix of $\psi_X(zy)$, one would derive that $\psi_X(z)$ has a period, and then the minimal period, less than $\ell_X$, which is a contradiction.

Let us first prove that $\psi_X(z)=P $, where $P$ is the longest proper palindromic prefix of $\psi_X(zy)$. Indeed, we can write:
$$ \psi_X(zy) = \psi_X(z)y\lambda = P \mu,$$
with $\lambda, \mu\in A^*$ and $\mu \neq \varepsilon$. One has that $|P|\geq |\psi_X(z)|$ and, moreover,
$|P|<|\psi_X(z)y|$. This last inequality follows from the minimality of the length of palindromic closure.
Let us then suppose that:
$$ P = \psi_X(z)y'= (y')^{\sim}\psi_X(z),$$
with $y' \prec y$. From the Lyndon and Sch\"utzenberger theorem there exist $\alpha,\beta\in A^*$
and $n\in \Nn$ such that  $(y')^{\sim} = \alpha\beta, y'= \beta\alpha$, and $\psi_X(z)= (\alpha\beta)^n\alpha$. Since $\psi_X(z)$ is full, from property F1 one has that $|\psi_X(z)|\geq \ell_X$, so that $n>0$
and $\pi(\psi_X(z))\leq|\alpha\beta|=|y'|<\ell_X$ which is a contradiction. Thus $P=\psi_X(z)$.

From the preceding result one derives that the longest palindromic prefix of $\psi_X(zy)$ followed by
$y$ is $\psi_X(z)$. Now let $x\neq y$ and let $Q$ be the longest palindromic prefix of $\psi_X(zy)$
followed by $x$. We can write:
$$ \psi_X(zy) = \psi_X(z)y\lambda= Qx\delta,$$
with $\delta\in A^*$. From the preceding result one has $|Q|\leq |\psi_X(z)|$. If $|Q|= |\psi_X(z)|$, then, as $X$ is a prefix code, one gets $x=y$, a contradiction. Hence, $|Q|< |\psi_X(z)|$. We have to consider two cases:

\noindent
Case 1. $|Qx|>|\psi_X(z)|$. This implies 
$$ \psi_X(z) = Q x'= (x')^{\sim}Q,$$
with $x' \prec x$. Hence, one would derive $(x')^{\sim}= uv, x'=vu$, and $\psi_X(z)= (uv)^nu$ with
$u,v\in A^*$ and $n>0$. This gives rise to a contradiction, as $\pi(\psi_X(z))\leq |uv|<\ell_X$.

\noindent
Case 2. $|Qx|\leq |\psi_X(z)|$. Let $z= x_1\cdots x_m$ with $x_i \in X$, $1\leq i \leq m$. In this case $Q$ is the longest palindromic prefix of $\psi_X(z)$ followed by $x$, namely $\psi_X(x_1\cdots x_{r_x-1})$.

In conclusion,  $\psi_X(zy)$ satisfies  conditions F1--F3 and is then full.
\end{proof}

\begin{lemma}\label{lemma:arnx0} Let  $s$ be an $X$-AR word and  $ \psi_X(z)$, with $z\in X^*$,
 be a  prefix of  $s$. There exists an integer $\nu_s$ such that if $|\psi_X(z)| \geq \nu_s$, 
then for any prefix  $u= \psi_X(zyx_1\cdots x_k)$ of $s$ with $k\geq 0$, $y, x_1,\ldots, x_k\in X$,   
$y\neq x_i$, $1\leq i\leq k$,
the longest palindromic prefix of  $u$ followed by $y$ is $\psi_X(z)$.  
\end{lemma}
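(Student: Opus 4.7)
The plan is to reduce the statement to the fullness machinery introduced before Proposition~\ref{prop:full000}. Writing the directive word of $s$ as $y=y_1 y_2\cdots$ with $y_i\in X$, since $X$ is a prefix code and $z\in X^*$ is a prefix of $s$, unique factorization forces $z=y_1\cdots y_n$ for some $n\geq 0$. I would exhibit a threshold $n_0$ such that $\psi_X(y_1\cdots y_{n_0})$ is full, then set $\nu_s=|\psi_X(y_1\cdots y_{n_0})|$ and invoke Proposition~\ref{prop:full000} iteratively to propagate fullness from $y_1\cdots y_{n_0}$ to $zyx_1\cdots x_k$ for any admissible suffix.

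At the base case, the three conditions F1, F2, F3 are secured as follows. Persistence of each $x\in X$ in the directive word of the $X$-AR word $s$ guarantees F1 for $n_0$ large enough. Since $s$ is alternating and hence not ultimately periodic by Proposition~\ref{prop:alter}, one has $\pi(\psi_X(y_1\cdots y_n))\to\infty$, and further enlarging $n_0$ delivers F2, i.e., $\pi(\psi_X(y_1\cdots y_{n_0}))\geq\ell_X$. Once fullness holds at $n_0$, iterating Proposition~\ref{prop:full000} yields fullness of $\psi_X(zyx_1\cdots x_k)$ for every prefix $z$ with $|\psi_X(z)|\geq\nu_s$. Condition F3 applied at the letter $y$ then concludes the argument: since $x_i\neq y$ for all $1\leq i\leq k$, the last occurrence of $y$ in the factorization $(y_1,\ldots,y_n,y,x_1,\ldots,x_k)$ is at position $n+1$, whence the longest palindromic prefix of $\psi_X(zyx_1\cdots x_k)$ followed by $y$ is $\psi_X(y_1\cdots y_n)=\psi_X(z)$.

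The main obstacle is establishing F3 at the base case. Suppose $P$ is a palindromic prefix of $\psi_X(y_1\cdots y_{n_0})$ followed by some $x\in X$, with $|P|>|\psi_X(y_1\cdots y_{r_x-1})|$, where $r_x$ is the last index in $\{1,\ldots,n_0\}$ with $y_{r_x}=x$. Comparing $P$ and $\psi_X(y_1\cdots y_{r_x-1})$, both of which are palindromic prefixes of $\psi_X(y_1\cdots y_{n_0})$ followed by $x$, and exploiting palindrome symmetry together with the Lyndon-Sch\"utzenberger theorem in the spirit of the proof of Proposition~\ref{prop:full000}, I would extract a conjugacy-type equation forcing a period of $\psi_X(y_1\cdots y_{n_0})$ bounded above by $\ell_X-1$, contradicting F2. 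With F3 at the base case in hand, the remaining steps are routine applications of Proposition~\ref{prop:full000} and the property F3 of fullness.
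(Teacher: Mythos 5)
Your overall architecture (establish fullness at some base prefix, propagate it with Proposition~\ref{prop:full000}, then read off the conclusion from condition F3) inverts the paper's logical order: in the paper, fullness of the prefixes of an $X$-AR word is \emph{deduced from} this lemma (that is exactly what Corollary~\ref{thm:full} does), and Proposition~\ref{prop:full000} only propagates fullness once a full prefix exists. This is not formally circular, but it means that the entire substantive content of the lemma is concentrated in what you call the ``main obstacle,'' namely condition F3 at the base case, and your sketch of that step contains a genuine gap. If $P$ is a palindromic prefix of $w=\psi_X(y_1\cdots y_{n_0})$ followed by $x$ with $|P|>|\psi_X(y_1\cdots y_{r_x-1})|$, the argument you allude to (ruling out $P=\psi_X(y_1\cdots y_j)$ via the prefix-code property and the minimality of palindromic closure, then applying Lyndon--Sch\"utzenberger) yields $P=\psi_X(y_1\cdots y_j)p=p^{\sim}\psi_X(y_1\cdots y_j)$ with $p$ a proper non-empty prefix of $y_{j+1}$, hence $\psi_X(y_1\cdots y_j)=(\lambda\mu)^n\lambda$ with $|\lambda\mu|=|p|\leq\ell_X-1$. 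This is a period of the \emph{proper prefix} $\psi_X(y_1\cdots y_j)$ (equivalently of $P$), not of $w$ itself, and a prefix of $w$ may well have minimal period far smaller than $\pi(w)$; so there is no contradiction with F2, and your claimed conclusion does not follow. Worse, for small $j$ the equation $\psi_X(y_1\cdots y_j)=(\lambda\mu)^n\lambda$ is entirely harmless, so no amount of enlarging $n_0$ alone rescues the argument.

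The correct way to close this gap --- and the paper's actual proof --- is quantitative and does not use fullness at all: one invokes the $\omega$-power-freeness exponent $e_s$ of Proposition~\ref{prop:exp} and sets $\nu_s=(e_s+1)\ell_X$. Then $|\psi_X(z)|\geq\nu_s$ forces $(n+1)\ell_X\geq|\psi_X(y_1\cdots y_j)|\geq(e_s+1)\ell_X$, hence $n\geq e_s$, so $(p^{\sim})^{e_s}$ would be a factor of $s$ although $p^{\sim}$ is the reversal of a proper prefix of a word of $X$ --- contradicting Proposition~\ref{prop:exp}. Note that the paper runs this argument directly on $u=\psi_X(zyx_1\cdots x_k)$, locating the hypothetical longer palindromic prefix $P$ followed by $y$ between consecutive closures $\psi_X(zx_0\cdots x_i)$ and $\psi_X(zx_0\cdots x_{i+1})$ (with $x_0=y$), which also explains why the threshold is imposed on $|\psi_X(z)|$ rather than on the index $n_0$. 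Your F1 and F2 base-case arguments are fine (the observation that $\pi(\psi_X(y_1\cdots y_n))\to\infty$ for a non-ultimately-periodic word is a legitimate, even slightly cleaner, substitute for the paper's power-freeness argument for F2), but as written the proposal does not prove the lemma.
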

\begin{proof} Let us denote by $P$ the longest palindrome such that $Py$ is a  prefix of $u$. We wish to prove that for a sufficiently large $\psi_X(z)$ one has that $P= \psi_X(z)$. Let us then suppose by contradiction  that $|P|> |\psi_X(z)|$. Setting $x_0=y$, 
 there exists
an integer  $i$,   $-1\leq i \leq k-1$ such that
\begin{equation}\label{eq:arra0}
|\psi_X(zx_0\cdots x_i)| \leq |P| \leq |\psi_X(zx_0\cdots x_{i+1})|,
\end{equation}
where for $i=-1$ the  l.h.s.~of the preceding equation reduces to $|\psi_X(z)|$. Let us prove that for
$-1 \leq i \leq k$, $P \neq \psi_X(zx_0\cdots x_i)$. This is trivial for  $i=-1$ and $i=k$ as $|P|< |u|$. For
$0\leq i\leq k-1$ the result is a consequence of the fact that $P$ is followed by $y$ whereas $\psi_X(zx_0\cdots x_i)$ is followed by $x_{i+1}$. As $X$ is a prefix code, one would obtain $y=x_{i+1}$
which is a contradiction.  Hence in $(\ref{eq:arra0})$ the inequalities are strict. If  
\[ |\psi_X(zx_0\cdots x_i)x_{i+1}| \leq |P| < |\psi_X(zx_0\cdots x_{i+1})|,\]
then one would contradict the definition of palindromic closure. Thus the only possibility is that
there exists  $-1\leq i \leq k-1$ such that
\[ P = \psi_X(zx_0\cdots x_i)p = p^{\sim} \psi_X(zx_0\cdots x_i)\]
where $p$ is a proper non-empty prefix of  $x_{i+1}$. This implies that there exist words $\lambda, \mu\in A^*$ and an integer $n\geq 0$ such that
\begin{equation}\label{eq:arra1}
 p^{\sim} = \lambda\mu, \  p= \mu\lambda, \ \psi_X(zx_0\cdots x_i)= (\lambda\mu)^n\lambda.
\end{equation}
Let us set  $\nu_s = (e_s+1)\ell_X$, where $e_s$ has been defined in Proposition \ref{prop:exp}
and $\ell_X$ is the maximal length of the words of $X$.
Let us suppose that  $|\psi_X(z)|\geq \nu_s$.  Since  \[(e_s+1)\ell_{X} \leq |\psi_X(z)| \leq |\psi_X(zx_0\cdots x_i)| \leq (n+1)\ell_X,\]
one would derive  $n\geq e_s$ and $p^{n}\not\in \Ff s$ which  contradicts (\ref{eq:arra1}) and
this concludes the proof.
\end{proof}

\begin{cor} 
\label{thm:full}
Let $s=\psi_{X}(x_{1}x_{2}\cdots x_{n}\cdots)$ be an $X$-AR word, with $x_{i}\in X$, $i\geq 1$. There exists an integer $m\geq 1$ such that for all $n\geq m$,
$\psi_{X}(x_{1}\cdots x_{n})$ is full.
\end{cor}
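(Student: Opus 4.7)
The plan is to exhibit a single index $m$ at which $\psi_X(x_1\cdots x_m)$ satisfies all three conditions F1--F3, and then appeal to Proposition~\ref{prop:full000} to obtain fullness for every $n\geq m$. So the task reduces to locating one good $m$.

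For F1, the persistence of each letter of $X$ in the directive sequence (built into the definition of an $X$-AR word) guarantees an index $N_1$ such that every element of $X$ has already appeared among $x_1,\ldots,x_{N_1}$; then F1 holds for all $m\geq N_1$. For F2, by Proposition~\ref{prop:alter} the word $s$ is not ultimately periodic. If the minimal periods $p_n=\pi(\psi_X(x_1\cdots x_n))$ stayed bounded, a pigeonhole argument would isolate a single value $p$ that is a period of arbitrarily long prefixes of $s$, forcing $s$ itself to have period $p$, a contradiction. Hence $p_n\to\infty$, and there is some $N_2$ with $p_n\geq \ell_X$ for all $n\geq N_2$.

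For F3, let $\nu_s$ be the constant provided by Lemma~\ref{lemma:arnx0}, and choose $N_3$ with $|\psi_X(x_1\cdots x_{N_3})|\geq \nu_s$. Using persistence once more, pick $m\geq\max\{N_1,N_2,N_3\}$ so large that every $x\in X$ has at least one occurrence among $x_{N_3+1},\ldots,x_m$. For this $m$, and for each $x\in X$, the largest index $r_x\leq m$ with $x_{r_x}=x$ satisfies $r_x-1\geq N_3$, so $|\psi_X(x_1\cdots x_{r_x-1})|\geq \nu_s$; and by the maximality of $r_x$, every letter $x_{r_x+1},\ldots,x_m$ differs from $x$. Setting $z=x_1\cdots x_{r_x-1}$ and $y=x$ puts $\psi_X(x_1\cdots x_m)$ exactly in the form required by Lemma~\ref{lemma:arnx0}, which then identifies $\psi_X(x_1\cdots x_{r_x-1})$ as the longest palindromic prefix of $\psi_X(x_1\cdots x_m)$ followed by $x$. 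Since this works uniformly for every $x\in X$, condition F3 is established.

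Combining F1--F3 at this $m$ gives a first full $\psi_X(x_1\cdots x_m)$, and Proposition~\ref{prop:full000} propagates fullness inductively to every $n\geq m$. The main technical point I expect is F2, where one must upgrade Proposition~\ref{prop:alter} (mere non-ultimate-periodicity of $s$) to the uniform lower bound $\pi(\psi_X(x_1\cdots x_n))\geq \ell_X$ on the minimal period; the verification of F3 is essentially bookkeeping on the last occurrences $r_x$, arranged so that Lemma~\ref{lemma:arnx0} applies to all $x\in X$ at once.
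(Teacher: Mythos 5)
Your proof is correct and follows the same overall strategy as the paper: verify F1--F3 at a single sufficiently large index $m$ and then propagate via Proposition~\ref{prop:full000}. Your treatments of F1 and F3 coincide with the paper's (for F3, the paper likewise pushes the last occurrences $r_x$ far enough that $|\psi_X(x_1\cdots x_{r_x-1})|\geq\nu_s$ and invokes Lemma~\ref{lemma:arnx0}; your bookkeeping with $N_3$ just makes this explicit). The one genuine divergence is F2. The paper derives $\pi(\psi_X(x_1\cdots x_m))\geq\ell_X$ from $\omega$-power-freeness of $s$, itself obtained from uniform recurrence and non-periodicity via Lemma~\ref{lem:uromega}: there is a $p$ such that every non-empty factor $u$ with $|u|<\ell_X$ satisfies $u^p\notin\Ff s$, so no prefix of length at least $(p+1)\ell_X$ can have minimal period below $\ell_X$. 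You instead argue that if the minimal periods $p_n$ of the prefixes stayed bounded, pigeonhole would isolate one period valid for arbitrarily long prefixes, making $s$ periodic and contradicting Proposition~\ref{prop:alter}. Both arguments work; yours is marginally more elementary in that it bypasses uniform recurrence and uses only non-ultimate-periodicity. One small point to make explicit: passing from ``$(p_n)$ unbounded'' to ``$p_n\geq\ell_X$ for all $n\geq N_2$'' uses that $\pi(u_n)$ is non-decreasing in $n$ (every period of a word is a period of each of its prefixes); alternatively, a single index with $p_m\geq\ell_X$ already suffices, since Proposition~\ref{prop:full000} handles all larger indices.
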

\begin{proof} Since $s$ is an $X$-AR word, for any $x\in X$
there exist infinitely many integers $j$ such that $x=x_j$. We can take the integer $m$ so large  that for any $x\in X$ there exists at least one integer $j$ such that $1\leq j\leq m$, $x_j=x$, and, moreover, for each $x\in X$
 \[ |\psi_X(x_1\cdots x_{r_x-1})| > \nu_s. \]
This assures, in view of preceding lemma, that for each $x\in X$ the longest palindromic prefix of $\psi_X(x_1\cdots x_m)$ followed by $x$ is $\psi_X(x_1\cdots x_{r_x-1})$. Finally,  there exists an integer $m$ such that $\pi(\psi_X(x_1\cdots x_m))\geq \ell_X$.
Indeed, $s$ is 
$\omega$-power free, so that there exists an integer $p$ such that for any non-empty factor $u$ of $s$ of length  $|u|<\ell_X$
one has $u^p\not\in \Ff s$. Thus if for all $m$,  $\pi(\psi_X(x_1\cdots x_m))< \ell_X$ we reach a
contradiction by taking $m$ such that  $|\psi_X(x_1\cdots x_m)|\geq (p+1)\ell_X$.
Hence there exists an integer $m$ such that conditions F1--F3 are all satisfied, so that $\psi_{X}(x_{1}\cdots x_{m})$ is full. By  Proposition \ref{prop:full000}, 
$\psi_{X}(x_{1}\cdots x_{n})$ is also full, for all $n\geq m$.
\end{proof}

\begin{lemma}
\label{thm:psi-1}
Let $z\in X^{*}$ and $y\in X$. Suppose
that $\psi_{X}(z)$ has some palindromic prefixes followed by $y$, and let $\Delta_y$ be the longest one. Then
\[\psi_{X}(zy)=\psi_{X}(z)\Delta_y^{-1}\psi_{X}(z)\;.\]
\end{lemma}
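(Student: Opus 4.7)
The plan is to compute $\psi_X(zy) = (\psi_X(z)y)^{(+)}$ directly by identifying the longest palindromic suffix (LPS) of $\psi_X(z)y$ as $y^{\sim}\Delta_y y$, and then reading off the palindromic closure in the desired form. I will write $P = \psi_X(z)$ and $\Delta = \Delta_y$ for brevity. Since $\Delta y$ is a prefix of the palindrome $P$, its reversal $y^{\sim}\Delta$ is a suffix of $P$; hence $y^{\sim}\Delta y$ is a palindromic suffix of $Py$ of length $|\Delta|+2|y|$.

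The main step is to show that $y^{\sim}\Delta y$ is in fact the LPS of $Py$. Let $Q$ denote this LPS; the preceding observation gives $|Q|\geq |\Delta|+2|y|>2|y|$, so $Q$ ends in $y$ and, being a palindrome, starts with $y^{\sim}$. Therefore $Q = y^{\sim}R y$ for some palindrome $R$. Writing $Py = wQ$, we get $P = wy^{\sim}R$; reversing and using $P = P^{\sim}$ yields $P = Ry w^{\sim}$. In particular $Ry$ is a prefix of $P$, so $R$ is a palindromic prefix of $\psi_X(z)$ followed by $y$. By maximality of $\Delta$ we obtain $|R|\leq|\Delta|$; combined with $|R| = |Q|-2|y|\geq|\Delta|$, this forces $R = \Delta$ (as equal-length prefixes of $P$), whence $Q = y^{\sim}\Delta y$.

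It remains to compute the closure. From $Py = wQ$ with $wy^{\sim}\Delta = P$, the definition of right palindromic closure gives $\psi_X(zy) = wQw^{\sim}$. Reversing $wy^{\sim}\Delta = P$ yields $\Delta y w^{\sim} = P$, so that
\[
\psi_X(zy) = (wy^{\sim}\Delta)\, y\, w^{\sim} = P\cdot y \cdot (\Delta y)^{-1}P = P\,\Delta^{-1}P,
\]
which is the claim. The main obstacle is the LPS identification: exploiting the palindromicity of $\psi_X(z)$ to turn the middle block $R$ of $Q$ into a palindromic prefix of $\psi_X(z)$ followed by $y$ is the crucial move that lets us invoke the maximality of $\Delta_y$; the rest is routine algebraic manipulation with palindromic closures.
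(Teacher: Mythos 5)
Your proof is correct and follows essentially the same route as the paper: identify the longest palindromic suffix of $\psi_X(z)y$ as $y^{\sim}\Delta_y y$ and then read off the closure, the only difference being that you spell out the maximality argument that the paper states in one line. (A trivial slip: the inequality $|Q|\geq|\Delta_y|+2|y|>2|y|$ should be $\geq 2|y|$ when $\Delta_y=\varepsilon$, which changes nothing since all you need is $|Q|\geq 2|y|$.)
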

\begin{proof}
Since $\Delta=\Delta_y$ is the longest palindromic prefix of $\psi_{X}(z)$ followed by $y$, it is also the longest palindromic suffix preceded by $y^{\sim}$, so that $ y^{\sim}\Delta y$ is the longest palindromic suffix of $\psi_{X}(z)y$. Thus, letting
$\psi_{X}(z)=\Delta y\zeta= \zeta^{\sim} y^{\sim}\Delta$ for a suitable $\zeta$, we obtain
\[\psi_{X}(zy)=\left(\psi_{X}(z)y\right)^{(+)}= \zeta^{\sim} y^{\sim}\Delta y\zeta
=\psi_{X}(z)\Delta^{-1}\psi_{X}(z)\;. \qedhere\]
\end{proof}

Let $B$ be a finite alphabet and $\mu: B\rightarrow X$ be a bijection to a 
prefix code $X\subseteq A^*$. For $z\in X^*$, we define a morphism
$\varphi_{z}: B^*\rightarrow A^*$ by setting for all $b\in B$
\begin{equation}\label{eq:jusgen1}
\varphi_{z}(b)=\psi_{X}(z\mu(b))\psi_{X}(z)^{-1}= \psi_X(z)\Delta_{\mu(b)}^{-1},
\end{equation}
where for the last equality we used Lemma \ref{thm:psi-1}.

\begin{thm}
\label{thm:seed}
Let $s=\psi_{X}(x_{1}x_{2}\cdots x_{n}\cdots)$ be an X-AR word with $x_i\in X$, $i\geq 1$.     If $z=x_{1}\cdots x_{m}$ is 
 such that $u_{m}=\psi_{X}(z)$ is full and
    $\mu$, $\varphi_z$ are defined as
    above, then for any $w \in B^*$ the following holds:
    \[\psi_X(z\mu(w))=\varphi_z(\psi(w))\psi_X(z)\;.\]
\end{thm}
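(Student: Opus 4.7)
I would prove this by induction on $|w|$. The base case $w=\varepsilon$ is immediate, as both sides reduce to $\psi_{X}(z)$. For the inductive step, write $w=w'b$ with $b\in B$ and set $x=\mu(b)\in X$. By the inductive hypothesis, $\psi_{X}(z\mu(w'))=\varphi_{z}(\psi(w'))\psi_{X}(z)$. Since $\psi_{X}(z)$ is full, Proposition~\ref{prop:full000} ensures that $\psi_{X}(z\mu(w'))$ is full too, so Lemma~\ref{thm:psi-1} yields
\[\psi_{X}(z\mu(w')x)=\psi_{X}(z\mu(w'))(\Delta')^{-1}\psi_{X}(z\mu(w')),\]
where $\Delta'$ is the longest palindromic prefix of $\psi_{X}(z\mu(w'))$ followed by $x$. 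The whole proof reduces to identifying $\Delta'$ explicitly.

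The key auxiliary statement is: for every $s\in X^{*}$, the palindromic prefixes of $\psi_{X}(zs)$ of length at least $|\psi_{X}(z)|$ are exactly the words $\psi_{X}(zs')$ with $s'$ a prefix of $s$ in $X^{*}$. I would prove this by a secondary induction on $|s|$; writing $s=s'y$, the palindromic prefixes of length at most $|\psi_{X}(zs')|$ are inherited from $\psi_{X}(zs')$ and handled by the inner induction, so the decisive point is to show that the only palindromic prefix of $\psi_{X}(zs'y)$ longer than $\psi_{X}(zs')$ is $\psi_{X}(zs'y)$ itself. This is the main obstacle, and it is where fullness is crucial: a palindromic prefix of intermediate length $|\psi_{X}(zs')|+k$ with $1\leq k<|y|$ would force $\psi_{X}(zs')$ to admit $k$ as a period, contradicting condition F2 (which, through Proposition~\ref{prop:full000}, gives $\pi(\psi_{X}(zs'))\geq\ell_{X}\geq|y|>k$); while a palindromic prefix of length greater than $|\psi_{X}(zs')|+|y|$ would contain $\psi_{X}(zs')y$ as a prefix and hence equal $\psi_{X}(zs'y)$ by the minimality of palindromic closure.

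Granting the auxiliary statement, and using the prefix property of $X$ (which forces the letter of $w'$ immediately after $u$ to equal $b$ whenever $\psi_{X}(z\mu(u))$ is followed by $x$ in $\psi_{X}(z\mu(w'))$), the palindromic prefixes of $\psi_{X}(z\mu(w'))$ of length $\geq|\psi_{X}(z)|$ followed by $x$ are precisely the $\psi_{X}(z\mu(u))$ with $ub\preceq w'$. When $b$ appears in $w'$, the longest such $u$ is the prefix of $w'$ ending just before the last occurrence of $b$; calling this $u_{b}$ and noting $|u_{b}|<|w|$, the main inductive hypothesis applied to $u_{b}$ gives $\Delta'=\varphi_{z}(\psi(u_{b}))\psi_{X}(z)=\varphi_{z}(\Delta_{b})\psi_{X}(z)$, where $\Delta_{b}$ denotes the longest palindromic prefix of $\psi(w')$ followed by $b$. (The opposite case, when $b$ does not occur in $w'$, is handled analogously via condition F3, which gives $\Delta'$ as the longest palindromic prefix of $\psi_{X}(z)$ followed by $x$.)

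Substituting this value of $\Delta'$ into the Lemma~\ref{thm:psi-1} expression---observing that $\Delta_{b}$, being a palindromic prefix of the palindrome $\psi(w')$, is also a suffix, so that $\varphi_{z}(\Delta_{b})\psi_{X}(z)$ is indeed a suffix of $\psi_{X}(z\mu(w'))$---and invoking the classical Justin identity $\psi(w'b)=\psi(w')\Delta_{b}^{-1}\psi(w')$, the equality $\psi_{X}(z\mu(w'b))=\varphi_{z}(\psi(w'b))\psi_{X}(z)$ follows by a direct cancellation of common factors, completing the induction.
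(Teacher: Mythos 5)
Your proof is correct and follows essentially the same route as the paper's: induction on $|w|$, a case split on whether $b$ occurs in $w'$, Lemma~\ref{thm:psi-1} together with the propagation of fullness (Proposition~\ref{prop:full000}) to identify the longest palindromic prefix followed by $\mu(b)$, Justin's formula for $\psi$, and the final morphism cancellation. Your auxiliary characterization of the palindromic prefixes of $\psi_X(zs)$ is sound (it reuses the period argument from condition F2 and the minimality of palindromic closure), but it largely duplicates what condition F3 already guarantees for $\psi_{X}(z\mu(w'))$ once Proposition~\ref{prop:full000} is applied, which is how the paper shortcuts that step.
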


\begin{proof}
    In the following we shall  use the readily verified property that if $\gamma: B^*\rightarrow A^*$
    is a morphism and $v$ is a suffix of $u\in B^*$,
    then $\gamma(uv^{-1})=\gamma(u)\gamma(v)^{-1}$.
    
    We will prove the theorem by induction on $|w|$. It is 
    trivial that for $w=\varepsilon$ the claim is true since 
    $\psi(\varepsilon)=\varepsilon=\varphi_z(\varepsilon)$.
    Suppose that for all the words shorter than $w$, the statement holds.
    For $|w| > 0$, we set $w=vb$ with $b\in B$, and let $y=\mu(b)$. 
    
    First we consider the case $|v|_{b}\neq 0$. We can then write $v=v_{1}bv_{2}$ 
    with $|v_{2}|_{b}=0$.
    Since $\psi_{X}(z)$ is full, so is $\psi_{X}(z\mu(v))$; hence
    $\psi_{X}(z\mu(v_{1}))$ is the longest 
    palindromic prefix (resp.~suffix) followed (resp.~preceded) by $y$ (resp.~$ y^{\sim}$) in $\psi_{X}(z\mu(v))$.
    Therefore, by Lemma~\ref{thm:psi-1} we have
\begin{equation}\label{caso1}
    \psi_{X}(z\mu(v)y)=
  \psi_{X}(z\mu(v))\psi_{X}(z\mu(v_{1}))^{-1}\psi_{X}(z\mu(v))
\end{equation}
and, as $\psi(v_1)$ is the longest palindromic prefix (resp. suffix) followed (resp. preceded) by $b$ in $\psi(v)$,
\begin{equation}
\label{eq:psi}
\psi(vb)=\psi(v)\psi(v_1)^{-1}\psi(v)\;.
\end{equation}
By induction we have:
\[\psi_{X}(z\mu(v))=\varphi_z(\psi(v))\psi_{X}(z)\;,\quad
	 \psi_{X}(z\mu(v_{1}))=\varphi_z(\psi(v_{1}))\psi_{X}(z)\;.
\]
Replacing in (\ref{caso1}), and by (\ref{eq:psi}), we obtain 
    \begin{eqnarray*}    \psi_{X}(z\mu(v)y)&=&\varphi_z(\psi(v))\varphi_z(\psi(v_1))^{-1}\varphi_z(\psi(v))\psi_X(z)\\
    &=&\varphi_z(\psi(v)\psi(v_1)^{-1}\psi(v))\psi_X(z)\\
    &=&\varphi_z(\psi(vb))\psi_{X}(z)\;, 
    \end{eqnarray*}
    which was our aim.

Now suppose that $|v|_b=0$. As $\psi_{X}(z)$ is full, the longest palindromic prefix of $\psi_{X}(z)$ which is followed by $y$ is
$\Delta_y=\psi_{X}(x_{1}\cdots x_{r_{y}-1})$, where $r_{y}$ is the greatest integer such that $1\leq r_y\leq m$ and $x_{r_y}=y$. By Lemma~\ref{thm:psi-1} we obtain
\begin{equation}
\label{eq:incaso2}
\psi_X(z\mu(v)y)=\left(\psi_X(z\mu(v))y\right)^{(+)}=\psi_X(z\mu(v))\Delta_y^{-1}\psi_X(z\mu(v))\;.
\end{equation}
By induction, this implies
\begin{equation}
\label{eq:caso2ind}
\psi_X(z\mu(v)y)=\varphi_z(\psi(v))\psi_X(z)\Delta_y^{-1}\varphi_z(\psi(v))\psi_X(z)\;.
\end{equation}
From (\ref{eq:jusgen1}) it follows
\[\varphi_z(b)=\psi_X(zy)\left(\psi_X(z)\right)^{-1}=\psi_X(z)\Delta_y^{-1}\;.\]

Moreover, since $\psi(v)$ has no palindromic prefix (resp.~suffix) followed (resp.~preceded) by $y$ one has
\begin{equation}
\label{eq:psi2}
\psi(vb)=\psi(v)b\psi(v)\;.
\end{equation}
Thus from (\ref{eq:caso2ind}) we obtain
\begin{eqnarray*}
\psi_X(z\mu(v)y)&=&\varphi_z(\psi(v))\varphi_z(b)\varphi_z(\psi(v))\psi_X(z)\\
&=&\varphi_z(\psi(v)b\psi(v))\psi_X(z)\\
&=&\varphi_z(\psi(vb))\psi_X(z)\;,
\end{eqnarray*}
    which completes the proof.
\end{proof}

\begin{cor}
Every $X$-AR word is a morphic image of a standard Arnoux-Rauzy word over an alphabet $B$ of the same cardinality as $X$.
\end{cor}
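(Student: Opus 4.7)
The plan is to pick a suitable ``seed'' $z$ from a directive word of $s$ by means of Corollary~\ref{thm:full}, to introduce the associated morphism $\varphi_z$, and then to take the limit in the formula of Theorem~\ref{thm:seed} as the second argument exhausts the remainder of the directive word. This will exhibit $s$ as the image under $\varphi_z$ of an ordinary standard Arnoux-Rauzy word over an alphabet of size $\card(X)$.

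Concretely, write $s=\psi_X(x_1x_2\cdots x_n\cdots)$ with $x_i\in X$ for all $i\geq 1$. By Corollary~\ref{thm:full} there is some $m\geq 1$ such that $\psi_X(z)$ is full for $z=x_1\cdots x_m$. Choose any alphabet $B$ with $\card(B)=\card(X)$, fix a bijection $\mu:B\to X$, and define the morphism $\varphi_z:B^*\to A^*$ as in (\ref{eq:jusgen1}). Form the infinite word
\[t=\mu^{-1}(x_{m+1})\mu^{-1}(x_{m+2})\cdots\in B^{\omega}.\]
Since $s$ is an $X$-AR word, every $x\in X$ occurs as some $x_i$ for infinitely many $i$, so every letter of $B$ appears infinitely often in $t$, and hence $\psi(t)$ is a standard Arnoux-Rauzy word over $B$.

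Theorem~\ref{thm:seed} then gives, for every finite prefix $w\preceq t$, the identity
\[\psi_X(z\mu(w))=\varphi_z(\psi(w))\,\psi_X(z).\]
As $|w|\to\infty$, the left-hand side exhausts $s$, while $\varphi_z(\psi(w))$ is a prefix of $\psi_X(z\mu(w))$ and hence of $s$. Each letter image $\varphi_z(b)=\psi_X(z)\Delta_{\mu(b)}^{-1}$ is non-empty, because $\Delta_{\mu(b)}$ is a \emph{proper} palindromic prefix of $\psi_X(z)$ (being followed inside $\psi_X(z)$ by $\mu(b)$); together with $|\psi(w)|\to\infty$, this forces $|\varphi_z(\psi(w))|\to\infty$. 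Passing to the limit thus yields $s=\varphi_z(\psi(t))$, exhibiting $s$ as a morphic image of the standard Arnoux-Rauzy word $\psi(t)$ over $B$.

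The only delicate point is the limit step: one must check that the words $\varphi_z(\psi(w))$ form an increasing chain of prefixes of $s$ whose lengths tend to infinity, so that their limit is well defined and equals $s$. Both requirements are immediate from the observations above, so the whole argument effectively collapses to \emph{choose $m$ via Corollary~\ref{thm:full}, invoke Theorem~\ref{thm:seed}, and pass to the limit}.
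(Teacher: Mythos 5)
Your proposal is correct and follows essentially the same route as the paper: choose $m$ via Corollary~\ref{thm:full} so that $\psi_X(x_1\cdots x_m)$ is full, apply Theorem~\ref{thm:seed} to the prefixes $\psi_X(z\mu(w))$, and pass to the limit to obtain $s=\varphi_z(\psi(t))$ with $t$ a standard Arnoux-Rauzy word over $B$. The extra care you take with the limit step (monotonicity of the prefixes $\varphi_z(\psi(w))$ and unboundedness of their lengths, via the non-emptiness of each $\varphi_z(b)$) is a valid elaboration of what the paper leaves implicit.
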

\begin{proof}
Let $s=\psi_{X}(x_{1}x_{2}\cdots x_{n}\cdots)$ be an $X$-AR word with $x_i\in X$, $i\geq 1$, and let 
$x_{i}=\mu(b_{i})$ for all $i\geq 1$, where $\mu:B\to X$ is a bijection. By the preceding theorem, there exists an integer $m\geq 1$ such that, setting $z=x_{1}\cdots x_{m}$, for all $w\in B^{*}$ we have $\psi_{X}(z\mu(w))=\varphi_{z}(\psi(w))\psi_{X}(z)$. Hence for all $k\geq m$ we have
\[\psi_{X}(x_{1}\cdots x_{k})=
\varphi_{z}(\psi(b_{m+1}\cdots b_{k}))\psi_{X}(z)\;,\]
so that taking the limit of both sides as $k\to\infty$, we get
\[s=\varphi_{z}(\psi(b_{m+1}b_{m+2}\cdots b_{n}\cdots))\;.\]
The assertion follows, as each letter of $B$ occurs infinitely often in the word
$b_{m+1}b_{m+2}\cdots b_n \cdots$.
\end{proof}

\begin{example}
Let $X=\{aa,ab,b\}$, $B=\{a,b,c\}$, and
$\mu:B\to X$ be defined by $\mu(a)=ab$, $\mu(b)=b$, and $\mu(c)=aa$.
Let $s$ be the $X$-AR word
\[s=\psi_{X}((abbaa)^{\omega})=ababaaababaababaaabababaaababaababaaaba\cdots\;.\]
Setting $z=abbaa$, it is easy to verify that the prefix $\psi_{X}(z)=ababaaababa$ of $s$ is full, so that $s=\varphi_{z}\left(\psi\left((abc)^{\omega}\right)\right)$, where
$\varphi_{z}(a)=ababaaababa$, $\varphi_{z}(b)=ababaaab$, and
$\varphi_{z}(c)=ababaa$.
\end{example}

Let $s= \psi_X(x_1\cdots x_n\cdots)$ be an $X$-AR word with $x_i\in X$, $i\geq 1$, and let $m_0$ be the minimal integer
such that  $u_{m_0}=\psi_X(x_1\cdots x_{m_0})$ is full. For all $j\geq 0$ we shall set
\[ \alpha_j= u_{m_0+j} \ \mbox{and} \ \ n_j= |\alpha_j|.\]
\begin{thm}\label{thm:upperbound} Let $s$ be an $X$-AR word. Then  the factor complexity of
$s$ is linearly upper bounded. More precisely for all $n\geq n_0$
\[ p_s(n) \leq 2\card(X) n -\card(X).\]
\end{thm}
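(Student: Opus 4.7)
The plan is to use the recurrence formula~\eqref{eq:rsf} together with the morphic representation of $s$ furnished by the preceding Corollary. Setting $k := \card(X)$, the Corollary gives $s = \varphi_{z}(\tau)$ for some standard Arnoux-Rauzy word $\tau$ on an alphabet $B$ of cardinality $k$, and from~\eqref{eq:jusgen1} we have $\varphi_{z}(b) = \psi_{X}(z)\Delta_{\mu(b)}^{-1}$, which is a prefix of the palindrome $\alpha_{0} = \psi_{X}(z)$. Thus $s$ is a concatenation $\varphi_{z}(c_{1})\varphi_{z}(c_{2})\cdots$ of $k$ distinct prefixes of $\alpha_{0}$, of respective lengths $L_{b} := |\alpha_{0}| - |\Delta_{\mu(b)}|$ for $b \in B$.

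For the standard Arnoux-Rauzy word $\tau$, it is well known that each prefix of length $m$ is the unique right special factor of that length in $\tau$, and is of order $k$; therefore $p_{\tau}(m+1) - p_{\tau}(m) = k - 1$. To bound the analogous difference $p_{s}(n+1) - p_{s}(n)$ via~\eqref{eq:rsf}, I would classify the right special factors of $s$ of each length $n$ and show that
\[ \sum_{j=0}^{d} (j-1)\, s_{r}(j, n) \leq 2k - 1 \]
for all $n \geq \nu$ large enough. The right special factors split into two families: (i) a single factor of order $k$ obtained by suitably extending the image $\varphi_{z}(v)$ of the length-$m$ prefix $v$ of $\tau$ (where $m$ is determined by $n$ and the block lengths), contributing $k-1$ to the above sum; and (ii) at most $k$ additional right special factors of order at most $2$, arising from the $k$ distinct block lengths of $\varphi_{z}$: when a length-$n$ occurrence of a factor ends near a block boundary whose position varies with the next letter $b \in B$ of $\tau$, the corresponding right extensions in $s$ can differ.

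Once this bound on right special factors is established, summing the recurrence~\eqref{eq:rsf} yields $p_{s}(n) \leq (2k-1)n + c$ for some $c \in \Int$ and all $n \geq \nu$. Choosing $n_{0} \geq \nu$ large enough that $(2k-1)n_{0} + c \leq 2kn_{0} - k$, the inequality $(2k-1)n + c \leq 2kn - k$ then holds for all $n \geq n_{0}$, proving the theorem.

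The main obstacle is step (ii): precisely identifying the additional right special factors produced by the block-length variation and confirming that their number is indeed at most $k$ and their orders are at most $2$. This requires a careful analysis of how the morphism $\varphi_{z}$'s $k$ nested prefixes of $\alpha_{0}$ interact with the factor structure of $\tau$ to produce branching in $s$ only at the controlled positions where the extension first sees the boundary of the subsequent block.
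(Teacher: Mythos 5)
There is a genuine gap, and you have in fact named it yourself: step (ii), the claim that besides the ``main'' right special factors there are at most $\card(X)$ additional ones, each of order at most $2$, is precisely the hard part of your plan and is nowhere established. Nothing in the paper's machinery (nor in your sketch) controls how the $\card(X)$ nested prefixes $\varphi_{z}(b)=\alpha_{0}\Delta_{\mu(b)}^{-1}$ of the palindrome $\alpha_{0}$ can desynchronize and create branching; without a proof that the extra branching is this tame, the bound $\sum_{j}(j-1)s_{r}(j,n)\leq 2\card(X)-1$ is an unsupported assertion. There is also a concrete bookkeeping error in step (i): a right special factor of $s$ has order at most $d=\card(A)$ (the order counts distinct letters $a\in A$ with $ua\in\Ff s$), and for a finite maximal prefix code $X\neq A$ one has $\card(X)=1+(d-1)\lambda_X>d$; hence no single right special factor can contribute $\card(X)-1$ to the sum in \eqref{eq:rsf}. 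In the paper the contribution $\card(X)-1$ to the lower bound comes from $\lambda_X$ distinct right special factors, each of order $d$, so your structural picture of ``one special factor of order $\card(X)$'' does not match the actual situation.

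The paper's proof avoids right special factors entirely. It uses Theorem \ref{thm:seed} to write, for a full prefix $\alpha_{j}$ of length $n_{j}$ and a suitable $p$ (furnished by uniform recurrence), $\alpha_{j+p}=\alpha_{j}\Delta_{\mu(b_{1})}^{-1}\alpha_{j}\cdots\alpha_{j}\Delta_{\mu(b_{p})}^{-1}\alpha_{j}$, so that every factor of $s$ of length $n_{j}$ occurs inside one of the at most $\card(X)$ distinct overlap words $\alpha_{j}\Delta_{\mu(b)}^{-1}\alpha_{j}$, each contributing at most $n_{j}-1$ factors of that length; this gives $p_{s}(n_{j})\leq\card(X)n_{j}-\card(X)$ at the special lengths, and the general bound follows from strict monotonicity of $p_{s}$ together with $n_{j+1}<2n_{j}$. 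If you want to salvage your approach you would need to prove the structure of the right special factors of $s$ from scratch; the covering argument is the route the paper takes precisely because it sidesteps that analysis.
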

\begin{proof} We shall first prove that for all $j\geq 0$
\begin{equation}\label{eq:piesse1}
 p_s(n_j) \leq \card(X)n_j -\card(X).
\end{equation}
Let $\mu$ be a bijection of an alphabet $B$ and $X$. We set  $z_j= x_1\cdots x_{m_0+j}$
and consider the morphism  $\varphi_{z_j}: B^* \rightarrow A^*$ defined, in view of (\ref{eq:jusgen1}),   for all $b\in B$ as:
\[\varphi_{z_j}(b) = \alpha_j \Delta_{\mu(b)}^{-1},\]
where $\alpha_j= \psi_X(z_j)$ and $\Delta_{\mu(b)}$ is the longest palindrome such that $\Delta_{\mu(b)}\mu(b)$ (resp. $(\mu(b))^{\sim}\Delta_{\mu(b)}$) is a prefix (resp. suffix) of  $\alpha_j$. 

Since $s$ is uniformly recurrent, there exists an integer $p$ such that all factors of $s$ of length $n_j$
are factors of $\alpha_{j+p}$.
Hence, there exist $p$ letters $b_1,\ldots, b_p \in B$ such that $\alpha_{j+p}= \psi_X(z_j\mu(b_1)\cdots \mu(b_p))$. 
 By Theorem \ref{thm:seed} one has
\[\psi_X(z_j\mu(b_1)\cdots \mu(b_p)) = \alpha_j\Delta_{\mu(b_1)}^{-1} \alpha_j\Delta_{\mu(b_2)}^{-1}\cdots \alpha_j\Delta_{\mu(b_p)}^{-1}\alpha_j.\]
Thus $\alpha_j$ covers $\alpha_{j+p}$  and the overlaps between two consecutive occurrences
of $\alpha_j$ in $\alpha_{j+p}$ are given by  $\Delta_{\mu(b_i)}$, $1\leq i \leq p$. Any factor of $s$ of length $n_j$ will be a factor of two consecutive overlapping occurrences of $\alpha_j$, i.e., of
\begin{equation}\label{eq:overl1}
 \alpha_j\Delta_{\mu(b_i)}^{-1} \alpha_j, \ i=1,\ldots, p. 
\end{equation}
For any $1\leq i \leq p$ the number of distinct factors in (\ref{eq:overl1}) is at most  $n_j-|\Delta_{\mu(b_i)}|\leq n_j-1$. Since $\mu(B)= X$  and the number of distinct consecutive overlapping occurrences of $\alpha_j$ in $\alpha_{j+p}$ is at most $\card(X)$, equation (\ref{eq:piesse1}) is readily derived.

Now let $n$ be any integer $n\geq n_0$ such that $n\neq n_k$ for all $k\geq 0$.  There exists an integer $j$ such that  $n_j< n <   n_{j+1}$. Since $s$ is not periodic, by a classic result of Morse and Hedlund  (see \cite[Theorem~1.3.13]{LO2}) the factor complexity $p_s$ is strictly increasing with $n$. Moreover,  as $n_{j+1}<2n_j<2n$, one has by (\ref{eq:piesse1}):
\[p_s(n)<p_s(n_{j+1})\leq  \card(X)n_{j+1}-\card(X)<2\card(X)n -\card(X),\]
which concludes the proof.
\end{proof}

\section{Conservative  maps}\label{sec:six}

Let $A$ be an alphabet of cardinality $d>1$ and let $X$ be a code over $A$. We say that the palindromization map $\psi_X$ is \emph{conservative}
if 
\begin{equation}\label{eq:con1}
 \psi_X(X^*) \subseteq X^*.
\end{equation}
When  $X=A$,  the palindromization map $\psi$ is trivially always conservative. In the general case $\psi_X$ may be non conservative.

\begin{example}\label{ex:cons} Let $X=\{ab, ba\}$. One has $\psi_X(ab)= aba \not \in X^*$, so that $\psi_X$ is not conservative. In the case $Y=\{aa, bb\}$ one easily verifies that  $\psi_Y(Y^*) \subseteq Y^*$. If
$Z=\{a, ab\}$ one has that for any word $w\in Z^*$, $\psi_Z(w) \in aA^*\setminus A^*bbA^*$, with $A=\{a, b\}$, so that it can be uniquely factorized by the elements of $Z$. This implies that $\psi_Z$ is conservative.
\end{example}

The following result shows that a prefix code having a conservative palindromization map allows a natural generalization of properties P2 and P3 of Proposition~\ref{prop:basicp}, in addition to the ones for P1 and P4 shown in Propositions~\ref{prop:pref} and~\ref{prop:prefcode}.

\begin{prop}
\label{prop:p2e3}
Let $X$ be a prefix code such that $\psi_{X}$ is conservative, and $p,w\in X^{*}$ with $p$ a prefix of $\psi_{X}(w)$.
The following hold:
\begin{enumerate}
\item $p^{(+)}$ is a prefix of $\psi_{X}(w)$ and $p^{(+)}\in X^*$.
\item If $p$ is a palindrome, then $p=\psi_{X}(u)$ for some prefix $u\in X^*$ of $w$.
\end{enumerate}
\end{prop}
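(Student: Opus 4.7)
The plan is to prove both claims simultaneously by induction on the length $n$ of the $X$-factorization $w=x_{1}\cdots x_{n}$, the case $n=0$ being trivial. For the inductive step, set $w'=x_{1}\cdots x_{n-1}$, so that $\psi_{X}(w)=(\psi_{X}(w')x_{n})^{(+)}=\mu_{n}Q_{n}\mu_{n}^{\sim}$, where $Q_{n}=LPS(\psi_{X}(w')x_{n})$ and $\mu_{n}Q_{n}=\psi_{X}(w')x_{n}$. If $|p|\leq|\psi_{X}(w')|$, then $p$ is a prefix of $\psi_{X}(w')$ in $X^{*}$ (using Proposition~\ref{prop:pref}), and both conclusions follow at once from the inductive hypothesis.

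In the remaining case $|p|>|\psi_{X}(w')|$ I would first derive a structural decomposition of $p$ common to both claims. By conservativity $\psi_{X}(w')\in X^{*}$; since both $\psi_{X}(w'),p\in X^{*}$ with $\psi_{X}(w')$ a prefix of $p$, right unitarity of $X^{*}$ (available because $X$ is prefix) yields $p=\psi_{X}(w')q$ with $q\in X^{+}$. Writing $q=z_{1}\cdots z_{l}$ with $z_{i}\in X$, one has $z_{1}$ a prefix of $q$ and $q$ itself a prefix of $x_{n}\mu_{n}^{\sim}$; the prefix-code property then forces $z_{1}=x_{n}$, so $p=\mu_{n}Q_{n}q'$ with $q'=z_{2}\cdots z_{l}\in X^{*}$ a prefix of $\mu_{n}^{\sim}$.

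Claim~2 now follows immediately: since $p$ is a palindrome extending $\mu_{n}Q_{n}=\psi_{X}(w')x_{n}$, minimality of the palindromic closure gives $|p|\geq|\psi_{X}(w)|$, hence $p=\psi_{X}(w)$ and one takes $u=w$. For Claim~1, if $q'=\varepsilon$ then $p=\mu_{n}Q_{n}$, $LPS(p)=Q_{n}$, and $p^{(+)}=\psi_{X}(w)$, which lies in $X^{*}$ by conservativity. If $q'\neq\varepsilon$, write $\mu_{n}=\mu_{n}''q'^{\sim}$ (since $q'$ is a prefix of $\mu_{n}^{\sim}$); then $p=\mu_{n}''\cdot q'^{\sim}Q_{n}q'$, and $q'^{\sim}Q_{n}q'$ is visibly a palindromic suffix of $p$ of length $|Q_{n}|+2|q'|$. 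The crux is to show that this \emph{is} $LPS(p)$: if some palindromic suffix $R$ of $p$ were strictly longer, then since the last $|q'|$ letters of $p$ are $q'$, $R$ would end in $q'$; being a palindrome, $R$ would also begin with $q'^{\sim}$, so $R=q'^{\sim}Mq'$ with $M$ a palindrome of length $>|Q_{n}|$. A direct positional check would show $M$ to be a palindromic suffix of $\mu_{n}Q_{n}=\psi_{X}(w')x_{n}$ strictly longer than $Q_{n}=LPS(\mu_{n}Q_{n})$, a contradiction. Consequently $p^{(+)}=\mu_{n}''\,q'^{\sim}Q_{n}q'\,(\mu_{n}'')^{\sim}=\psi_{X}(w)$, establishing both parts of Claim~1.

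The principal obstacle is this maximality argument for $LPS(p)$: one must verify carefully that the ``inner'' palindrome $M$ aligns as a suffix of $\mu_{n}Q_{n}$ of length strictly greater than $|Q_{n}|$, since this is the sole place where the defining property of $\psi_{X}(w)$ as a palindromic closure enters the non-palindromic case of Claim~1. Everything else reduces to routine bookkeeping from the decomposition $p=\mu_{n}Q_{n}q'$.
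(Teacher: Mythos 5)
Your proof is correct, and its first half coincides with the paper's own argument: both use conservativity together with right unitarity of $X^{*}$ to write $p=\psi_{X}(v)\zeta$ with $\zeta\in X^{*}$ for a suitable prefix $v$ of $w$ (your $w'=x_{1}\cdots x_{n-1}$ plays the role of the paper's maximal prefix $v$ with $\psi_{X}(v)\preceq p$), and both use the prefix-code property to see that a non-empty $\zeta$ must begin with the next code word, so that $\psi_{X}(v)x_{n+1}\preceq p$. You diverge at the finish. The paper never computes $LPS(p)$: from $\psi_{X}(v)x_{n+1}\preceq p\preceq\psi_{X}(vx_{n+1})$ it deduces the two inequalities $|(\psi_{X}(v)x_{n+1})^{(+)}|\leq|p^{(+)}|$ and $|p^{(+)}|\leq|(\psi_{X}(v)x_{n+1})^{(+)}|$, and concludes $p^{(+)}=\psi_{X}(vx_{n+1})$ from the uniqueness of the palindromic closure; this disposes of what you call the principal obstacle in two lines. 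Your route instead identifies $LPS(p)=q'^{\sim}Q_{n}q'$ explicitly, and the maximality argument you sketch does go through (cancel the common suffix $q'$ of $R$ and $p$ to exhibit a palindromic suffix $M$ of $\mu_{n}Q_{n}$ longer than $Q_{n}$, contradicting $Q_{n}=LPS(\mu_{n}Q_{n})$); it costs more bookkeeping but yields as a by-product the exact shape of the longest palindromic suffix of every such prefix $p$, and it handles the degenerate case $q'=\mu_{n}^{\sim}$ (i.e.\ $p=\psi_{X}(w)$) consistently. Your induction on the number of code words versus the paper's direct choice of the maximal $v$ is an inessential organizational difference.
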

\begin{proof}
Let $w=x_{1}x_{2}\cdots x_{k}$ with $x_{i}\in X$ for $1\leq i\leq k$, and let $v$ be the longest prefix of $w$ in $X^{*}$ such that $\psi_{X}(v)$ is a prefix of $p$; we can write $v=x_{1}\cdots x_{n}$ or set $n=0$ if $v=\varepsilon$. Thus $p= \psi_X(v)\zeta$ with $\zeta\in A^*$. Since $\psi_X$ is conservative one has
$\psi_X(v)\in X^*$. Moreover, as $X$ is a prefix code, $X^*$ is right unitary, so that one has  $\zeta\in X^*$. If $\zeta= \varepsilon$, then $p=\psi_{X}(v)=p^{(+)}$ and there is nothing to prove. Let us then suppose $\zeta \neq \varepsilon$. Since $\psi_X(v)x_{n+1}$, as well as $p$,  is a prefix of $\psi_X(w)$
and $X$ is a prefix code, one has that $\zeta \in x_{n+1}X^*$. Thus
 $\psi_{X}(v)x_{n+1}$ is a  prefix of $p$. 
 
 From the definition of palindromic closure it follows that  $ |(\psi_X(v)x_{n+1})^{(+)}|$ $\leq$ $|p^{(+)}|$. By the maximality of $n$, we also obtain that $p$ is a (proper) prefix of $(\psi_{X}(v)x_{n+1})^{(+)}=\psi_{X}(x_{1}\cdots x_{n+1})$, so that $|p^{(+)}| \leq |(\psi_X(v)x_{n+1})^{(+)}|$.
 Thus $|p^{(+)}| =|(\psi_X(v)x_{n+1})^{(+)}|$. Since $p^{(+)}$ is a palindrome of minimal length having $\psi_X(v)x_{n+1}$ as a prefix,  from the uniqueness of palindromic closure it follows that $p^{(+)}=\psi_X(vx_{n+1})$. Hence, $p^{(+)}$ is a prefix of $\psi_X(w)$, and $p^{(+)}\in X^*$ as $\psi_X$
 is conservative.
 
If $p$ is a palindrome and $p\neq\psi_{X}(v)$, then the argument above shows that $p^{(+)}=p=\psi_{X}(vx_{n+1})$, which is absurd by the maximality of $n$.
\end{proof}
The following proposition gives a sufficient condition which assures that $\psi_X$ is conservative.

\begin{prop} \label{prop:suff}Let $X\subseteq PAL$ be an  infix and weakly overlap-free code.
Then $\psi_X$ is conservative.
\end{prop}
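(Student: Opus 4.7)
My plan is to proceed by induction on the length $n$ of the unique $X$-factorization of a word $w\in X^{*}$, aiming to establish that $\psi_{X}(w)\in X^{*}$. The case $n=0$ is immediate, and for $n=1$ we have $w=x\in X\subseteq \PAL$, so $\psi_{X}(x)=x^{(+)}=x\in X$. For the inductive step, assume $v=\psi_{X}(w)\in X^{*}$; for $x\in X$ it then suffices to prove that $(vx)^{(+)}\in X^{*}$.

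Let $vx=x_{1}x_{2}\cdots x_{k}$ be the unique $X$-factorization (so $x_{k}=x$), and let $Q=LPS(vx)$, so that writing $vx=\alpha Q$ we have $(vx)^{(+)}=\alpha Q\alpha^{\sim}$. Since $X\subseteq \PAL$, we have $X^{\sim}=X$, and hence $X^{*}$ is closed under reversal. Thus the whole proof reduces to the following key lemma: $Q$ respects the $X$-factorization of $vx$, i.e., there exists $i$ with $1\leq i\leq k$ such that $Q=x_{i}x_{i+1}\cdots x_{k}$. Granted the lemma, $\alpha=x_{1}\cdots x_{i-1}\in X^{*}$, hence $\alpha^{\sim}\in X^{*}$, and $Q\in X^{*}$, yielding $(vx)^{(+)}\in X^{*}$ and completing the induction.

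The key lemma I would prove by contradiction, assuming that $Q$ begins strictly inside some factor $x_{j}$; write $x_{j}=ts$ with both $t,s$ non-empty, so that $Q=sx_{j+1}\cdots x_{k}$. The case $j=k$ is ruled out at once, since then $Q$ is a proper suffix of $x_{k}$, whereas $x_{k}\in \PAL$ is itself a palindromic suffix of $vx$ longer than $Q$, contradicting maximality. For $j<k$, the identity $Q=Q^{\sim}$ together with $X\subseteq \PAL$ (so each $x_{i}^{\sim}=x_{i}$) gives
\[sx_{j+1}\cdots x_{k}=x_{k}x_{k-1}\cdots x_{j+1}s^{\sim}.\]
Comparing the prefixes of length $|x_{k}|$ of the two sides forces a case split. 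If $|s|\geq|x_{k}|$, then $x_{k}$ appears as a factor of the proper suffix $s$ of $x_{j}\in X$; the infix hypothesis forces $x_{k}=x_{j}$, which yields the contradiction $|s|\geq|x_{k}|=|x_{j}|>|s|$. If $|s|<|x_{k}|$, then $x_{k}=sp$ with $p$ a non-empty prefix of $x_{j+1}\cdots x_{k}$; the infix hypothesis rules out $|p|\geq|x_{j+1}|$ (otherwise $x_{j+1}$ would be a proper factor of $x_{k}$), so $p$ is a non-empty proper prefix of $x_{j+1}\in X$ while $s$ is a non-empty proper suffix of $x_{j}\in X$, contradicting weak overlap-freeness.

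The main obstacle is the case analysis for the key lemma: one must carefully invoke the infix property to eliminate the boundary situations $|s|\geq|x_{k}|$ and $|p|\geq|x_{j+1}|$, so that only the genuine overlap configuration remains, which is then excluded by weak overlap-freeness. Once the lemma is established, the closure of $X^{*}$ under reversal (which is where $X\subseteq \PAL$ is really used) makes the inductive step essentially automatic.
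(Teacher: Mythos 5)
Your proof is correct and follows essentially the same route as the paper's: the same induction on the number of code words, and the same three-way case analysis (two cases killed by the infix property, the genuine overlap killed by weak overlap-freeness) to show that the longest palindromic suffix of $\psi_X(w)x$ starts at a boundary of the $X$-factorization. Packaging this as a separate "key lemma" and invoking $X^{\sim}=X$ to handle the reversed prefix is only a cosmetic reorganization of the argument in the paper.
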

\begin{proof} We shall prove that for all $n\geq 0$ one has that $\psi_X(X^n)\subseteq X^*$.  The proof
is by induction on the integer $n$. The base of the induction is true.  Indeed the case $n=0$ is trivial and for $n=1$, since $X\subseteq PAL$,  one has $\psi_X(X)= X$.
Let us then suppose the result true up to $n$ and prove it for $n+1$. Let $w\in X^n$ and $x\in X$.
By induction we can write $\psi_X(w)= x'_1\cdots x'_m$, with $x'_i\in X$, $1\leq i \leq m$. Thus:
\begin{equation}\label{eq:codeweakly}
 \psi_X(wx) = (\psi_X(w)x)^{(+)} = (x'_1\cdots x'_mx)^{(+)}.
\end{equation}
Let $Q$ denote the longest palindromic suffix of  $x'_1\cdots x'_mx$. Since $x\in PAL$ we have
$|Q|\geq |x|$. We have to consider two cases:

\vspace{2 mm}

\noindent
Case 1. $|Q|= |x|$.  From (\ref{eq:codeweakly}) and $X\subseteq PAL$, it follows:
\[  \psi_X(wx) = (x'_1\cdots x'_mx)^{(+)}= x'_1\cdots x'_m x x'_m\cdots x'_1.\]
Thus $\psi_X(wx)\in X^*$ and in this case we are done.

\vspace{2 mm}

\noindent
Case 2. $|Q| > |x|$. One has:
\[x'_1\cdots x'_mx = \zeta Q.\]
Since $|Q| > |x|$ and $x,Q\in PAL$, there exists  $1\leq j \leq m$ such that $x'_j= \lambda \mu$, $\lambda, \mu\in A^*$
and
\[ \mu x'_{j+1}\cdots x'_m x = Q = x \eta,\]
with $\eta\in A^*$. We shall prove that $\lambda= \varepsilon$. Indeed,  suppose that  $\lambda\neq \varepsilon$. We have to consider the following subcases:

\vspace{2mm}

\noindent
1) $|x|\leq |\mu|$. This implies that $x$ is a proper factor of $x'_j$ which is a contradiction, since
 $X$ is an infix code.

\vspace{2mm}

\noindent
2) $|x|\geq |\mu x'_{j+1}|$. In this case one has that $x'_{j+1}$ is a factor of $x$ which is a contradiction.

\vspace{2mm}

\noindent
3) $|\mu|< |x| < |\mu x'_{j+1}|$. This implies that $x = \mu p$, where $p$ is a proper prefix of $x'_{j+1}$. Since $\mu$ is a proper suffix of $x'_j$ we reach a contradiction with the hypothesis that $X$ is weakly overlap-free.

\vspace{2 mm}

\noindent
Hence, $\lambda= \varepsilon$ and $\mu= x = x'_j$. Therefore, one has, as $X\subseteq PAL$,
\[ \psi_X(wx) = (x'_1\cdots x'_mx)^{(+)}=x'_1\cdots x'_{j-1}x x'_{j+1}\cdots x'_m x x'_{j-1}\cdots x'_1 \in X^*,\]
which concludes the proof.
\end{proof}

\begin{example} Let $X=\{bab, bcb\}$. One has that $X\subseteq PAL$. Moreover, $X$ is an infix and weakly overlap-free code. From the preceding proposition one has that $\psi_X$ is conservative.
\end{example}

Let us observe that  Proposition \ref{prop:suff}  can be proved  by replacing the requirement $X\subseteq PAL$  with the two conditions: $X= X^{\sim}$ and $\psi_X(X)\subseteq X^*$. However, the following lemma shows that if the code $X$ is prefix these two latter conditions are equivalent to $X\subseteq PAL$.

\begin{lemma} Let $X$ be a prefix code. Then one has:
\[ X\subseteq PAL  \Longleftrightarrow X= X^{\sim}  \ \mbox{and} \ \ \psi_X(X)\subseteq X^*.\]
\end{lemma}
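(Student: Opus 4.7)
The implication $X \subseteq \PAL \Longrightarrow X = X^{\sim} \text{ and } \psi_X(X) \subseteq X^*$ is immediate: every palindrome equals its reversal, and for $x \in X \subseteq \PAL$ one has $\psi_X(x) = x^{(+)} = x$, so $\psi_X(X) = X \subseteq X^*$. The substance lies in the converse, which I sketch below.

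For the converse, assume $X = X^{\sim}$ and $\psi_X(X) \subseteq X^*$, and fix $x \in X$. Writing $x = uQ$ with $Q = LPS(x)$, we have
\[
\psi_X(x) = x^{(+)} = uQu^{\sim} = x \cdot u^{\sim} = u \cdot x^{\sim},
\]
using that $Q$ is a palindrome. My aim is to show $u = \varepsilon$, i.e., that $x = Q \in \PAL$; I argue by contradiction, supposing $u \neq \varepsilon$.

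Since $x^{(+)} \in X^*$ and $x \in X$ is a prefix of $x \cdot u^{\sim} = x^{(+)}$, the right unitarity of $X^*$ (as $X$ is prefix) yields $u^{\sim} \in X^*$, and then $u = (u^{\sim})^{\sim} \in (X^{\sim})^* = X^*$ since $X = X^{\sim}$. Factoring $u^{\sim} = w_1 \cdots w_k$ with $w_i \in X$ and $k \geq 1$, and reversing, I obtain two $X^*$-factorizations of $x^{(+)}$, both of length $k+1$:
\[
x^{(+)} = x \cdot w_1 \cdots w_k = w_k^{\sim} \cdots w_1^{\sim} \cdot x^{\sim}.
\]
Unique decipherability of the code $X$ forces these factorizations to coincide term by term; in particular the last factors match, giving $w_k = x^{\sim}$. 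Hence $u^{\sim}$ ends in $x^{\sim}$, so $|u| = |u^{\sim}| \geq |x|$. But $u$ is a prefix of $x$, so $|u| \leq |x|$, forcing $u = x$ and $Q = \varepsilon$, contradicting the fact that a non-empty word always has a non-empty longest palindromic suffix (at least its final letter).

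The only real obstacle is arranging the two $X^*$-factorizations of $x^{(+)}$ so that unique decipherability delivers $w_k = x^{\sim}$. This uses both halves of the hypothesis essentially: $X$ being prefix, to pass from $x \cdot u^{\sim} \in X^*$ to $u^{\sim} \in X^*$ via right unitarity; and $X = X^{\sim}$, to recognise the reversed decomposition $w_k^{\sim} \cdots w_1^{\sim} \cdot x^{\sim}$ as a genuine factorization in $X^*$.
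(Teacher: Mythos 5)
Your proof is correct and follows essentially the same route as the paper's: the trivial forward direction, the decomposition $x=uQ$ with $Q=LPS(x)$, the identity $x^{(+)}=xu^{\sim}$, right unitarity to obtain $u^{\sim}\in X^*$, and $X=X^{\sim}$ to obtain $u\in X^*$ all coincide with the paper's argument. The only divergence is the final contradiction: the paper gets it immediately from $x=uQ$ with $u\in X^+$ (the first code word in a factorization of $u$ would be a proper prefix of $x\in X$, impossible for a prefix code), whereas you take a slightly longer but equally valid detour through unique decipherability of the two factorizations of $x^{(+)}$ followed by a length comparison.
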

\begin{proof} If  $X\subseteq PAL$, then trivially $X=X^{\sim}$. Moreover, for any $x\in X$ one has
$\psi_X(x)= x^{(+)}= x\in X^*$.
Let us prove the converse.  Suppose that $x\in X$ is not a palindrome. We can write $x= \lambda Q$,
where $Q=LPS(x)$ is the longest palindromic suffix of $x$ and $\lambda \neq \varepsilon$. One has, by hypothesis:
\[\psi_X(x) = x^{(+)}= \lambda Q \lambda^{\sim}= x\lambda^{\sim} \in X^*.\]
Since $X$ is a prefix code, from the right unitarity of $X^*$ one has  $\lambda^{\sim}\in X^*$. As $X=X^{\sim}$
it follows $\lambda\in X^*$. Since $x= \lambda Q$ and $X$ is a prefix code, one derives $\lambda=x$ and $Q=\varepsilon$ which is absurd as $|Q|>0$.
\end{proof}

\begin{prop}\label{prop:cpsi} Let $X\subseteq PAL$ be a prefix code. Then:
\[ \psi_X \  \mbox{ is conservative}  \Longleftrightarrow \  \mbox{for all} \ x\in X, \  LPS(\psi_X(X^*)x)\subseteq X^*.\]
\end{prop}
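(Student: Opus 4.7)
The plan is to prove the two implications separately, using throughout that since $X \subseteq \PAL$, we have $X = X^{\sim}$, and combined with $X$ being prefix this makes $X$ a bifix code. Consequently $X^*$ is both right and left unitary, a fact I will apply repeatedly to move letters across factorization boundaries.

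For the forward implication, fix $w \in X^*$ and $x \in X$, and set $Q = LPS(\psi_X(w)x)$. Write $\psi_X(w)x = \zeta Q$, so that $\psi_X(wx) = (\psi_X(w)x)^{(+)} = \zeta Q \zeta^{\sim}$. By conservativeness, both $\zeta Q = \psi_X(w)x$ and $\zeta Q \zeta^{\sim} = \psi_X(wx)$ lie in $X^*$. Right unitarity (from $X$ prefix) applied to $(\zeta Q)(\zeta^{\sim}) \in X^*$ with $\zeta Q \in X^*$ gives $\zeta^{\sim} \in X^*$, hence $\zeta \in X^*$ since $X = X^{\sim}$. Applying right unitarity once more to $\zeta Q \in X^*$ with $\zeta \in X^*$ yields $Q \in X^*$, as required.

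For the converse, I will argue by induction on $n$ that $\psi_X(X^n) \subseteq X^*$. The base $n = 0$ is immediate. Assume the claim for $n$, and let $w \in X^n$, $x \in X$. Set $Q = LPS(\psi_X(w)x)$ and write $\psi_X(w)x = \zeta Q$, so $\psi_X(wx) = \zeta Q \zeta^{\sim}$. By the hypothesis $LPS(\psi_X(X^*)x) \subseteq X^*$, we have $Q \in X^*$. Also $\zeta Q = \psi_X(w)x \in X^*$ by the inductive hypothesis together with $x \in X$. Now I invoke \emph{left} unitarity of $X^*$: since $X$ is bifix (being prefix and closed under reversal), from $(\zeta)(Q) \in X^*$ and $Q \in X^*$ we conclude $\zeta \in X^*$; then $\zeta^{\sim} \in X^*$ as $X = X^{\sim}$, so $\psi_X(wx) = \zeta Q \zeta^{\sim} \in X^*$.

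The main subtlety is recognizing that the palindromicity hypothesis $X \subseteq \PAL$ upgrades the prefix code $X$ to a bifix code, which is exactly what is needed to transfer the $X^*$-membership through the three pieces $\zeta$, $Q$, $\zeta^{\sim}$ of the palindromic closure. Everything else is bookkeeping with the identity $(\psi_X(w)x)^{(+)} = \zeta Q \zeta^{\sim}$ where $Q$ is the longest palindromic suffix.
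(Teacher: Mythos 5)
Your proof is correct and follows essentially the same route as the paper: in the forward direction you use conservativeness of $\psi_X(w)$ and $\psi_X(wx)$ together with right unitarity and $X=X^{\sim}$ to peel off $\zeta^{\sim}$, $\zeta$, and finally $Q$; in the converse you run the same induction on $n$, using that $X\subseteq\PAL$ plus prefix makes $X$ bifix so that left unitarity extracts $\zeta\in X^*$ from $\zeta Q\in X^*$ and $Q\in X^*$. No gaps.
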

\begin{proof} $(\Rightarrow)$ Let $w\in X^*$. If $w= \varepsilon$, since $X\subseteq PAL$, one has $LPS(x)=x\in X$. Suppose $w\neq \varepsilon$, so that  $w= x_1\cdots x_n$, with $x_i\in X$, $i=1, \ldots, n$. Let $x\in X$ and $Q$ be the longest palindromic suffix of $\psi_X(x_1\cdots x_n)x$. We can write:  $\psi_X(x_1\cdots x_n)x = \delta Q$ with $\delta \in A^*$ and
\[ \psi_X(x_1\cdots x_nx) = (\psi_X(x_1\cdots x_n)x)^{(+)}= \delta Q \delta^{\sim}= \psi_X(x_1\cdots x_n)x\delta^{\sim}.\]
Since $\psi_X$ is conservative, one has $\psi_X(x_1\cdots x_n), \psi_X(x_1\cdots x_nx)\in X^*$,
so that as $X$ is a prefix code from the preceding equation one derives $\delta^{\sim}\in X^*$ and
then  $\delta\in X^*$ because $X\subseteq PAL$. Finally, from the equation $\psi_X(x_1\cdots x_n)x = \delta Q$
 it follows $Q\in X^*$ as $X$ is a prefix code.

\vspace{2 mm}

\noindent
$(\Leftarrow)$ We shall prove that for all $n\geq 0$ one has $\psi_X(X^n)\subseteq X^*$. The result is trivial if $n=0$. For $n=1$ one has that for any $x\in X$, $\psi_X(x)= x^{(+)}= x$ as $X\subseteq PAL$, so that $\psi_X(X)\subseteq X^*$. Let us now by induction suppose that $\psi_X(X^n)\subseteq X^*$ and prove that $\psi_X(X^{n+1})\subseteq X^*$. Let $x_1, \ldots, x_n, x \in X$ and let $Q$ denote the longest palindromic suffix of $\psi_X(x_1\cdots x_n)x$, so that
\[ \psi_X(x_1\cdots x_n)x = \delta Q,\]
with $\delta\in A^*$. The code $X$ is bifix  because $X$ is a prefix code and  $X\subseteq PAL$. Since by hypothesis $Q, \psi_X(x_1\cdots x_n) \in X^*$, from the preceding equation  and the left unitarity of $X^*$, one gets $\delta \in X^*$. 
Moreover, $\delta^{\sim}\in X^*$ since $X\subseteq PAL$. Hence, one has:
\[\psi_X(x_1\cdots x_nx) = (\psi_X(x_1\cdots x_n)x)^{(+)}= \delta Q \delta^{\sim}\in X^*,\]
which concludes the proof.
\end{proof} 

Let $X$ be a code over the alphabet $B$ and $\varphi: A^* \rightarrow B^*$ an injective morphism such that $\varphi(A)= X$. We say that $\psi_X$ is  \emph{morphic-conservative} if for any $w\in A^*$ one has
\begin{equation}\label{eq:con2}
 \varphi(\psi(w)) = \psi_X(\varphi(w)).
\end{equation}

\begin{example} Let $A=\{a,b\}$, $B=\{a,b,c\}$, $X=\{c, bab\}$, and $\varphi: A^*\rightarrow B^*$ be  the injective
morphism defined by $\varphi(a)= c$ and $\varphi(b)= bab$. Let $w= abaa$; one has
$\psi(w) = abaabaaba$, $\varphi(w)= cbabcc$, and
\[ \varphi(\psi(w)) = cbabccbabccbabc = \psi_X(\varphi(w)).\]
As a consequence of Corollary \ref{cor:morfconsA}, one can prove that $\psi_X$ is morphic-conserva\-tive.
\end{example}

\begin{lemma} \label{lemma:morcon00}If $\psi_X$ is morphic-conservative, then it is conservative.
\end{lemma}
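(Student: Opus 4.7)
The plan is to unwrap the definitions and apply the identity $\varphi\circ\psi=\psi_X\circ\varphi$ to an appropriate preimage of a generic element of $X^*$. Since morphic-conservativity is built from an injective morphism $\varphi:A^*\to B^*$ with $\varphi(A)=X$, I expect the result to follow by a direct chase through the definitions, so there is no real obstacle to overcome.

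First I would establish the preliminary observation that $\varphi(A^*)=X^*$. The inclusion $\varphi(A^*)\subseteq X^*$ is immediate from the fact that $\varphi$ is a morphism and $\varphi(A)=X$, so every $\varphi(w)$ is a concatenation of elements of $X$. For the reverse inclusion, since $\varphi$ is injective as a morphism, its restriction to $A$ is a bijection from $A$ onto $X$ (distinct letters of $A$ must map to distinct elements of $X$, and by hypothesis this restriction is surjective). Hence for any $v=x_1\cdots x_n\in X^*$ with $x_i\in X$, letting $a_i\in A$ be the unique letter with $\varphi(a_i)=x_i$, one has $v=\varphi(a_1\cdots a_n)\in\varphi(A^*)$.

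Now I would take an arbitrary $v\in X^*$ and, using what was just shown, write $v=\varphi(w)$ for some $w\in A^*$. Applying the morphic-conservativity equation \eqref{eq:con2} yields
\[
\psi_X(v)=\psi_X(\varphi(w))=\varphi(\psi(w))\in\varphi(A^*)=X^*.
\]
Since $v\in X^*$ was arbitrary, this gives $\psi_X(X^*)\subseteq X^*$, which is exactly the condition \eqref{eq:con1} defining conservativity. The argument is essentially a one-line calculation once the identification $\varphi(A^*)=X^*$ is in place.
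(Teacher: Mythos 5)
Your proof is correct and follows essentially the same route as the paper: the paper likewise takes $u=x_1\cdots x_n\in X^*$, uses the injectivity of $\varphi$ and $\varphi(A)=X$ to write $u=\varphi(a_1\cdots a_n)$, and then applies the identity $\varphi(\psi(w))=\psi_X(\varphi(w))$ to conclude $\psi_X(u)\in X^*$. Your preliminary observation that $\varphi(A^*)=X^*$ just packages the same preimage argument a little more explicitly.
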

\begin{proof} Let $u\in X^*$. The result is trivial if $u=\varepsilon$. If $u$ is not empty let us write
$u=x_1\cdots x_n$, with $x_i\in X$, $i=1,\ldots, n$. Since $\varphi$ is injective, let $a_i\in A$ be the unique letter such that $x_i= \varphi(a_i)$. Therefore, $u= \varphi (a_1\cdots a_n)$. By (\ref{eq:con2}) one has  $\psi_X(u)= \varphi(\psi(a_1\cdots a_n))\in X^*$, which proves the assertion.
\end{proof}

The converse of the preceding lemma is not true in general. Indeed, from the following proposition,  one has that if $\psi_X$ is morphic-conservative, then the words of $X$ have to be palindromes. However, as we have seen in Example \ref{ex:cons}, there are $\psi_X$ which are conservative with a code $X$ whose words are not palindromes.

\begin{prop}\label{prop:bifixpal} If $\psi_X$ is morphic-conservative, then $X\subseteq PAL$ and $X$ has to be a bifix code.
\end{prop}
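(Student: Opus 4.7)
The plan is to prove the two assertions in sequence, first $X\subseteq PAL$ and then bifixity.

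For $X\subseteq PAL$, I would apply the identity $\varphi\circ\psi = \psi_X\circ\varphi$ to the trivial case $w=a$ for each $a\in A$. Since $\psi(a)=a$ and $\varphi(a)\in X$ gives $\psi_X(\varphi(a))=\varphi(a)^{(+)}$, the identity yields $\varphi(a)=\varphi(a)^{(+)}$. A word that coincides with its right palindromic closure is necessarily a palindrome, so $\varphi(a)\in PAL$. Since $X=\varphi(A)$, this proves $X\subseteq PAL$.

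For bifixity, I observe that $X\subseteq PAL$ implies $X=X^{\sim}$, so $X$ is prefix if and only if it is suffix; hence it suffices to show that $X$ is prefix. I would argue by contradiction: assume $\varphi(a)$ is a proper prefix of $\varphi(b)$ for distinct $a,b\in A$, and write $\varphi(b)=\varphi(a)\lambda$ with $\lambda\in B^{+}$. The palindromicity of $\varphi(b)$ forces $\varphi(b)=\lambda^{\sim}\varphi(a)$ as well, so that
\[\varphi(b)\varphi(a)=\lambda^{\sim}\varphi(a)\varphi(a).\]
Applying the morphic-conservative identity to $w=ba$, and using $\psi(ba)=bab$ together with $\psi_X(\varphi(b))=\varphi(b)$ (which holds by the previous step), one obtains
\[\varphi(b)\varphi(a)\varphi(b)=(\varphi(b)\varphi(a))^{(+)}.\]

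The contradiction then comes from a straightforward length comparison. The word $\varphi(a)\varphi(a)$ is a palindromic suffix of $\varphi(b)\varphi(a)$ of length $2|\varphi(a)|$, so the longest palindromic suffix of $\varphi(b)\varphi(a)$ has length at least $2|\varphi(a)|$. Using the general identity $|u^{(+)}|=2|u|-|LPS(u)|$, this forces
\[|(\varphi(b)\varphi(a))^{(+)}|\leq 2(|\varphi(b)|+|\varphi(a)|)-2|\varphi(a)|=2|\varphi(b)|,\]
whereas $|\varphi(b)\varphi(a)\varphi(b)|=2|\varphi(b)|+|\varphi(a)|>2|\varphi(b)|$ since $|\varphi(a)|\geq 1$, a contradiction. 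The argument is essentially computational; the only point requiring care is using the palindromicity of $\varphi(b)$ to rewrite $\varphi(b)\varphi(a)$ and expose the palindromic suffix $\varphi(a)\varphi(a)$.
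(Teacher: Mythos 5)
Your proof is correct and follows essentially the same route as the paper's: the first claim is obtained by evaluating the identity at single letters, and the second by applying it to $w=ba$ and deriving the length contradiction $|(\lambda^{\sim}\varphi(a)\varphi(a))^{(+)}|\leq 2|\varphi(b)|<|\varphi(b)\varphi(a)\varphi(b)|$ from the palindromic suffix $\varphi(a)\varphi(a)$. The only cosmetic difference is that you start from a prefix-code violation and use palindromicity to convert it to the suffix form, whereas the paper assumes the suffix violation directly and deduces prefixness at the end.
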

\begin{proof} Let $a$ be any letter of $ A$ and set $x=\varphi(a)$.  One has from (\ref{eq:con2}) that $\varphi(\psi(a))= \varphi(a)= x= \psi_X(\varphi(a)) = \psi_X(x) = x^{(+)}$. Hence, $x= x^{(+)}\in PAL$, so that all the words of $X$ have to be palindromes.

Let us now prove that $X$ is a suffix code. Indeed,
 suppose by contradiction that there exist words
$x,y \in X$ such that  $y = \lambda x$ with $\lambda\in A^+$. Let $a,b\in A$ be letters such that
$\varphi(a)=x$ and $\varphi(b)=y$. For  $w=ba$ one has:
\[\varphi(\psi(ba)) = \varphi(bab) = yxy,\]
and, recalling that $y\in PAL$,
\[\psi_X(\varphi(ba))= \psi_X(yx)= (yx)^{(+)}= (\lambda xx)^{(+)}.\]
Since $xx\in PAL$, the longest palindromic suffix $Q$ of  $\lambda xx$ has a length $|Q|\geq 2|x|$.
Thus
\[ |(\lambda xx)^{(+)}| \leq |\lambda xx \lambda^{\sim}|= 2|y|< |yxy|,\]
which is absurd. Hence, $X$ has to be a suffix code and then bifix as $X\subseteq PAL$.
\end{proof}

\begin{remark}
As a consequence of Lemma~\ref{lemma:morcon00} and
Proposition~\ref{prop:bifixpal}, every code $X$ having a morphic-conservative $\psi_{X}$ satisfies the hypotheses of Propositions~\ref{prop:pref},~\ref{prop:prefcode},~\ref{prop:prefcode1}, and~\ref{prop:p2e3}, so that all properties P1--P4 in Proposition~\ref{prop:basicp} admit suitable generalizations for $\psi_{X}$. Let us highlight in particular the following:
\end{remark}

\begin{prop}If $\psi_X$ is morphic-conservative, then it is injective.
\end{prop}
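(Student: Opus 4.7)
The plan is to derive injectivity as a direct consequence of the two previous results about morphic-conservative maps. By Proposition \ref{prop:bifixpal}, the hypothesis that $\psi_X$ is morphic-conservative forces $X \subseteq \PAL$ and $X$ to be a bifix code. In particular, $X$ is then a prefix code.

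Having reduced the situation to the prefix case, I would simply invoke Proposition \ref{prop:prefcode}, which asserts that $\psi_X$ is injective whenever $X$ is a prefix code. Chaining the two implications yields the desired conclusion.

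There is no real obstacle here: the proof is essentially a one-line deduction, since all the heavy lifting was already done in Proposition \ref{prop:bifixpal} (ensuring $X$ is prefix) and Proposition \ref{prop:prefcode} (injectivity from the prefix property). The remark preceding the statement already flags precisely this chain of implications, so the proof is meant to be a short corollary-style argument rather than an independent calculation.
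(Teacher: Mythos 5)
Your proof is correct and follows exactly the same route as the paper: Proposition~\ref{prop:bifixpal} gives that $X$ is bifix (hence prefix), and injectivity then follows from the prefix-code case. If anything, your citation of Proposition~\ref{prop:prefcode} (the finite-word version) is the more apt reference, since the paper's own proof points to the infinite-word analogue instead.
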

\begin{proof} From Proposition~\ref{prop:bifixpal} the code $X$ has to be bifix, so that the result follows from 
Proposition~\ref{prop:prefcode1}.
\end{proof}

Let us observe that in the preceding proposition one cannot replace morphic-conservative with conservative. Indeed, for instance, if $X=\{a, ab\}$ then $\psi_X$ is conservative (see Example \ref{ex:cons})  but it is not injective, since $\psi_X(aba)= \psi_X(abab)$.

The following theorem relates the two notions of conservative and morphic-conservative palindromization map.

\begin{thm}\label{thm:mcc} The map $\psi_X$ is morphic-conservative if and only if  $X\subseteq PAL$, $X$ is prefix, and $\psi_X$ is conservative.
\end{thm}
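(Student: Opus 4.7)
The forward direction will be immediate from earlier results: Proposition~\ref{prop:bifixpal} gives $X\subseteq \PAL$ and bifixity (in particular, $X$ is prefix), and Lemma~\ref{lemma:morcon00} gives conservativity of $\psi_X$.

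For the converse, I plan to assume $X\subseteq \PAL$, $X$ prefix, and $\psi_X$ conservative, then prove $\varphi(\psi(w))=\psi_X(\varphi(w))$ by induction on $|w|$. The base case $w=\varepsilon$ is trivial, so I will focus on the step from $w$ to $wa$ with $a\in A$. Writing $\psi(w)a=uQ$ with $Q$ the longest palindromic suffix, so that $\psi(wa)=uQu^{\sim}$, I will use the fact that $X\subseteq \PAL$ forces $\varphi$ to commute with reversal (because each $\varphi(b)$ is a palindrome), obtaining $\varphi(\psi(wa))=\varphi(u)\varphi(Q)\varphi(u)^{\sim}$ with $\varphi(Q)$ a palindrome. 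On the other side, the inductive hypothesis and the fact that $\varphi$ is a morphism yield
\[\psi_X(\varphi(wa))=(\psi_X(\varphi(w))\varphi(a))^{(+)}=(\varphi(\psi(w)a))^{(+)}=(\varphi(u)\varphi(Q))^{(+)}.\]
So the whole induction reduces to showing that $\varphi(Q)$ is the \emph{longest} palindromic suffix of $\varphi(u)\varphi(Q)$.

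This matching of longest palindromic suffixes across $\varphi$ is the main obstacle. Let $R$ denote the longest palindromic suffix of $\varphi(\psi(w)a)$; trivially $|R|\geq|\varphi(Q)|$. I will obtain the reverse inequality in three short steps. First, applying Proposition~\ref{prop:cpsi} in the $(\Rightarrow)$ direction, conservativity forces $R\in X^*$. Second, since $X\subseteq \PAL$ combined with the prefix property makes $X$ bifix (any two palindromes in prefix relation must coincide), $X^*$ is left unitary; hence from $\varphi(\psi(w)a)=vR\in X^*$ with $R\in X^*$ one recovers $v\in X^*$. Third, writing $v=\varphi(u')$ and $R=\varphi(Q')$, injectivity of $\varphi$ yields $\psi(w)a=u'Q'$, and the identity $\varphi(Q'^{\sim})=\varphi(Q')^{\sim}=\varphi(Q')$ together with injectivity forces $Q'\in \PAL$; since $Q$ is the longest palindromic suffix of $\psi(w)a$, this gives $|Q'|\leq|Q|$, i.e.\ $|R|\leq|\varphi(Q)|$.

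Thus $R=\varphi(Q)$, so that
\[\psi_X(\varphi(wa))=\varphi(u)\varphi(Q)\varphi(u)^{\sim}=\varphi(\psi(wa)),\]
which closes the induction. I expect the only nontrivial point to be the three-step argument above; the remainder is a straightforward combination of the inductive hypothesis with the fact that $\varphi$ is an injective morphism whose image is contained in $\PAL$.
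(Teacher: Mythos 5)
Your proof is correct and takes essentially the same route as the paper's: the forward direction via Proposition~\ref{prop:bifixpal} and Lemma~\ref{lemma:morcon00}, and the converse by induction on $|w|$ using the forward direction of Proposition~\ref{prop:cpsi}, the left unitarity of $X^*$ coming from bifixity, and Lemma~\ref{lemma:fipal}. The only difference is bookkeeping---the paper pulls $LPS(\varphi(\psi(w))\varphi(a))$ back to $A^*$ and identifies it as $\varphi(v)\varphi(a)$ with $va=LPS(\psi(w)a)$, whereas you push $LPS(\psi(w)a)$ forward and verify maximality by the same pull-back; note only that your parenthetical ``any two palindromes in prefix relation must coincide'' is not literally true, the correct reason $X$ is suffix being that a proper suffix relation between palindromes reverses to a proper prefix relation, contradicting prefixity.
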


For the proof of the preceding theorem we need the following

\begin{lemma}\label{lemma:fipal} Let $\varphi: A^*\rightarrow B^*$ be an injective morphism and $\varphi(A)= X\subseteq PAL$. For any $w\in A^*$, $\varphi(w^{\sim})= (\varphi(w))^{\sim}$. Thus for any $w\in A^*$,  $\varphi(w)= (\varphi(w))^{\sim}$ if and only if  $w\in PAL$.
\end{lemma}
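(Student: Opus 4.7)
The plan is to prove the two assertions in sequence, noting that the second follows from the first via injectivity of $\varphi$.

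For the first assertion, that $\varphi(w^{\sim}) = (\varphi(w))^{\sim}$ for every $w \in A^*$, I would proceed by a direct computation using the assumption $X \subseteq PAL$. Writing $w = a_1 a_2 \cdots a_n$ with $a_i \in A$, since $\varphi$ is a morphism we get $\varphi(w) = \varphi(a_1) \varphi(a_2) \cdots \varphi(a_n)$ and $\varphi(w^{\sim}) = \varphi(a_n) \cdots \varphi(a_2) \varphi(a_1)$. Using the general identity $(u_1 u_2 \cdots u_n)^{\sim} = u_n^{\sim} \cdots u_2^{\sim} u_1^{\sim}$, which holds in any free monoid, we have $(\varphi(w))^{\sim} = \varphi(a_n)^{\sim} \cdots \varphi(a_1)^{\sim}$. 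Since each $\varphi(a_i)$ lies in $X \subseteq PAL$, each factor satisfies $\varphi(a_i)^{\sim} = \varphi(a_i)$, and the two expressions coincide. The empty word case is immediate.

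For the second assertion, the $(\Leftarrow)$ direction is an immediate application of the first part: if $w \in PAL$ then $w^{\sim} = w$, so $(\varphi(w))^{\sim} = \varphi(w^{\sim}) = \varphi(w)$, i.e.\ $\varphi(w) \in PAL$. For the $(\Rightarrow)$ direction, suppose $\varphi(w) = (\varphi(w))^{\sim}$. By the first part, $(\varphi(w))^{\sim} = \varphi(w^{\sim})$, so $\varphi(w) = \varphi(w^{\sim})$, and injectivity of $\varphi$ yields $w = w^{\sim}$, i.e.\ $w \in PAL$.

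There is no real obstacle here: the argument reduces to the observation that reversal distributes over concatenation as an antimorphism, together with the hypothesis that the images of letters are fixed by reversal. The only place to be careful is in the $(\Rightarrow)$ direction of the second statement, where it is essential that $\varphi$ be injective, since otherwise $\varphi(w) = \varphi(w^{\sim})$ would not imply $w = w^{\sim}$.
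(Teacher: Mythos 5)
Your proof is correct and follows essentially the same route as the paper's: decompose $w$ into letters, use that reversal is an antimorphism together with $\varphi(a_i)\in X\subseteq PAL$ to get $(\varphi(w))^{\sim}=\varphi(w^{\sim})$, and then invoke injectivity of $\varphi$ for the equivalence. No differences worth noting.
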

\begin{proof} The result is trivial if $w=\varepsilon$. Let us suppose $w\neq \varepsilon$ and write
$w$ as $w= a_1\cdots a_n$ with $a_i\in A$, $1\leq i \leq n$. One has  
\[ \varphi(w)= \varphi(a_1)\cdots \varphi(a_n)= x_1\cdots x_n,\]
having set $x_i= \varphi(a_i)\in X$, $1\leq i \leq n$. Since $X\subseteq PAL$, one derives
\[ (\varphi(w))^{\sim}= x_n\cdots x_1= \varphi(w^{\sim}).\]
As $\varphi$ is injective one obtains:
\[ \varphi(w)= (\varphi(w))^{\sim}= \varphi(w^{\sim}) \ \mbox{if and only if} \ \ w= w^{\sim},\]
which concludes the proof.
\end{proof}

\noindent
\begin{proof}[Proof of Theorem \ref{thm:mcc}]
$(\Rightarrow)$ Immediate from Proposition \ref{prop:bifixpal} and Lemma \ref{lemma:morcon00}.

\vspace{2 mm}

\noindent
$(\Leftarrow)$ Let $\varphi: A^*\rightarrow B^*$ be an injective morphism such that $\varphi(A)=X$ is
a prefix code and $X\subseteq PAL$. We wish to prove that for any $w\in A^*$ one has:
\[ \varphi(\psi(w)) = \psi_X(\varphi(w)).\]
The proof is by induction on the length $n$ of $w$. The result is trivial if $n=0$. If $n=1$, i.e., $w=a\in A$,
one has, as $\varphi(a)\in PAL$,
\[ \varphi(\psi(a)) = \varphi(a) = \psi_X(\varphi(a)).\]
Let us then suppose the result true up to the length $n$ and prove it for $n+1$. We can write,  by using
the induction hypothesis and the fact that $\varphi(w)\in X^*$,
$$ \psi_X(\varphi(wa)) = \psi_X(\varphi(w)\varphi(a))= (\psi_X(\varphi(w))\varphi(a))^{(+)}=
(\varphi(\psi(w))\varphi(a))^{(+)}.$$
Let $z=\psi(w)$; we need to show that $(\varphi(z)\varphi(a))^{(+)}=\varphi(\psi(wa))$.
As $\psi_X$ is conservative, by Proposition \ref{prop:cpsi} the longest palindromic suffix $Q$ of $\psi_X(\varphi(w))\varphi(a)$ $ = \varphi(z)\varphi(a)$ belongs to $X^*$. Since $\varphi(a)$ is a palindrome and $X$ is a suffix code, there exists a suffix $v$ of $z$ such that $Q=\varphi(v)\varphi(a)$. Using Lemma~\ref{lemma:fipal} one derives that $va$ is the longest palindromic suffix of $za$, so that, letting $z=uv$,
\[(\varphi(z)\varphi(a))^{(+)}=\varphi(uvau^\sim)
=\varphi((za)^{(+)})=\varphi(\psi(wa)),\]
which concludes the proof.
\end{proof}

\begin{cor}\label{cor:morfconsA} Let  $X$ be a weakly overlap-free and infix code such that  $X\subseteq PAL$. Then $\psi_X$ is morphic-conservative.
\end{cor}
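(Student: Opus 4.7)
The plan is to derive this corollary as a direct consequence of Theorem~\ref{thm:mcc}, which characterizes morphic-conservative maps $\psi_X$ by the conjunction of three conditions: $X\subseteq PAL$, $X$ is prefix, and $\psi_X$ is conservative. Under the hypotheses of the corollary, each of these three conditions can be checked immediately, so the real work has already been absorbed into the preceding results.

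First I would note that $X\subseteq PAL$ is nothing but a part of the hypothesis, so no argument is required. Next, I would verify that $X$ is a prefix code: since $X$ is infix, no word of $X$ is a proper factor of another word of $X$, and in particular no word of $X$ is a proper prefix of another. Hence $X$ is prefix (indeed bifix, although prefix is all that Theorem~\ref{thm:mcc} asks for).

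Finally, for conservativity of $\psi_X$, I would invoke Proposition~\ref{prop:suff}, whose hypotheses---$X\subseteq PAL$, $X$ infix, and $X$ weakly overlap-free---coincide verbatim with those of the corollary. This gives $\psi_X(X^*)\subseteq X^*$, i.e., $\psi_X$ is conservative. Having checked the three conditions of Theorem~\ref{thm:mcc}, we conclude that $\psi_X$ is morphic-conservative.

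There is no genuine obstacle in this proof; the substantive content is already packaged in Proposition~\ref{prop:suff} (which required a case analysis on the length of the longest palindromic suffix of $\psi_X(w)x$ and used the infix and weakly overlap-free assumptions to rule out problematic overlaps) and in the $(\Leftarrow)$ direction of Theorem~\ref{thm:mcc} (where the induction used Lemma~\ref{lemma:fipal} together with the bifix property). The corollary merely combines these results, so the proof will consist of little more than citing them in sequence.
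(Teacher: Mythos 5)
Your proof is correct and follows exactly the route the paper takes: the paper's own proof reads ``Trivial by Proposition~\ref{prop:suff} and Theorem~\ref{thm:mcc}.'' You have merely spelled out the (correct) observation that an infix code is in particular a prefix code, so all three conditions of Theorem~\ref{thm:mcc} are met.
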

\begin{proof} Trivial by Proposition \ref{prop:suff} and Theorem \ref{thm:mcc}.
\end{proof}

\begin{remark}
The hypotheses in the previous corollary that $X$ is a weakly overlap-free and infix code are not necessary in order that $\psi_X$ is morphic-conservative. For instance, let $X$ be the prefix code $X=\{aa, cbaabc\}$. 
One has that $bcX^*\cap PAL =\emptyset$. From this 
one easily verifies that for all $n\geq 0$, if $\psi_X(X^n)\subseteq X^*$, then for $x\in X$, $LPS(\psi_X(X^n)x)\subseteq X^*$.
Thus by using the same argument as in the sufficiency of  Proposition \ref{prop:cpsi}, one has that $\psi_X(X^{n+1})\subseteq X^*$. It follows that $\psi_X$  is conservative and then morphic-conservative by Theorem \ref{thm:mcc}.
\end{remark}

Let $\psi_X$ be a morphic-conservative  palindromization map and $\varphi: A^*\rightarrow B^*$ the injective morphism such that $X=\varphi(A)$ and $\varphi \circ \psi = \psi_X \circ \varphi$.
Since $X$ has to be bifix, $\varphi$ can be extended to a bijection $\varphi: A^{\omega} \rightarrow X^{\omega}$. The extension of $\psi_X$ to $X^{\omega}$ is such that for any $x\in X^{\omega}$
\[ \psi_X(x)= \varphi(\psi(\varphi^{-1}(x))).\]
For any $x\in X^{\omega}$ the word $\psi(\varphi^{-1}(x))$ is an epistandard word over $A$, so that
\[ \psi_X(X^{\omega})= \varphi(Epistand_A).\]
Therefore, one has:
\begin{prop} The infinite words generated by morphic-conservative generalized palindromization maps are images by injective morphisms of the epistandard words.
\end{prop}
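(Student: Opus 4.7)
The plan is essentially to verify the chain of identifications $\psi_X(X^{\omega})=\varphi(\psi(A^{\omega}))=\varphi(Epistand_A)$ already sketched in the paragraph just above the proposition, so that the statement becomes a direct consequence. First I would use Proposition~\ref{prop:bifixpal} to know that $X$ is a bifix code, hence in particular prefix with deciphering delay $0$, so that the map $a_{1}a_{2}\cdots\mapsto\varphi(a_{1})\varphi(a_{2})\cdots$ extends $\varphi$ to a well-defined bijection $\varphi\colon A^{\omega}\to X^{\omega}$ (uniqueness of factorization on $X^{\omega}$ gives injectivity and surjectivity onto $X^{\omega}$).

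Next I would establish the key identity $\psi_{X}(x)=\varphi(\psi(\varphi^{-1}(x)))$ for every $x\in X^{\omega}$. Given such $x$, set $w=\varphi^{-1}(x)\in A^{\omega}$ and consider the finite prefixes $w_{[n]}$. By the morphic-conservative hypothesis on finite words,
\[\varphi(\psi(w_{[n]}))=\psi_{X}(\varphi(w_{[n]}))\quad\text{for all }n\geq 0.\]
Since $\varphi$ is a morphism, $\varphi(w_{[n]})\in X^{*}$ is a prefix of $x$ and a prefix of $\varphi(w_{[n+1]})$; by Proposition~\ref{prop:pref} applied along the factorization of $x$ in $X$, the words $\psi_{X}(\varphi(w_{[n]}))$ form an increasing sequence of prefixes whose limit is, by the very definition of $\psi_{X}$ on $X^{\omega}$, equal to $\psi_{X}(x)$. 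Symmetrically, $\psi(w_{[n]})\to\psi(w)$ in $A^{\omega}$, and since $\varphi$ is length-preserving on prefixes (it multiplies lengths by a bounded amount because $X$ is finite in the cases of interest, but more generally it still maps the prefix limit to the prefix limit because it is a morphism), $\varphi(\psi(w_{[n]}))\to\varphi(\psi(w))$. Equating the two limits yields $\psi_{X}(x)=\varphi(\psi(w))$.

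Passing to images, this gives $\psi_{X}(X^{\omega})=\varphi(\psi(\varphi^{-1}(X^{\omega})))=\varphi(\psi(A^{\omega}))=\varphi(Epistand_{A})$, which is exactly the claim: any infinite word produced by a morphic-conservative $\psi_{X}$ is the image under the injective morphism $\varphi$ of an epistandard word over $A$. The only delicate point is the limit step: one must be careful that the passage from the finite identity $\varphi\circ\psi=\psi_{X}\circ\varphi$ to its infinite counterpart is genuinely legitimate, which in turn relies on $X$ being prefix (so that $\varphi(w_{[n]})\in\Pre x$) and on Proposition~\ref{prop:pref} guaranteeing that the partial palindromizations form a compatible sequence of prefixes in both $A^{\omega}$ and $X^{\omega}$.
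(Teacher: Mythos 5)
Your proposal is correct and follows essentially the same route as the paper: the paper's own justification is exactly the paragraph preceding the proposition, namely that bifixity (from Proposition~\ref{prop:bifixpal}) lets one extend $\varphi$ to a bijection $A^{\omega}\to X^{\omega}$ and that the finite identity $\varphi\circ\psi=\psi_X\circ\varphi$ passes to the limit to give $\psi_X(x)=\varphi(\psi(\varphi^{-1}(x)))$, whence $\psi_X(X^{\omega})=\varphi(Epistand_A)$. Your write-up merely makes the limit step more explicit than the paper does, which is a harmless (indeed welcome) refinement rather than a different argument.
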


Let us now consider the case when $X$ is a finite and maximal prefix code.

\begin{lemma}\label{lemma:finitemax} If $X$ is a finite and maximal prefix code over $A$ such that
$X\subseteq PAL$, then $X=A$.
\end{lemma}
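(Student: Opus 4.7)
The plan is to show $A\subseteq X$, from which $X=A$ follows immediately because $A$ is itself a prefix code and $X$ is prefix (any word of $X$ of length $\geq 2$ would have a proper prefix in $A\subseteq X$, violating the prefix property).

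To show $A\subseteq X$, fix an arbitrary letter $a\in A$ and pick any letter $b\in A$ with $b\neq a$ (which exists since $d>1$). Consider the infinite word $ab^{\omega}\in A^{\omega}$. Since $X$ is a finite maximal prefix code it is \emph{complete}, i.e., every infinite word over $A$ has a prefix in $X$. Hence some prefix of $ab^{\omega}$, necessarily of the form $ab^{k}$ with $k\geq 0$, belongs to $X$.

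The key observation is that if $k\geq 1$, then $ab^{k}$ begins with $a$ and ends with $b\neq a$, so it is not a palindrome, contradicting $X\subseteq \PAL$. Therefore $k=0$, i.e., $a\in X$. As $a$ was arbitrary, $A\subseteq X$, and the argument of the first paragraph finishes the proof.

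The only nontrivial ingredient is the completeness of finite maximal prefix codes, which was already used in the proof of Proposition~5.1 (the full $d$-ary tree representation), so no obstacle is expected.
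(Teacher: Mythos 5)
Your proof is correct and rests on essentially the same two ingredients as the paper's: the completeness of a finite maximal prefix code (the full $d$-ary tree) and the observation that a palindrome in $X$ must have equal first and last letters. The paper packages this slightly differently --- it takes the $d$ sibling leaves $pa$, $a\in A$, at maximal depth $\ell_X$ and notes that $pa\in \PAL$ for every $a$ forces $p=\varepsilon$ --- but this is only a difference in bookkeeping, not in substance.
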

\begin{proof} Let $\ell_X$ be the maximal length of the words of $X$. Since $X$ is represented by a
full $d$-ary tree, there exist $d$ distinct words $pa\in X$, with $p$ a fixed word of $A^*$, $a\in A$,
and $|pa|= \ell_X$. As for any $a\in A$, the word $pa\in PAL$, the only possibility is $p=\varepsilon$,
so that $X=A$.
\end{proof}

\begin{prop} Let $X$ be a finite and maximal prefix code over $A$. Then $\psi_X$  is morphic-conservative if and only if $X=A$.
\end{prop}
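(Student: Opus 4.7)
The plan is to prove both implications by piecing together results already established in the paper, so no new combinatorial arguments are needed.

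For the $(\Leftarrow)$ direction, I would simply observe that if $X=A$, then $\psi_X$ coincides with the ordinary palindromization map $\psi$. Taking $\varphi$ to be the identity morphism on $A^*$, the defining identity $\varphi \circ \psi = \psi_X \circ \varphi$ holds trivially, so $\psi_X$ is morphic-conservative.

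For the $(\Rightarrow)$ direction, assume $\psi_X$ is morphic-conservative. By Proposition \ref{prop:bifixpal}, every word of $X$ must be a palindrome, i.e., $X \subseteq PAL$. Since by hypothesis $X$ is also a finite and maximal prefix code over $A$, Lemma \ref{lemma:finitemax} applies directly and forces $X = A$.

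There is really no hard step here: the entire content of the proposition is already encoded in Proposition \ref{prop:bifixpal} (which supplies $X\subseteq PAL$) and Lemma \ref{lemma:finitemax} (which says a finite maximal prefix code of palindromes must be the alphabet itself). The only thing to check carefully is that the notion of morphic-conservativity from the definition applies in the present setting, with the domain alphabet of $\varphi$ playing the role of an abstract copy of $X$ and the codomain alphabet being $A$; once this is noted, both implications reduce to one-line citations.
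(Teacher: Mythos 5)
Your proof is correct and follows essentially the same route as the paper: the paper cites Theorem \ref{thm:mcc} together with Lemma \ref{lemma:finitemax}, and since the forward direction of Theorem \ref{thm:mcc} is itself deduced from Proposition \ref{prop:bifixpal}, your direct appeal to that proposition (plus the trivial identity-morphism check for the converse) is only a cosmetic variation.
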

\begin{proof} The proof is an immediate consequence of Theorem \ref{thm:mcc} and Lemma \ref{lemma:finitemax}.
\end{proof}

In the case of a finite maximal prefix code the map $\psi_X$ can be non conservative. For instance,
if $X=\{a, ba, bb\}$, then $\psi_X(ba)= bab\not\in X^*$. The situation can be quite different if one refers to infinite words over $X$. Let us give the following definition.

Let $X$ be a code having a finite deciphering delay. We say that $\psi_X$ is \emph{weakly conservative}
if for any $t\in X^{\omega}$, one has $\psi_X(t)\in X^{\omega}$; in other terms the map $\psi_X: X^{\omega}\rightarrow A^{\omega}$ can be reduced to a map $\psi_X: X^{\omega}\rightarrow X^{\omega}$. In general, $\psi_X$ is not weakly conservative. For instance, if $X=\{ab, ba\}$ and 
$t\in ababX^{\omega}$, then $\psi_X(t)\not\in X^{\omega}$.

Trivially, if $\psi_X$ is conservative, then it is also weakly conservative. However, the converse is not in general true as shown by the following:

\begin{thm} If $X$ is a finite and maximal prefix code, then $\psi_X$ is weakly conservative.
\end{thm}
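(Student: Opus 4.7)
The plan is to construct an explicit $X$-factorization of $s=\psi_X(t)$, exploiting the same structural property of finite maximal prefix codes that was already used in the lower bound on $X$-AR complexity: since the set $\Pre X$ forms a full $d$-ary tree with $d=\card(A)$, no word of length at least $\ell_X=\max_{x\in X}|x|$ can be a proper prefix of an element of $X$. Combined with maximality of $X$, this forces every word of length $\geq\ell_X$ in $A^*$ to admit a (unique) prefix lying in $X$.

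Given $t=x_1x_2\cdots\in X^\omega$, I would set $s=\psi_X(t)$ and $u_n=\psi_X(x_1\cdots x_n)$. By Proposition~\ref{prop:pref}, the $u_n$ are strictly increasing prefixes of $s$ with $|u_n|\to\infty$. For each $n$ such that $|u_n|\geq\ell_X$, iteratively peeling off prefixes in $X$ from $u_n$ yields a unique decomposition
\[u_n=z_1^{(n)}\cdots z_{k_n}^{(n)}r_n,\qquad z_i^{(n)}\in X,\ |r_n|<\ell_X.\]
Since $X$ is a prefix code and the $u_n$ are nested prefixes of a common infinite word $s$, greedy parsing is consistent: $z_i^{(n)}=z_i^{(m)}$ whenever $m\geq n$ and both are defined. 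Letting $z_i$ denote this common value, and observing $k_n\to\infty$ (as each $|z_i^{(n)}|\leq\ell_X$), the partial products $z_1\cdots z_{k_n}$ are arbitrarily long prefixes of $s$, giving $s=z_1z_2z_3\cdots\in X^\omega$, as required.

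The main conceptual point is that weak conservativity, unlike conservativity, is essentially automatic in the finite-maximal-prefix setting: the failure of conservativity on finite words (witnessed by $\psi_X(ba)=bab\notin X^*$ for $X=\{a,ba,bb\}$) is absorbed in the limit, because every suffix of $\psi_X(t)$ of length $\geq\ell_X$ begins with an element of $X$ and thus the offending remainder $r_n$ is always completed within the next peeled factor. The only care needed is not to invoke the identity $A^\omega=X^\omega$ as a black box, but to witness the factorization concretely through the palindromic approximations $u_n$ of $s$ and the uniqueness of greedy prefix parsing.
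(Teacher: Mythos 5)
Your proposal is correct and follows essentially the same route as the paper: both arguments rest on the fact that a finite maximal prefix code is (right) complete with bounded completion length, so every sufficiently long prefix of $s$ admits a greedy $X$-parse leaving a remainder shorter than $\ell_X$, and the prefix property makes these parses consistent, so one passes to the limit. The only cosmetic difference is that the paper states completeness as $s_{[n]}a^{k_n}\in X^*$ with $0\leq k_n\leq\ell_X$ while you derive it from the full $d$-ary tree picture; these are equivalent formulations of the same key fact.
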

\begin{proof} Let $s= \psi_X(t)$ where $X$  a finite and maximal prefix code and $t\in X^{\omega}$. We recall \cite{codes} that
any maximal prefix code is right complete, i.e., for any $f\in A^*$, one has  $fA^*\cap X^* \neq \emptyset$. If $X$ is finite, then for any $f\in A^*$ and any letter $a\in A$ one has:
\[ fa^k \in X^*,\]
for a suitable integer $k$, depending on $f$ and on $a$, such that $0\leq k \leq \ell$, where $\ell=\ell_X$ is the maximal length of the words of $X$. Let $a$ be a fixed letter of $A$. We can write:
\[ s_{[n]}a^{k_n} \in X^*,\]
with $0\leq k_n\leq \ell$. Setting  $p = \lfloor \frac{n}{\ell}\rfloor$, we can write:
\[  s_{[n]}= x_1x_2\cdots x_{q_n}\lambda,\]
with $x_i\in X$, $i=1, \ldots, q_n$, $q_n\geq p$ and $|\lambda| <\ell$. Now $s_{[n]}\prec s_{[n+\ell]}$, so
that since $X$ is a prefix code, one has:
\[  s_{[n+\ell]}= x_1x_2\cdots x_{q_{n+\ell}}\lambda',\]
with $q_{n+\ell}> q_n$, $x_i\in X$, $i= q_{n+1}, \ldots, q_{n+\ell}$, and $|\lambda' |<\ell$.
Since \[\lim_{n\rightarrow\infty} x_1\cdots x_{q_n} \in X^{\omega}\] and  $\lim_{n\rightarrow\infty} x_1\cdots x_{q_n}=\lim_{n\rightarrow\infty} s_{[n]}$,  the
result follows. 
\end{proof}

\begin{cor} Let $s= \psi_X(t)$ with $t\in X^{\omega}$ be an  X-AR word. Then $s$ is the morphic image by an injective morphism of a word $w\in B^{\omega}$, where $B$ is an alphabet of the same cardinality as $X$.
\end{cor}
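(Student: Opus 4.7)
The plan is to obtain the desired morphism by pulling the word $s$ back through the bijection from a fresh alphabet $B$ onto $X$, after first observing that $s$ itself already lives in $X^{\omega}$. The key enabling fact is the preceding theorem, which guarantees that $\psi_X$ is weakly conservative whenever $X$ is a finite maximal prefix code; since by definition an $X$-AR word is built from such a code, we get for free that $s = \psi_X(t)\in X^{\omega}$.

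Concretely, I would proceed as follows. Let $B$ be an alphabet with $\card(B)=\card(X)$, fix any bijection $\mu\colon B\to X$, and extend it to a morphism $\mu\colon B^{*}\to A^{*}$ (and to $B^{\omega}\to A^{\omega}$ in the obvious way). Because $X$ is a code, $\mu$ is injective on $B^{*}$; because $X$ is in fact a prefix code (deciphering delay zero), every element of $X^{\omega}$ has a unique factorization over $X$, so $\mu\colon B^{\omega}\to X^{\omega}$ is a bijection. Applying this to $s\in X^{\omega}$ yields a unique $w\in B^{\omega}$ with $\mu(w)=s$, which is exactly the conclusion.

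There is no serious obstacle here: the entire content of the corollary is the weak conservativeness theorem just proved, plus the general fact that an injective morphism $\mu$ defined on the letters of $B$ and sending them bijectively onto a prefix code can be inverted on $X^{\omega}$. The only thing to double-check while writing is to invoke the previous theorem on weak conservativeness by name, and to remark explicitly that $X$ being a prefix code (not merely a code with finite deciphering delay) is what allows the unique decoding of $s$ into a word $w$ over $B$.
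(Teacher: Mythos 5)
Your proposal is correct and follows essentially the same route as the paper: invoke the weak conservativeness theorem to conclude $s\in X^{\omega}$, then decode $s$ through the injective morphism induced by a bijection between a fresh alphabet $B$ and $X$. The extra remark that the prefix property of $X$ guarantees unique factorization of $s$ is a sensible point to make explicit, though the paper leaves it implicit.
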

\begin{proof} By the preceding theorem, since $\psi_X$ is weakly conservative,  we can write:
\[s = x_1x_2\cdots x_n\cdots ,\]
with $x_i\in X$, $i\geq 1$. Let $B$ be an alphabet having the same cardinality of $X$ and $\varphi: B^*\rightarrow X^*$ be the injective morphism induced by an arbitrary bijection of $B$ and $X$. If $\varphi^{-1}$ is the inverse morphism of $\varphi$ one has:
\[\varphi^{-1}(s) = \varphi^{-1}(x_1)\varphi^{-1}(x_2)\cdots \varphi^{-1}(x_n)\cdots.\]
Setting  $\varphi^{-1}(x_i) = w_i \in B$ for all $i\geq 1$, 
one has $\varphi^{-1}(s) = w_1w_2\cdots w_n\cdots  = w\in B^{\omega}$ and $s= \varphi(w)$.
\end{proof}

Let us observe that in general the word $w\in B^{\omega}$ is not episturmian as shown by the following:

\begin{example} Let $X=\{a, ba, bb\}$ and $s= \psi_X((ababb)^{\omega})$. One has:
\[s= ababbabaababbabababbabaababbaba\cdots .\]
Let $B=\{0,1,2\}$ and $\varphi$ the morphism of $B^*$ in $X^*$ defined by the bijection $\varphi(0)= a$,
$\varphi(1)= ba$, and $\varphi(2)=bb$. One has:
\[w= \varphi^{-1}(s) =0120101201120101201\cdots,\]
and the word $w$ is not episturmian (indeed, for instance, the factor $01201$ is not rich in palindromes).
\end{example}

\section{The  pseudo-palindromization map}\label{sec:seven}

An \emph{involutory antimorphism} of $A^*$ is any antimorphism $\vt:A^*\to
A^*$ such that $\vt\circ\vt=\mathrm{id}$. The simplest example is the
\emph{reversal operator}   $R: A^*\longrightarrow A^*$ mapping each $w\in A^*$
to its reversal $ w^{\sim}$.
Any involutory antimorphism $\vt$ satisfies $\vt=\tau\circ R=R\circ\tau$ for
some morphism $\tau:A^*\to A^*$ extending an involution of $A$. Conversely, if
$\tau$ is such a morphism, then $\vt=\tau\circ R=R\circ\tau$ is an involutory
antimorphism of $A^*$.

Let $\vt$ be an involutory antimorphism of $A^*$. For any $w\in A^*$ we shall denote $\vartheta(w)$ simply by $\bar w$. We call
\emph{$\vt$-palindrome} any fixed point of $\vt$, i.e., any word $w$ such that
$w= \bar w$, and let $\PAL_\vt$ denote the set of all $\vt$-palindromes. We
observe that $\varepsilon\in\PAL_\vt$ by definition, and that $R$-palindromes
are exactly the usual palindromes.  If one makes no reference to the
antimorphism $\vt$, a $\vt$-palindrome is called a \emph{pseudo-palindrome}.

For any $w\in A^*$, $w^{\oplus_\vartheta}$, or simply $w^{\oplus}$,
denotes the shortest $\vartheta$-palindrome having $w$ as a prefix.
 If $Q$ is the
longest $\vartheta$-palindromic suffix of $w$ and $w=sQ$, then \[w^\oplus=sQ\bar
s . \]
\begin{example}
  \label{ex:oplus}
  Let $A=\{a,b,c\}$ and $\vartheta$ be defined as $\bar a=b$, $\bar c=c$. If
  $w=abacabc$, then $Q=cabc$ and $w^\oplus=abacabcbab$.
\end{example}

 We can define the \emph{$\vartheta$-palindromization map} 
$\psi_\vartheta : A^*\rightarrow  PAL_\vartheta$ by $\psi_\vartheta(\varepsilon)=\varepsilon$
and \[\psi_\vartheta(ua)=(\psi_\vartheta(u)a)^\oplus\] for $u\in A^*$ and  $a\in A$.

The following proposition extends to the case of $\vartheta$-palindromization map $\psi_{\vartheta}$ the properties of palindromization map $\psi$ of Proposition \ref{prop:basicp}
 (cf., for instance, \cite{adlADL}):
\begin{prop}\label{lemma:oplus} The  map $\psi_{\vartheta}$ over $A^*$  satisfies the following
properties: for $u,v\in A^*$
\begin{itemize}
\item[P1.]  If $u$ is  a prefix of  $v$, then $\psi_{\vartheta}(u)$ is a $\vartheta$-palindromic prefix (and suffix) of $\psi_{\vartheta}(v)$.
\item[P2.] If $p$ is a prefix of $\psi_{\vartheta}(v)$, then $p^{\oplus}$ is a prefix of $\psi_{\vartheta}(v)$.
\item[P3.] Every $\vartheta$-palindromic prefix of $\psi_{\vartheta}(v)$ is of the form $\psi_{\vartheta}(u)$ for some prefix $u$ of $v$.
\item[P4.] The map $\psi_{\vartheta}$  is  injective.
\end{itemize}
\end{prop}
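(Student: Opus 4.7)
The plan is to mirror the classical proof of Proposition~\ref{prop:basicp}, exploiting the fact that $w^\oplus$ is the unique shortest $\vartheta$-palindrome having $w$ as a prefix, determined (just as for $w^{(+)}$) by the longest $\vartheta$-palindromic suffix $Q$ of $w$ via $w^\oplus = sQ\bar s$ when $w = sQ$. Properties P1--P3 will be proved together by induction on appropriate lengths, and P4 will follow from P3.

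For P1, I would proceed by induction on $|v|-|u|$. The base $u=v$ is trivial. For the inductive step, write $v = v'c$ with $u \preceq v'$; by induction $\psi_\vartheta(u) \preceq \psi_\vartheta(v')$, and by definition $\psi_\vartheta(v) = (\psi_\vartheta(v')c)^\oplus$ has $\psi_\vartheta(v')$ as a prefix, whence $\psi_\vartheta(u) \preceq \psi_\vartheta(v)$. Since both $\psi_\vartheta(u)$ and $\psi_\vartheta(v)$ are $\vartheta$-palindromes, applying $\vartheta$ to this prefix relation turns it into the assertion that $\psi_\vartheta(u)$ is also a suffix of $\psi_\vartheta(v)$.

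For P2 and P3 I would induct jointly on $|v|$, the base $v=\varepsilon$ being immediate. Set $v'=va$. For P3, a $\vartheta$-palindromic prefix $P$ of $\psi_\vartheta(va)$ either satisfies $|P|\leq|\psi_\vartheta(v)|$, in which case $P$ is a $\vartheta$-palindromic prefix of $\psi_\vartheta(v)$ and the inductive hypothesis applies; or $|P|>|\psi_\vartheta(v)|$, in which case the letter at position $|\psi_\vartheta(v)|+1$ inside $\psi_\vartheta(va)$ is $a$, so $\psi_\vartheta(v)a\preceq P$, and by the minimality of $\psi_\vartheta(va)=(\psi_\vartheta(v)a)^\oplus$ among $\vartheta$-palindromes with prefix $\psi_\vartheta(v)a$, one gets $P=\psi_\vartheta(va)$. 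For P2, given $p\preceq\psi_\vartheta(va)$, the same dichotomy works: if $|p|\leq|\psi_\vartheta(v)|$, apply the induction hypothesis to $\psi_\vartheta(v)$; if $|p|>|\psi_\vartheta(v)|$, then $\psi_\vartheta(v)a\preceq p$, so $p^\oplus$ is the shortest $\vartheta$-palindrome with prefix $\psi_\vartheta(v)a$, forcing $p^\oplus=\psi_\vartheta(va)$.

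Finally, P4 follows from P3 together with the strict length inequality $|\psi_\vartheta(wc)|>|\psi_\vartheta(w)|$ for all $w\in A^*$ and $c\in A$. If $\psi_\vartheta(u)=\psi_\vartheta(v)$, monotonicity gives $|u|=|v|$; writing $u=u'c$ and $v=v'd$, by P3 both $\psi_\vartheta(u')$ and $\psi_\vartheta(v')$ are $\vartheta$-palindromic prefixes of $\psi_\vartheta(u)$, and each is the longest \emph{proper} one (since the next step in the construction already produces the full word), so $\psi_\vartheta(u')=\psi_\vartheta(v')$; by induction $u'=v'$, and reading off the letter of $\psi_\vartheta(u)=\psi_\vartheta(v)$ immediately following this common palindromic prefix yields $c=d$. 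The main delicate point is the subcase $|P|>|\psi_\vartheta(v)|$ in P3, where one must exclude intermediate-length $\vartheta$-palindromic prefixes strictly between $\psi_\vartheta(v)$ and $\psi_\vartheta(va)$; this is exactly what the minimality property of $\vartheta$-palindromic closure provides, once one observes that the letter at position $|\psi_\vartheta(v)|+1$ in any longer prefix of $\psi_\vartheta(va)$ must be $a$.
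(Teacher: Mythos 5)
Your proof is correct in substance. Note first that the paper does not actually prove this proposition: it is stated with a pointer to \cite{adlADL}, so there is no in-paper argument to compare against. What you have written is the standard proof for the reversal case transplanted to an arbitrary involutory antimorphism, and it goes through because the only facts used about $(\cdot)^{(+)}$ --- existence, uniqueness and minimality of the closure, and the formula $w^\oplus=sQ\bar s$ where $Q$ is the longest $\vartheta$-palindromic suffix --- all persist for $(\cdot)^\oplus$. Two small points deserve tightening. In the P2 subcase $|p|>|\psi_\vartheta(v)|$, the clause ``$p^\oplus$ is the shortest $\vartheta$-palindrome with prefix $\psi_\vartheta(v)a$'' compresses two inequalities that should both be made explicit: since $\psi_\vartheta(v)a\preceq p$, the word $p^\oplus$ is a $\vartheta$-palindrome with prefix $\psi_\vartheta(v)a$, so $|p^\oplus|\geq|(\psi_\vartheta(v)a)^\oplus|=|\psi_\vartheta(va)|$; conversely $\psi_\vartheta(va)$ is itself a $\vartheta$-palindrome with prefix $p$, so $|p^\oplus|\leq|\psi_\vartheta(va)|$; equality of the two words then follows from uniqueness of the minimal-length $\vartheta$-palindrome extending a given prefix (which holds because that minimal length never exceeds $2|p|$, as $p\bar p$ is always a $\vartheta$-palindrome, and a $\vartheta$-palindrome of length at most $2|p|$ is determined by its prefix $p$ and its length). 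In P4, the remark ``monotonicity gives $|u|=|v|$'' is not justified as stated --- a priori two non-comparable directive words of different lengths could have images of equal length --- but it is also never used: your actual argument (the longest proper $\vartheta$-palindromic prefix of $\psi_\vartheta(u'c)$ is $\psi_\vartheta(u')$ by P3 together with the strict growth of $|\psi_\vartheta|$ along the chain of prefixes, hence $\psi_\vartheta(u')=\psi_\vartheta(v')$, then induct on the length of the common image and read off the next letter) is self-contained and correct, so the remark can simply be deleted.
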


The map  $\psi_\vartheta$ can be extended to infinite words as follows: let
$x=x_1x_2\cdots x_n\cdots\in A^\omega$ with $x_i\in A$ for $i\geq 1$. Since  for all $n$, $\psi_\vartheta(x_{[n]})$ is a prefix of  $\psi_\vartheta(x_{[n+1]})$,  we  can define  the infinite word $\psi_\vartheta(x)$ as:
\[\psi_\vartheta(x)=\lim_{n\rightarrow\infty}\psi_\vartheta(x_{[n]})\;.\]
 The infinite word $x$ is called the \emph{directive word} of $\psi_\vartheta(x)$,
and $s=\psi_\vartheta(x)$ the \emph{$\vartheta$-standard word} directed by $x$. If one does not make reference to the antimorphism $\vartheta$ a $\vartheta$-standard word is also called \emph{pseudostandard word}.

The class of pseudostandard  words  was introduced in \cite{adlADL}. Some interesting results about such words are also  in \cite{BdDZ1, BdDZ2}. In particular,  we mention the noteworthy  result that any pseudostandard word can be obtained, by a suitable morphism, from a standard episturmian word. 

More precisely let $\mu_{\vartheta}$ be the endomorphism of $A^*$ defined for any letter $a\in A$ as:
$\mu_{\vartheta}(a) = a^{\oplus}$, so that $\mu_{\vartheta}(a) = a$ if $ a= \bar a$ and $\mu_{\vartheta}(a) = a\bar a$, if $a\neq \bar a$. We observe that $\mu_{\vartheta}$ is injective since $\mu_{\vartheta}(A)$ is a prefix code. The following theorem, proved in \cite{adlADL}, relates the maps $\psi_{\vartheta}$ and $\psi$ through the morphism $\mu_{\vartheta}$.
\begin{thm}\label{thm:psi1}  For any $w\in A^{\infty}$, one has
$ \psi_{\vartheta}(w) = \mu_{\vartheta}(\psi(w))$.
\end{thm}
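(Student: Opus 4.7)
I would prove the statement for finite $w \in A^*$ by induction on $|w|$, then derive the infinite case as a limit, using the fact that $\mu_\vartheta$ is a morphism and thus continuous on prefix-convergent sequences. The base case $w = \varepsilon$ is trivial. For the inductive step, assuming $\psi_\vartheta(u) = \mu_\vartheta(\psi(u))$ and given $c \in A$, I must show $\psi_\vartheta(uc) = \mu_\vartheta(\psi(uc))$. The key auxiliary observation, immediate from the definition of $\mu_\vartheta$, is that $\mu_\vartheta(a) \in \PAL_\vartheta$ for every letter $a$, whence $\mu_\vartheta(v^\sim) = \overline{\mu_\vartheta(v)}$ for all $v \in A^*$; in particular, $\mu_\vartheta$ sends palindromes to $\vartheta$-palindromes.

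Writing $\psi(u)c = sQ$ with $Q = LPS(\psi(u)c)$, so that $\psi(uc) = sQs^\sim$, I obtain
\[\mu_\vartheta(\psi(uc)) = \mu_\vartheta(s)\,\mu_\vartheta(Q)\,\overline{\mu_\vartheta(s)} \in \PAL_\vartheta\;.\]
Since $c$ is the first letter of $\mu_\vartheta(c)$, the word $\psi_\vartheta(u)c = \mu_\vartheta(\psi(u))c$ is a prefix of $\mu_\vartheta(sQ) = \mu_\vartheta(\psi(u)c)$, and hence of $\mu_\vartheta(\psi(uc))$. Minimality of the $\vartheta$-palindromic closure then gives the easy inclusion $\psi_\vartheta(uc) \preceq \mu_\vartheta(\psi(uc))$.

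The heart of the proof is the reverse inequality: $\mu_\vartheta(\psi(uc))$ must itself be the \emph{shortest} $\vartheta$-palindrome having $\psi_\vartheta(u)c$ as a prefix. Arguing by contradiction, any strictly shorter $\vartheta$-palindrome $V$ with $\psi_\vartheta(u)c \preceq V$ would, via the inductive hypothesis $\psi_\vartheta(u) = \mu_\vartheta(\psi(u))$ and property P3 of $\psi$ (palindromic prefixes of $\psi(u)$ arise as $\psi(u')$ for prefixes $u' \preceq u$), pull back to a palindromic suffix of $\psi(u)c$ strictly longer than $Q$, contradicting the maximality of $Q$. The main technical obstacle is carrying out this pullback rigorously: the block decomposition induced by $\mu_\vartheta$ on $\mu_\vartheta(\psi(u))c$ does not cleanly respect right suffixes when $c \neq \bar c$, since the appended letter $c$ then lies mid-block of the would-be $\mu_\vartheta$-image. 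A case split on $c = \bar c$ versus $c \neq \bar c$ is therefore required, together with the observation that $\mu_\vartheta(A)$ is a bifix code --- prefix by construction, and suffix because the last-letter map $a \mapsto \vartheta(a)$ is a bijection of $A$. Together these let one align $\vartheta$-palindromic suffixes of $V$ with $\mu_\vartheta$-blocks and invoke injectivity of $\mu_\vartheta$.

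Finally, for $w \in A^\omega$, write $w = \lim_n w_{[n]}$ and conclude $\psi_\vartheta(w) = \lim_n \psi_\vartheta(w_{[n]}) = \lim_n \mu_\vartheta(\psi(w_{[n]})) = \mu_\vartheta(\psi(w))$ using continuity of $\mu_\vartheta$ with respect to prefix limits.
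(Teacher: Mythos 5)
First, a point of reference: the paper does not actually prove Theorem~\ref{thm:psi1} --- it is quoted from \cite{adlADL} --- so there is no in-paper argument to compare yours against line by line. The closest internal model is the proof of Theorem~\ref{thm:psi2}, which establishes the companion identity $\psi_{\vartheta}=\psi_{\vartheta,X}\circ\mu_{\vartheta}$ by induction on $|w|$, with a case split on $a=\bar a$ and an appeal to properties P2 and P3 of Proposition~\ref{lemma:oplus}. Your overall skeleton (induction on $|w|$, then a prefix-limit for $A^{\omega}$) is the right one, and your easy direction is correct and complete: since each $\mu_{\vartheta}(a)=a^{\oplus}$ is a $\vartheta$-palindrome, $\mu_{\vartheta}(v^{\sim})=\overline{\mu_{\vartheta}(v)}$, so $\mu_{\vartheta}(\psi(uc))=\mu_{\vartheta}(s)\mu_{\vartheta}(Q)\overline{\mu_{\vartheta}(s)}$ is a $\vartheta$-palindrome admitting $\psi_{\vartheta}(u)c$ as a prefix, whence $\psi_{\vartheta}(uc)\preceq\mu_{\vartheta}(\psi(uc))$.

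The gap is that the reverse inequality --- the entire nontrivial content of the inductive step --- is only asserted. You correctly name the obstacle (the appended letter lies mid-block when $c\neq\bar c$) but do not resolve it, and the tools you list do not resolve it by themselves. Concretely: (i) when $c\neq\bar c$, the word $\mu_{\vartheta}(Q)$ ends in $\bar c$ and is therefore not even a suffix of $\psi_{\vartheta}(u)c=\mu_{\vartheta}(\psi(u))c$, so the object whose maximality you must establish is not $\mu_{\vartheta}(Q)$ but (for $Q=cQ_{1}c$) the word $\bar c\,\mu_{\vartheta}(Q_{1})\,c$, and a separate length count is needed to see that this candidate yields exactly $|\mu_{\vartheta}(\psi(uc))|$; the degenerate case $Q=c$ (no palindromic suffix of $\psi(u)$ preceded by $c$) needs its own treatment, where one must show the longest $\vartheta$-palindromic suffix of $\mu_{\vartheta}(\psi(u))c$ is empty. (ii) The ``alignment'' step is not a formal consequence of $\mu_{\vartheta}(A)$ being bifix: a $\vartheta$-palindromic suffix $R$ of $\mu_{\vartheta}(v)$ satisfies $R=\bar R\preceq\overline{\mu_{\vartheta}(v)}=\mu_{\vartheta}(v^{\sim})$, and prefix-ness of the code only gives $R\in\mu_{\vartheta}(A^{*})$ \emph{or} $R\in\mu_{\vartheta}(A^{*})a$ with $a\neq\bar a$; excluding the second, misaligned alternative (and only then transporting ``longest $\vartheta$-palindromic suffix'' back through the injective $\mu_{\vartheta}$ to ``longest palindromic suffix of $\psi(u)c$'') is precisely the lemma your proof needs and does not supply. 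Until that is done the induction does not close. A workable alternative, mirroring the paper's proof of Theorem~\ref{thm:psi2}, is to avoid transporting longest palindromic suffixes altogether: split on whether $\psi(u)$ has a palindromic suffix preceded by $c$, use P3 to write the longest such as $\psi(u_{1})$, apply the inductive hypothesis to $u_{1}$ as well, and compare the two closed forms $\psi(uc)=\psi(u)\psi(u_{1})^{-1}\psi(u)$ and $\psi_{\vartheta}(uc)=\psi_{\vartheta}(u)\psi_{\vartheta}(u_{1})^{-1}\psi_{\vartheta}(u)$ directly.
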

 An important consequence is that any $\vartheta$-standard word is a morphic image of an epistandard word.
 
A generalization of the pseudo-palindromization map, similar to that given in Section \ref{sec:three} for the palindromization map, is the following.
Let $\vartheta$ be an involutory antimorphism of $A^*$ and $X$ a  code over $A$. We define a map:
\[ \psi_{\vartheta, X}: X^* \rightarrow PAL_\vartheta,\]
inductively as:  $\psi_{\vartheta, X}(\varepsilon)= \varepsilon$ and for any  $w\in X^*$ and $x\in X$,
\[\psi_{\vartheta, X}(wx) = (\psi_{\vartheta, X}(w)x)^{\oplus}.\]
If $\vartheta=R$, then $ \psi_{R, X}= \psi_X$. If $X=A$ then  $\psi_{\vartheta, A} = \psi_{\vartheta}$. The map $\psi_{\vartheta, X}$ will be called the \emph{$\vartheta$-palindromization map relative to the code} $X$.
\begin{example} Let $A=\{a, b, c\}$ and $\vartheta$ be defined as $\bar a = b$ and $c = \bar c$. Let $X$ be the code $X= \{ab, ba, c\}$ and $w= ab c ba$. One has: $\psi_{\vartheta, X}(ab) = ab$, $\psi_{\vartheta, X}(abc) = abcab$ and $\psi_{\vartheta, X}(abcba) = abcabbaabcab$. 
\end{example}

 Let us now consider a  code $X$ having a finite deciphering delay. One can extend $\psi_{\vartheta, X}$ to
  $X^{\omega}$ as follows: let   $ x = x_1x_2 \cdots x_n \cdots ,$ with $x_i\in X,  i\geq 1$. For any
  $n\geq 1$, $\psi_{\vartheta,X}(x_1\cdots x_n)$ is a proper prefix of $\psi_{\vartheta, X}(x_1\cdots x_nx_{n+1})$ so that
  there exists
  			\[ \lim_{n\rightarrow \infty} \psi_{\vartheta, X}(x_1\cdots x_n) = \psi_{\vartheta,X}(x).\]
Let us observe that the word 	$\psi_{\vartheta, X} (x)$ has infinitely many $\vartheta$-palindromic prefixes. 
This implies that 	
$\psi_{\vartheta, X} (x)$ is \emph{closed under $\vartheta$}, i.e., if $w\in \Ff  \psi_{\vartheta,X}(x)$, then also $\bar w \in \Ff  \psi_{\vartheta,X}(x)$.

We remark that  the maps $\psi_{\vartheta,X}$ and their extensions  to $X^{\omega}$, when $X$ is a code with finite deciphering delay, are not in general injective. The following proposition, extending Propositions \ref{prop:prefcode} and \ref{prop:prefcode1}, can be proved in a similar way.  

\begin{prop} Let $X$ be a prefix code over $A$. Then the map $\psi_{\vartheta,X}: X^* \rightarrow PAL_\vartheta$ and its extension to $X^{\omega}$ are injective.
\end{prop}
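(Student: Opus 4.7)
The plan is to mirror the proof of Proposition~\ref{prop:prefcode} almost verbatim, using only one property of the $\vartheta$-closure that is already implicit in the excerpt: for any $w\in A^{*}$, the $\vartheta$-palindromic closure $w^{\oplus}$ has $w$ itself as a prefix (since $w^{\oplus}=sQ\bar s$ with $w=sQ$). In particular, for all $u\in X^{*}$ and $x\in X$,
\[
\psi_{\vartheta,X}(ux)=(\psi_{\vartheta,X}(u)x)^{\oplus}=\psi_{\vartheta,X}(u)\,x\,\xi
\]
for some $\xi\in A^{*}$. This is the only structural fact used in the original proof, so the whole argument transports.

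First I would handle the finite case. Suppose $\psi_{\vartheta,X}(x_{1}\cdots x_{m})=\psi_{\vartheta,X}(x'_{1}\cdots x'_{n})$ with $x_{i},x'_{j}\in X$ and, WLOG, $m\leq n$. By induction on $k$ I would show $x_{k+1}=x'_{k+1}$ whenever $x_{1}=x'_{1},\dots,x_{k}=x'_{k}$ for $0\leq k<m$: by the $\vartheta$-analogue of Proposition~\ref{prop:pref} (the proof of which is word-for-word the same, since each iteration appends a suffix on the right), $\psi_{\vartheta,X}(x_{1}\cdots x_{k}x_{k+1})$ and $\psi_{\vartheta,X}(x_{1}\cdots x_{k}x'_{k+1})$ are both prefixes of the common word, and each has the form $\psi_{\vartheta,X}(x_{1}\cdots x_{k})\,x_{k+1}\,\xi$ and $\psi_{\vartheta,X}(x_{1}\cdots x_{k})\,x'_{k+1}\,\xi'$ respectively. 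Cancelling the common prefix $\psi_{\vartheta,X}(x_{1}\cdots x_{k})$ on the left leaves two prefix-comparable words beginning with $x_{k+1}$ and $x'_{k+1}$; the prefix property of $X$ then forces $x_{k+1}=x'_{k+1}$. The base $k=0$ is identical, with $\psi_{\vartheta,X}(\varepsilon)=\varepsilon$. Once $x_{i}=x'_{i}$ for $i=1,\dots,m$, equality of lengths $m=n$ follows again from the strict-growth statement in the $\vartheta$-analogue of Proposition~\ref{prop:pref}, which rules out $m<n$.

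For the extension to $X^{\omega}$, let $t=x_{1}x_{2}\cdots$ and $t'=x'_{1}x'_{2}\cdots$ with $\psi_{\vartheta,X}(t)=\psi_{\vartheta,X}(t')$. I would again proceed by induction: assuming $x_{i}=x'_{i}$ for $i\leq k$, both $\psi_{\vartheta,X}(x_{1}\cdots x_{k}x_{k+1})$ and $\psi_{\vartheta,X}(x_{1}\cdots x_{k}x'_{k+1})$ are prefixes of the common infinite word, hence one is a prefix of the other; writing them as above shows that $x_{k+1}$ and $x'_{k+1}$ are prefix-comparable elements of $X$, so they coincide. It follows that $t=t'$.

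The only delicate point, and hence the main obstacle, is to make sure that the $\vartheta$-analogue of Proposition~\ref{prop:pref} really holds, i.e.\ that for every non-empty $v\in X^{*}$ one has $\psi_{\vartheta,X}(u)\prec\psi_{\vartheta,X}(uv)$ and, when $X$ is prefix, that $u\preceq v$ in $X^{*}$ implies $\psi_{\vartheta,X}(u)\preceq\psi_{\vartheta,X}(v)$. Both statements follow from the identity $(\psi_{\vartheta,X}(u)x)^{\oplus}=\psi_{\vartheta,X}(u)x\xi$ together with the right-unitarity of $X^{*}$, exactly as in the proof of Proposition~\ref{prop:pref}; once this is in place, the rest of the argument is formal.
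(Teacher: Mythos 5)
Your proposal is correct and is essentially the argument the paper intends: the paper itself only remarks that this proposition ``can be proved in a similar way'' to Propositions~\ref{prop:prefcode} and~\ref{prop:prefcode1}, and you carry out exactly that transcription, correctly isolating the one fact needed (that $w^{\oplus}$ has $w$ as a prefix, so $\psi_{\vartheta,X}(ux)=\psi_{\vartheta,X}(u)x\xi$) and combining it with the prefix-code cancellation as in the original proofs.
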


Several concepts, such as conservative and morphic-conservative maps, and results considered in the previous sections for the map $\psi_X$ can be naturally extended to the case of the map $\psi_{\vartheta, X}$. We limit ourselves only to proving the following interesting theorem relating the maps  $\psi_{\vartheta}$ and $\psi_{\vartheta, X}$ where $X=\mu_{\vartheta}(A)$. Combining this result with
Theorem \ref{thm:psi1}  one will obtain that $\psi_{\vartheta, X}$ is morphic-conservative.

\begin{thm}\label{thm:psi2} Let $A$ be an alphabet, $\vartheta$ an involutory antimorphism, and $X= \mu_{\vartheta}(A)$. Then for any $w\in A^{\infty}$ one has:
\[ \psi_{\vartheta}(w) = \psi_{\vartheta, X}(\mu_{\vartheta}(w)).\]
\end{thm}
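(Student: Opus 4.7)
The plan is to prove the theorem for finite $w\in A^*$ by induction on $|w|$, and then extend to $w\in A^\omega$ by a prefix-limit argument. The base case $w=\varepsilon$ is immediate. For the inductive step $w=va$ with $a\in A$, the fact that $\mu_\vartheta(a)\in X$ together with the defining recursion of $\psi_{\vartheta,X}$ and the inductive hypothesis gives
\[\psi_{\vartheta,X}(\mu_\vartheta(va))=(\psi_{\vartheta,X}(\mu_\vartheta(v))\,\mu_\vartheta(a))^{\oplus}=(\psi_\vartheta(v)\,\mu_\vartheta(a))^{\oplus},\]
whereas $\psi_\vartheta(va)=(\psi_\vartheta(v)\,a)^{\oplus}$. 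If $a=\bar a$ the two expressions coincide since $\mu_\vartheta(a)=a$; writing $W=\psi_\vartheta(v)$ and assuming $a\neq\bar a$, the inductive step reduces to the single identity
\[(Wa)^{\oplus}=(Wa\bar a)^{\oplus}.\]

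I would establish this identity by invoking Theorem~\ref{thm:psi1}, which turns $(Wa)^{\oplus}=\psi_\vartheta(va)$ into $\mu_\vartheta(\psi(va))$. Letting $R=LPS(\psi(v)\,a)$ and writing $\psi(v)\,a=tR$, one has $\psi(va)=tRt^{\sim}$, so
\[(Wa)^{\oplus}=\mu_\vartheta(t)\,\mu_\vartheta(R)\,\mu_\vartheta(t^{\sim}).\]
The analogue of Lemma~\ref{lemma:fipal} with $\vartheta$-palindromes in place of palindromes (which holds verbatim, since $\mu_\vartheta(A)\subseteq \PAL_\vartheta$) yields $\mu_\vartheta(t^{\sim})=\overline{\mu_\vartheta(t)}$, while $\mu_\vartheta(t)\mu_\vartheta(R)=\mu_\vartheta(\psi(v)\,a)=W\mu_\vartheta(a)=Wa\bar a$. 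Combining,
\[(Wa)^{\oplus}=Wa\bar a\cdot\overline{\mu_\vartheta(t)},\]
so $Wa\bar a$ is a prefix of $(Wa)^{\oplus}$. Since $(Wa\bar a)^{\oplus}$ trivially has $Wa$ as prefix, the minimality defining $\oplus$ forces $|(Wa)^{\oplus}|=|(Wa\bar a)^{\oplus}|$; being the unique shortest $\vartheta$-palindromes with prefix $Wa\bar a$, they must in fact coincide.

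The infinite case $w\in A^\omega$ follows at once by passing to the prefix-limit on each side. The main obstacle I anticipate is precisely the identity $(Wa)^{\oplus}=(Wa\bar a)^{\oplus}$, which does \emph{not} hold for a generic $W\in \PAL_\vartheta$ (one may check this with $W=aa\bar a\bar a$ upon appending $a$); the specific shape $W=\psi_\vartheta(v)$ is therefore essential, and Theorem~\ref{thm:psi1} is the device that makes this structure available by recasting $(Wa)^{\oplus}$ through the classical palindromization map $\psi$.
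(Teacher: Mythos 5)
Your proof is correct, and while it shares the paper's overall skeleton---induction on $|w|$, immediate disposal of the case $a=\bar a$, and reduction of the case $a\neq\bar a$ to the identity $(Wa)^{\oplus}=(Wa\bar a)^{\oplus}$ for $W=\psi_{\vartheta}(v)$, which you correctly further reduce, via minimality and uniqueness of the $\vartheta$-palindromic closure, to showing that $Wa\bar a\preceq (Wa)^{\oplus}$---the way you establish this last, crucial prefix relation is genuinely different. The paper does it without invoking Theorem~\ref{thm:psi1}: it distinguishes whether $\psi_{\vartheta}(v)$ has a $\vartheta$-palindromic suffix preceded by $\bar a$, and in the nontrivial case uses property P3 of Proposition~\ref{lemma:oplus} to write the longest such suffix as $\psi_{\vartheta}(v')$ for a prefix $v'$ of $v$, then applies the inductive hypothesis to $v'$ and $v'a$ (so the induction is used in a strong form, on shorter directive words rather than only on the immediate predecessor) to compute the closure explicitly. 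You instead import Theorem~\ref{thm:psi1} wholesale, expand $(Wa)^{\oplus}=\mu_{\vartheta}(\psi(va))=\mu_{\vartheta}(t)\mu_{\vartheta}(R)\mu_{\vartheta}(t^{\sim})$, and read off the prefix from $\mu_{\vartheta}(tR)=Wa\bar a$ together with $\mu_{\vartheta}(t^{\sim})=\overline{\mu_{\vartheta}(t)}$; this last identity, the $\vartheta$-analogue of Lemma~\ref{lemma:fipal}, does hold exactly as you claim because $\mu_{\vartheta}(A)\subseteq \PAL_{\vartheta}$. Your route is shorter and eliminates the case analysis entirely, at the cost of making Theorem~\ref{thm:psi2} a consequence of the external Theorem~\ref{thm:psi1} rather than an independent statement; since Theorem~\ref{thm:psi1} is quoted from the literature and not derived from Theorem~\ref{thm:psi2}, there is no circularity, and the Corollary $\psi_{\vartheta}=\mu_{\vartheta}\circ\psi=\psi_{\vartheta,X}\circ\mu_{\vartheta}$ still follows either way. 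Your closing observation that $(Wa)^{\oplus}=(Wa\bar a)^{\oplus}$ fails for a generic $W\in\PAL_{\vartheta}$ is accurate and correctly pinpoints why the specific shape $W=\psi_{\vartheta}(v)$ must enter the argument.
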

\begin{proof} It is sufficient to prove that the above formula is satisfied for any $w\in A^*$. The proof is obtained by making induction on the length of $w$. 

Let us first prove the base of the induction. The result is trivially true if $w=\varepsilon$. Let  $w=a\in A$. If $a=\bar a$, then $a\in X$ and $\psi_{\vartheta}(a)= a =\psi_{\vartheta, X}(\mu_{\vartheta}(a))=\psi_{\vartheta, X}(a)$. If $a\neq \bar a$, one has $\mu_{\vartheta}(a) = a\bar a \in X$ and
$\psi_{\vartheta}(a) = a\bar a= \psi_{\vartheta, X}(\mu_{\vartheta}(a))=\psi_{\vartheta, X}(a\bar a)$.

Let us now prove the induction step.  For $w\in A^*$ and $a\in A$ we can write, by using the induction hypothesis,
\begin{equation}\label{eq:apal}
 \psi_{\vartheta}(wa)= (\psi_{\vartheta}(w)a)^{\oplus}= (\psi_{\vartheta,X}(\mu_{\vartheta}(w))a)^{\oplus}.
\end{equation}
Moreover, one has:
\begin{equation}\label{eq:apal1}
 \psi_{\vartheta,X}(\mu_{\vartheta}(wa)) =  \psi_{\vartheta,X}(\mu_{\vartheta}(w)a^{\oplus}) =
(\psi_{\vartheta,X}(\mu_{\vartheta}(w))a^{\oplus})^{\oplus}=(\psi_{\vartheta}(w)a^{\oplus})^{\oplus}.
\end{equation}
We have to consider two cases. If $a=\bar a$, then $a^{\oplus}= a$, so that from the preceding formulas (\ref{eq:apal}) and (\ref{eq:apal1}) we obtain the result. 

Let us then consider the case $a \neq \bar a$. We shall prove that  $(\psi_{\vartheta}(w)a)^{\oplus}=\psi_{\vartheta}(wa)$ has the prefix  $p= \psi_{\vartheta}(w)a \bar a$, so that from property P2  of Proposition \ref{lemma:oplus} one will have $p^{\oplus}\preceq \psi_{\vartheta}(wa)$. Since $ \psi_{\vartheta} (w)a \preceq p$, one will derive that  $|\psi_{\vartheta}(wa)|=|(\psi_{\vartheta} (w)a)^{\oplus}| \leq |p^{\oplus}|$ so that $p^{\oplus}= (\psi_{\vartheta}(w)a)^{\oplus}$ from which the result will follow.
We have to consider two cases:

\vspace{2 mm}

\noindent
Case 1. $\psi_{\vartheta}(w)$ has not a $\vartheta$-palindromic suffix preceded by the letter $\bar a$. Thus
\[(\psi_{\vartheta}(w)a)^{\oplus}= \psi_{\vartheta}(w)a \bar a \psi_{\vartheta}(w),\]
so that in this case we are done.

\vspace{2 mm}

\noindent
Case 2.  $\psi_{\vartheta}(w)$ has  a $\vartheta$-palindromic suffix $u$ of maximal length preceded by the letter $\bar a$. Since $u$ is also  a $\vartheta$-palindromic prefix of $\psi_{\vartheta}(w)$, by property P3 of Proposition  \ref{lemma:oplus} there exists $v$ prefix of $w$ such that $u= \psi_{\vartheta}(v)$. Since
$\bar a u$ is a suffix of  $\psi_{\vartheta}(w)$  one has that  $ua= \psi_{\vartheta}(v)a$ is a prefix of  $\psi_{\vartheta}(w)$. By property P2 of Proposition  \ref{lemma:oplus}, $( \psi_{\vartheta}(v)a)^{\oplus}$ is a prefix of $\psi_{\vartheta}(w)$.

Since $|v|<|w|$ one has $|va|\leq |w|$. By using two times the inductive hypothesis one has:
\[ (\psi_{\vartheta}(v)a)^{\oplus}= \psi_{\vartheta}(va)= \psi_{\vartheta,X}(\mu_{\vartheta}(v)a\bar a)= (\psi_{\vartheta,X}(\mu_{\vartheta}(v))a\bar a)^{\oplus}= (\psi_{\vartheta}(v)a\bar a)^{\oplus}.\]
Hence,  $\psi_{\vartheta}(w)$ has the prefix $ua\bar a$ and the suffix $a\bar a u$, so that
$\psi_{\vartheta}(w)= \lambda a \bar a u$ with $\lambda \in A^*$ and
\[ (\psi_{\vartheta}(w)a)^{\oplus}= \lambda a\bar a u a \bar a \bar {\lambda} = \psi_{\vartheta}(w)a\bar a \bar {\lambda},\]
from which the result follows.
\end{proof}
From Theorems \ref{thm:psi1} and \ref{thm:psi2}  one derives the noteworthy:
\begin{cor}Let $A$ be an alphabet, $\vartheta$ an involutory antimorphism, and $X= \mu_{\vartheta}(A)$.  Then one has:
\[ \psi_{\vartheta} = \mu_{\vartheta}\circ \psi = \psi_{\vartheta, X}\circ \mu_{\vartheta}.\]
\end{cor}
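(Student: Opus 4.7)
The plan is short: the corollary is the direct conjunction of the two previously established theorems, so the work is essentially bookkeeping. I would simply observe that Theorem \ref{thm:psi1} already delivers one of the two equalities, namely $\psi_{\vartheta}(w) = \mu_{\vartheta}(\psi(w))$ for every $w\in A^{\infty}$, which reads as the identity of maps $\psi_{\vartheta} = \mu_{\vartheta}\circ \psi$. Likewise, Theorem \ref{thm:psi2}, applied with $X = \mu_{\vartheta}(A)$, yields $\psi_{\vartheta}(w) = \psi_{\vartheta, X}(\mu_{\vartheta}(w))$ for every $w\in A^{\infty}$, i.e.\ $\psi_{\vartheta} = \psi_{\vartheta, X}\circ \mu_{\vartheta}$.

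Concatenating these two identities gives the chain $\psi_{\vartheta} = \mu_{\vartheta}\circ \psi = \psi_{\vartheta, X}\circ \mu_{\vartheta}$ claimed in the statement. Both sides of the chain are well-defined as maps $A^{\infty}\to \PAL_{\vartheta}\cup A^{\omega}$, since $\mu_{\vartheta}$ is an injective morphism that extends to $A^{\omega}$, $X = \mu_{\vartheta}(A)$ is a prefix (hence finite-delay) code so that $\psi_{\vartheta,X}$ is defined on $X^*$ and extends to $X^{\omega}$, and $\mu_{\vartheta}(A^{\infty})\subseteq X^{\infty}$ by construction. No further argument is needed; there is no real obstacle, the content of the statement lies entirely in the two preceding theorems. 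I would therefore present this as a one-line corollary proof invoking Theorems \ref{thm:psi1} and \ref{thm:psi2}.
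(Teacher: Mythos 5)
Your proof is correct and matches the paper exactly: the paper derives this corollary directly from Theorems \ref{thm:psi1} and \ref{thm:psi2} with no further argument, just as you do. Your additional remark on well-definedness of the maps on $A^{\infty}$ is harmless extra care but not needed.
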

\begin{example} Let $A=\{a, b\}$, $\vartheta$ be  defined as $\bar a = b$,  and $X=\vartheta(A) = \{ab, ba\}$. Let $w=aab$.  One has $\psi(aab) = aabaa$, $\psi_{\vartheta}(aab)= ababbaabab= \mu_{\vartheta}(aabaa)$. Moreover, $\mu_{\vartheta}(aab)= ababba$ and $\psi_{\vartheta,X}(ababba)=
ababbaabab$.
\end{example}

\small
\bibliographystyle{model1-num-names}


\end{document}